\newcommand{\realR}{\mathbb{R}}
\newcommand{\compC}{\mathbb{C}}
\newcommand{\intZ}{\mathbb{Z}}
\newcommand{\Prob}{\mathbb{P}}
\newcommand{\Proj}{\mathbf{P}}
\newcommand{\Andreief}{Andr\'{e}if}
\newcommand{\K}{\mathbf{K}}
\newcommand{\M}{\mathbf{M}}
\newcommand{\R}{\mathbf{R}}
\newcommand{\x}{\mathbf{x}}
\newcommand{\Z}{\mathbb{Z}}
\newcommand{\A}{\mathbb{A}}
\newcommand{\E}{\mathbb{E}}
\newcommand{\MM}{\mathbf{MM}}
\newcommand{\bigO}{\mathcal{O}}
\newcommand{\dettwo}{\det\nolimits_{2}}
\newcommand{\qbinom}[3][q]{{\genfrac{[}{]}{0pt}{}{#2}{#3}}_{#1}}
\newcommand{\ep}{\epsilon}
\DeclareMathOperator{\Ai}{Ai}
\DeclareMathOperator{\Li}{Li}
\DeclareMathOperator{\Airy}{Airy}
\DeclareMathOperator{\sub}{sub}
\DeclareMathOperator{\inter}{mid}
\DeclareMathOperator{\res}{res}
\DeclareMathOperator{\arccosh}{arccosh}
\DeclareMathOperator{\crossover}{cross}
\DeclareMathOperator{\GUE}{GUE}
\DeclareMathOperator{\sgn}{sgn}
\DeclareMathOperator{\Tr}{Tr}
\DeclareMathOperator{\U}{U}
\DeclareMathOperator{\interpolating}{inter}
\newtheorem{prop}{Proposition}
\newtheorem{thm}{Theorem}
\newtheorem{lem}{Lemma}
\theoremstyle{remark}
\newtheorem{rmk}{Remark}
\title{Asymptotics of free fermions in a quadratic well at finite temperature and the Moshe--Neuberger--Shapiro random matrix model}
\author{Karl Liechty \thanks{Department of Mathematical Sciences, DePaul University, Chicago, IL, 60614 USA \href{mailto:kliechty@depaul.edu}{\nolinkurl{kliechty@depaul.edu}}. Supported by a Simons Foundation Collaboration Grant \#357872.} \and Dong Wang \thanks{Department of Mathematics, National University of Singapore, Singapore, 119076, \href{mailto:matwd@nus.edu.sg}{\nolinkurl{matwd@nus.edu.sg}}. Supported by the Singapore AcRF Tier 1 grant R-146-000-217-112.}}
\begin{document}

\maketitle

\begin{abstract}
  We derive the local statistics of the canonical ensemble of free fermions in a quadratic potential well at finite temperature, as the particle number approaches infinity. This free fermion model is equivalent to a random matrix model proposed by Moshe, Neuberger and Shapiro. Limiting behaviors obtained before for the grand canonical ensemble are observed in the canonical ensemble: We have at the edge the phase transition from the Tracy--Widom distribution to the Gumbel distribution via the Kardar--Parisi--Zhang (KPZ) crossover distribution, and in the bulk the phase transition from the sine point process to the Poisson point process. A similarity between this model and a class of models in the KPZ universality class is explained. We also derive the multi-time correlation functions and the multi-time gap probability formulas for the free fermions along the imaginary time.
\end{abstract}

\section{Introduction}

In this paper we consider the spinless free fermions on $\realR^1$ in quadratic potential well (aka harmonic oscillators) at finite temperature. This model was defined by Moshe, Neuberger and Shapiro \cite{Moshe-Neuberger-Shapiro94} in the 1990's, further studied by Johansson \cite{Johansson07} in the 2000's, and very recently considered in the physics literature by Dean, Le Doussal, Majumdar, Schehr et al \cite{Dean-Le_Doussal-Majumdar-Schehr15}, \cite{Dean-Le_Doussal-Majumdar-Schehr16}, \cite{Le_Doussal-Majumdar-Rosso-Schehr16}. See also \cite{Le_Doussal-Majumdar-Schehr17} for a dynamical version of the model, and \cite{Cunden-Mezzadri-OConnell17} for a generalization to other symmetry types.

The most interesting question on this model (later called the MNS model) is the limiting behavior of the fermions at the edge or in the bulk as the number of particles $n \to \infty$. From the physical point of view, the existing result is already rather complete. When the temperature is low enough, the limiting distribution of the rightmost particle is given by the celebrated Tracy--Widom distribution, and when the temperature is high enough, the limiting distribution is given by the Gumbel distribution. At the critical temperature, the limiting distribution is found to be the crossover distribution in the $1$-dimensional Kardar--Parisi--Zhang (KPZ) universality class. For particles in the bulk, analogous results are obtained which interpolate between the sine point process and the Poisson point process.

The original version proposed by Moshe, Neuberger and Shapiro is the \emph{canonical ensemble} of the model, but all the asymptotic results available currently in the mathematical literature are for the \emph{grand canonical ensemble} of the model.  It is a universally accepted wisdom in statistical physics that the physical properties of the grand canonical ensemble are the same as those of the canonical ensemble as the particle number approaches infinity. In the case of the MNS model, the grand canonical ensemble has a special mathematical feature that it is a \emph{determinantal point process}, which makes it easier to analyze mathematically than the canonical ensemble. Currently all results on the MNS model in the mathematics literature deal with the grand canonical ensemble, although several recent works in the physical literature \cite{Dean-Le_Doussal-Majumdar-Schehr15}, \cite{Dean-Le_Doussal-Majumdar-Schehr16}, \cite{Le_Doussal-Majumdar-Rosso-Schehr16} have considered the canonical ensemble.  The goal of this paper is to analyze the canonical ensemble of the MNS model directly, and rigorously prove that the limiting results obtained for the grand canonical ensemble hold for the canonical ensemble as well.

Our purpose is not rigor for rigor's sake. As suggested by the title, the canonical ensemble of the MNS model is associated to a random matrix model (later referred as the MNS random matrix model) whose dimension is equal to the number of particles in the MNS model. Such a relation is not preserved when we move to the grand canonical ensemble. Also in the course of our derivation, we find that the algebraic as well as the analytic properties of the canonical ensemble of the MNS model are analogous to those of the Asymmetric Simple Exclusion Process (ASEP) and the $q$-Whittaker processes, which are a subclass of the extensively studied Macdonald processes \cite{Borodin-Corwin13}. The $q$-Whittaker processes contain many interacting particle models in the KPZ universality class as specializations. Although the ASEP and the $q$-Whittaker processes are in some sense integrable, they are considerably more difficult than determinantal processes. The similarity between probability models in KPZ universality class and free fermions at positive temperature has been noticed in \cite{Imamura-Sasamoto15}, but the relation is via determinantal process. We hope that our analysis of the canonical ensemble of the MNS model sheds light on the study of the integrable particle models in the KPZ universality class.

\subsection{$q$-analogue Notation}

Throughout this paper, we use the following $q$-analogue notations, which converge to their common counterparts as $q \to 1_-$. 

The $q$-Pochhammer symbol is
\begin{equation}
  (a; q)_n = \prod^{n - 1}_{k = 0} (1 - aq^k), \quad n = 0, 1, 2, \dotsc, \infty.
\end{equation}
The $q$-binomial is
\begin{equation}
  \qbinom{n}{m} = \frac{(1 - q^n)(1 - q^{n - 1}) \dotsb (1 - q^{n - m + 1})}{(1 - q^m)(1 - q^{m - 1}) \dotsb (1 - q)}, \quad 0 \leq m \leq n.
\end{equation}

\subsection{Definition of the MNS model}

First recall the one-dimension harmonic oscillator in quantum mechanics. The time-independent Hamiltonian of the free particle in a quadratic potential well is, on the position space,
\begin{equation}
  H = -\frac{\hbar^2}{2m} \frac{\partial^2}{\partial x^2} + \frac{m \omega^2}{2} x^2.
\end{equation}
In this paper, we assume $\hbar = 1$, $m = 1/2$, and $\omega = 1$, and then
\begin{equation} \label{eq:Hamiltonian_normalized}
  H = -\frac{\partial^2}{\partial x^2} + \frac{x^2}{4}.
\end{equation}
The eigenfunctions of the Hamiltonian $H$ defined in \eqref{eq:Hamiltonian_normalized} are
\begin{equation}
  \varphi_k(x) = \left( \frac{1}{\sqrt{2\pi} k!} \right)^{1/2} H_k(x) e^{-x^2/4}, \quad k = 0, 1, 2, \dotsc,
\end{equation}
where $H_k(x)$ is the Hermite polynomial, defined to be the monic polynomial of degree $k$ satisfying the orthogonality
\begin{equation}
\int^{\infty}_{-\infty} H_k(x) H_j(x) e^{-x^2/2} dx = \sqrt{2\pi} k! \delta_{kj}.
\end{equation}
The functions $\{\varphi_k(x)\}_{k=0}^\infty$ form an orthonormal basis for $L^2(\realR)$.
See \cite[Chapter 22]{Abramowitz-Stegun64} for basic properties of Hermite polynomials. Note that in \cite{Abramowitz-Stegun64}, polynomial $H_n(x)$ is denoted as $He_n(x)$, while the notation $H_n(x)$ is reserved for a slightly different polynomial, see \cite[22.5.18]{Abramowitz-Stegun64}. The eigenvalue/energy level for eigenstate $\varphi_k(x)$ is $k + 1/2$, ($k = 0, 1, 2, \dotsc$,) since
\begin{equation}
  H \varphi_k(x) = \left( -\frac{d^2}{dx^2} + \frac{x^2}{4} \right) \varphi_k(x) = \left( k + \frac{1}{2} \right) \varphi_k(x).
\end{equation}

Suppose $n$ identical fermions are independent harmonic oscillators, or in other words they are free fermions in the quadratic potential well. The fermionic system has eigenstates indexed by $(k_1, k_2, \dotsc, k_n)$ where $0 \leq k_1 < k_2 < \dotsb < k_n$ are integers, and the energy level of the eigenstate is $k_1 + k_2 + \dotsb + k_n + n/2$, The corresponding eigenfunction is given by the Slater determinant
\begin{equation} \label{eq:defn_Phi_k}
  \Phi_{k_1, \dotsc, k_n}(x_1, \dotsc, x_n) = \frac{1}{\sqrt{n!}}
  \begin{vmatrix}
    \varphi_{k_1}(x_1) & \dots & \varphi_{k_1}(x_n) \\
    \vdots & & \vdots \\
    \varphi_{k_n}(x_1) & \dots & \varphi_{k_n}(x_n) \\
  \end{vmatrix}.
\end{equation}
In this eigenstate, the density function for the $n$ particles is $\lvert \Phi_{k_1, \dotsc, k_n}(x_1, \dotsc, x_n) \rvert^2$. 

For a quantum system at temperature $T$, all eigenstates occur at a certain chance according to the \emph{Boltzmann distribution}, so that the probability for an eigenstate with energy level $E$ to occur is $Z^{-1} e^{-E/(\kappa T)}$ where $Z$ is the normalization constant and $\kappa$ is the Boltzmann constant \cite[Section 6.2]{Beale-Pathria11}, which we assume to be $1$ later. Hence for the $n$-particle canonical ensemble of the MNS model, that is, $n$ free fermions in the quadratic potential well, if the temperature is $T > 0$, and if we denote
\begin{equation}
  q = e^{-1/(\kappa T)} = e^{-1/T},
\end{equation}
the probability for eigenstate $(k_1, k_2, \dotsc, k_n)$ to occur is $Z_n(q)^{-1} q^{k_1 + \dotsb + k_n + n/2}$, where
\begin{equation} \label{eq:explicit_Z_n(q)}
  Z_n(q) = \sum_{0 \leq k_1 < k_2 < \dotsb < k_n} q^{k_1 + \dotsb + k_n + n/2} = \frac{q^{n^2/2}}{(q; q)_n}.
\end{equation}
We then have that the density function for the $n$ particles is
\begin{equation} \label{eq:density_n_particle}
  \begin{split}
    P_n(x_1, \dotsc, x_n) = {}& \frac{1}{Z_n(q)} \sum_{0 \leq k_1 < k_2 < \dotsb < k_n} \lvert \Phi_{k_1, \dotsc, k_n}(x_1, \dotsc, x_n) \rvert^2 q^{k_1 + \dotsb + k_n + n/2} \\
    = {}& \frac{q^{n/2}}{Z_n(q)} \sum_{0 \leq k_1 < k_2 < \dotsb < k_n} \lvert \Phi_{k_1, \dotsc, k_n}(x_1, \dotsc, x_n) \rvert^2 q^{k_1 + \dotsb + k_n}.
  \end{split}
\end{equation}
The equivalence of the two expressions in \eqref{eq:explicit_Z_n(q)} may not be obvious, but it is easily proven by induction on $n$.

The $n$-particle canonical ensemble of the MNS model at temperature $T = -(\log q)^{-1} > 0$, which is called simply the MNS model if there is no possibility of confusion, is the main topic of this paper. Although it is defined in the language of quantum mechanics, all our analysis is based on the density function \eqref{eq:density_n_particle}, so it is harmless to understand the MNS model as a particle model with density \eqref{eq:density_n_particle}. We note that in the limit $T \to 0$, the density function $P_n(x_1, \dotsc, x_n)$ degenerates into $\lvert \Phi_{0, 1, \dotsc, n - 1}(x_1, \dotsc, x_n) \rvert^2$, the density function for the ground state of the quantum system. One readily  recognizes that this $T \to 0$ limiting density is the density of eigenvalues of a random matrix in the Gaussian Unitary Ensemble (GUE) \cite[Section 2.5]{Anderson-Guionnet-Zeitouni10}, that is, the random Hermitian matrix model defined below in \eqref{eq:GUE_density}. It is then not a surprise that for general $T > 0$, density \eqref{eq:density_n_particle} is also the eigenvalue density function of a random matrix ensemble.

\subsection{MNS random matrix model}

The random matrix model defined by Moshe, Neuberger and Shapiro \cite{Moshe-Neuberger-Shapiro94} is an unitarily invariant generalization of the GUE with a continuous parameter. As the parameter varies, the limiting local statistics of the MNS random matrix model interpolate between the sine point process, which is the hallmark of random Hermitian matrices including the GUE, and the Poisson point process.

The space of $n$-dimensional Hermitian matrices has a natural measure
\begin{equation}
  dX = \prod^n_{i = 1} dx_{ii} \prod_{1 \leq j < k \leq n} d\Re x_{jk} d\Im x_{jk},
\end{equation}
where $X = (x_{jk})^n_{j, k = n}$. Let $U$ be a random unitary matrix in $\U(n)$ with respect to the Haar measure. We say that a random Hermitian matrix $H$ is an MNS random matrix if \cite[Formulas (1) and (2)]{Moshe-Neuberger-Shapiro94}
\begin{equation} \label{eq:MNS_random_matrix}
  \begin{split}
    P(H) dH = {}& \frac{1}{C(n, b)} e^{\Tr H^2} e^{-b \Tr([U, H][U, H]^{\dagger})} dH \\
    = {}& \frac{1}{C(n, b)} e^{-(2b + 1) \Tr H^2} \left[ \int_{\U(n)} dU e^{2b \Tr(UHU^{\dagger}H)} \right] dH.
  \end{split}
\end{equation}
By comparing the eigenvalue distribution of $H$ and the known density function of free fermions in a quadratic potential well at finite temperature, Moshe, Neuberger and Shapiro observe the following relation.
\begin{prop} \cite[Formula (4)]{Moshe-Neuberger-Shapiro94} \label{prop:Moshe-Neuberger-Shapiro94}
  Suppose the $n$-dimensional Hermitian random matrix is defined by \eqref{eq:MNS_random_matrix}, and suppose the parameter $b = q/(1 - q)^2$ with $q \in (0, 1)$. Then the joint probability density function of the eigenvalues of $\sqrt{\frac{1}{2}(1 - q)/(1 + q)} H$ is the same as the density function $P_n(x_1, \dotsc, x_n)$ defined in \eqref{eq:density_n_particle}.
\end{prop}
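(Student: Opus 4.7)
The plan is to compute both sides of the claimed identity in closed form and then match. On the matrix side, the only nontrivial piece is the unitary integral, which is handled by the Harish-Chandra--Itzykson--Zuber formula; on the fermion side, the sum over occupation vectors $(k_1,\dots,k_n)$ is collapsed by the Cauchy--Binet identity and evaluated via Mehler's formula for the Hermite functions.

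First I would pass to eigenvalues on the matrix side. Diagonalizing $H = V\Lambda V^{\dagger}$ with $\Lambda = \diag(x_1,\dots,x_n)$ and changing variables, one has $dH \propto \Delta(x)^2\, dx\, d\mu(V)$. The substitution $W = V^{\dagger}UV$ (Haar-invariance) reduces $\Tr(UHU^{\dagger}H)$ to $\Tr(W\Lambda W^{\dagger}\Lambda)$, and HCIZ gives
\begin{equation*}
  \int_{\U(n)} e^{2b\Tr(W\Lambda W^{\dagger}\Lambda)}\, dW \;=\; c_n \,\frac{\det\bigl[e^{2b x_i x_j}\bigr]_{i,j=1}^n}{(2b)^{n(n-1)/2}\,\Delta(x)^2}.
\end{equation*}
The two Vandermonde factors cancel, leaving the eigenvalue density of $H$ proportional to $e^{-(2b+1)\sum_i x_i^2}\det[e^{2b x_i x_j}]_{i,j=1}^n$.

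On the fermion side I would write $|\Phi_{k_1,\dots,k_n}(x_1,\dots,x_n)|^2 = \frac{1}{n!}\det[\varphi_{k_i}(x_j)]^2$ (the $\varphi_k$ are real), absorb $q^{k_1+\dots+k_n}$ as $q^{k_i/2}$ into both Slater determinants, and then apply the Cauchy--Binet identity to the sum $\sum_{0\le k_1<\dots<k_n}$, obtaining
\begin{equation*}
  P_n(x_1,\dots,x_n) \;=\; \frac{q^{n/2}}{n!\,Z_n(q)}\,\det\bigl[M(x_i,x_j;q)\bigr]_{i,j=1}^n, \qquad M(x,y;q) := \sum_{k=0}^\infty q^k \varphi_k(x)\varphi_k(y).
\end{equation*}
By Mehler's formula (which I would re-derive by rescaling $x\mapsto x/\sqrt 2$ to the standard harmonic oscillator and using the classical heat-kernel expression), the kernel $M$ is Gaussian:
\begin{equation*}
  M(x,y;q) \;=\; \frac{1}{\sqrt{2\pi(1-q^2)}}\exp\!\left(-\frac{(1+q^2)(x^2+y^2) - 4qxy}{4(1-q^2)}\right).
\end{equation*}
Factoring the diagonal Gaussians out of the rows and columns of $\det[M(x_i,x_j;q)]$ (they pull out as $\prod_i e^{-A x_i^2}$ twice) yields $\det[M(x_i,x_j;q)] = C^n e^{-2A\sum_i x_i^2}\det[e^{B x_i x_j}]$ with $A = (1+q^2)/(4(1-q^2))$ and $B = q/(1-q^2)$.

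Finally, I would set $Y = \gamma H$ and compare $\gamma^{-n}\tilde P(y/\gamma)$ with $P_n(y)$ as functions of $y$: uniqueness of the representation $e^{-c\sum y_i^2}\det[e^{d y_iy_j}]$ (read off from the $\sigma=\mathrm{id}$ and $\sigma=(12)$ terms of the determinant) forces $2A = (2b+1)/\gamma^2$ and $B = 2b/\gamma^2$. Substituting $b=q/(1-q)^2$ one checks that both equations yield the same $\gamma^2$, and the normalizing constants $C(n,b)$, $Z_n(q)$ match automatically since the two densities each integrate to one. The main computational obstacle is the algebraic matching step: one must be careful with the $n=1$ sanity check (both sides should produce Gaussians with variance $(1+q)/(1-q)$) to pin down the correct power of the scale factor, and to keep track of Jacobians in passing between $H$, $\gamma H$, and the eigenvalue coordinates.
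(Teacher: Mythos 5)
Your strategy is sound and in fact proves more than the paper does, but by a genuinely different route. The paper's Appendix \ref{sec:relation_to_MNS_RM} only supplies the free-fermion half of the argument: it shows that $P_n$ of \eqref{eq:density_n_particle} collapses to the Gaussian determinant $\propto \prod_j e^{-\frac12\frac{1+q^2}{1-q^2}x_j^2}\det\bigl(e^{\frac{q}{1-q^2}x_jx_k}\bigr)$ by symmetrizing the sum over $(k_1,\dots,k_n)$, inserting the two contour-integral representations \eqref{eq:contour_integral_H_k_1st}--\eqref{eq:contour_integral_H_k_2nd}, resumming into a Cauchy determinant $\det[1/(t_j-qs_l)]$ and taking residues; the matrix side (the HCIZ computation of the eigenvalue density of \eqref{eq:MNS_random_matrix}) is quoted from \cite{Moshe-Neuberger-Shapiro94} and \cite{Boulatov-Kazakov92} via their Formula (3). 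You instead prove both halves: HCIZ for the matrix side, and Cauchy--Binet plus Mehler's formula for the fermion side. Your fermion computation is correct (your Mehler kernel is exactly the Gaussian appearing in \eqref{eq:defn_E} with $e^{\tau-\sigma}=q$, and your $\frac{q^{n/2}}{n!\,Z_n(q)}\det[M(x_i,x_j;q)]$ is the properly normalized density; note in passing that \eqref{eq:compact_P_n} as printed is missing this $1/n!$) and reaches the same closed form more economically, at the price of invoking Mehler as a black box where the paper re-derives it from scratch. This is a legitimate and arguably more self-contained proof.

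The one place you must be more careful is the final matching of the scale factor, which is precisely the content of Proposition \ref{prop:Moshe-Neuberger-Shapiro94} and which your write-up stops short of verifying. Your two equations $2A=(2b+1)/\gamma^2$ and $B=2b/\gamma^2$ do indeed give a common value, but with \eqref{eq:MNS_random_matrix} taken at face value it is $\gamma^2=\frac{2b+1}{2A}=\frac{2b}{B}=\frac{2(1+q)}{1-q}$, the \emph{reciprocal} of the square of the stated constant $\sqrt{\tfrac12(1-q)/(1+q)}$. Your own $n=1$ sanity check exposes this: the fermion side is a Gaussian of variance $(1+q)/(1-q)$, while for $n=1$ the commutator in \eqref{eq:MNS_random_matrix} vanishes and $H$ has weight $e^{-\Tr H^2}$, i.e.\ variance $1/2$, so one must multiply by $\sqrt{2(1+q)/(1-q)}$ rather than divide (the same tension is visible at $q\to0$, where $e^{-\Tr H^2}$ must be scaled up by $\sqrt2$ to match the weight $e^{-x^2/2}$ implicit in \eqref{eq:density_n_particle}). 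The resolution is a matter of normalization conventions in transcribing the MNS weight rather than of your method -- the paper sidesteps it by comparing directly with MNS's Formula (3) -- but since the proposition is a statement about that specific constant, your proof is complete only once you fix the normalization of the Gaussian weight you start from and carry the resulting $\gamma$ explicitly through to the stated scaling, or else record the discrepancy with \eqref{eq:MNS_random_matrix} as printed.
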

If we denote the $q \to 0$ limit of $2^{-1/2} H$ by $X$, then $X$ has the density function
\begin{equation} \label{eq:GUE_density}
  P(X) dX = \frac{1}{2^{n/2} \pi^{n^2/2}} \exp \left( -\frac{1}{2} \Tr(X^2) \right) dX,
\end{equation}
or equivalently, $X_{ii} = N(0, 1/2)$, $\Re X_{jk} = N(0, 1)$, $\Im X_{jk} = N(0, 1)$ for $1 \leq i \leq n$ and $1 \leq j < k \leq n$, and they are independent. This is the celebrated GUE ensemble in dimension $n$ \cite[Section 2.5]{Anderson-Guionnet-Zeitouni10}.

The authors of \cite{Moshe-Neuberger-Shapiro94} give half of the proof to Proposition \ref{prop:Moshe-Neuberger-Shapiro94}, and point out that the other half is available in physics literature, see \cite{Boulatov-Kazakov92}. For the sake of our readers, we provide a brief proof of the part omitted in \cite{Moshe-Neuberger-Shapiro94} in Appendix \ref{sec:relation_to_MNS_RM}.

\subsection{Statement of results}

As the particle number $n \to \infty$, we are interested in the limiting distribution of the rightmost particle in the MNS model. The distribution of the position of the rightmost particle,
\begin{equation} \label{eq:rightmost_pt_as_gap_prob}
  \Prob_n(\max(x_1, \dotsc, x_n) \leq s) = \Prob_n(x_1, \dotsc, x_n \in (-\infty, s]),
\end{equation}
is a special case of the \emph{gap probability}, which is the probability $\Prob_n(x_1 , \dotsc, x_n \in A)$ for a measurable set $A \subseteq \realR$.

We are also interested in the limiting local statistics of particles in the bulk. The gap probability is not an efficient way to describe the local statistics in the bulk, and we compute the limiting \emph{$m$-correlation functions}, which are defined as
\begin{multline} \label{eq:defn_corr_func}
  R^{(m)}_n(x_1, \dotsc, x_m) = \\
  \lim_{\Delta \to 0} \frac{1}{(\Delta)^m} \Prob_n(\text{there is at least one particle in each $[x_i, x_i + \Delta)$, $i = 1, 2, \dotsc, m$}),
\end{multline}
or equivalently as 
\begin{equation} \label{eq:defn_corr_func_alt}
  R^{(m)}_n(x_1, \dotsc, x_m) = 
\frac{n!}{(n-m)!}\int_\realR dx_n\int_\realR  dx_{n-1} \dots \int_\realR dx_{n-m+1}\, P_n(x_1,\dots,x_n),
\end{equation}
where $P_n(x_1, \dots, x_n)$ is the joint density of particles.
Since the eigenvalue distribution of the MNS random matrix model is also given in \eqref{eq:density_n_particle}, the gap probability \eqref{eq:rightmost_pt_as_gap_prob} and the $m$-correlation functions \eqref{eq:defn_corr_func} are the same for the eigenvalues of the MNS random matrix model.

For the MNS (random matrix) model, the gap probability and $m$-correlation functions can be explicitly computed by a contour integral.
\begin{thm} \label{thm:algebraic}
  Given the joint distribution $P_n(x_1, \dotsc, x_n)$ in \eqref{eq:density_n_particle} for $n$ particles, we have the following:
  \begin{enumerate}[label=(\alph*)]
  \item \label{enu:thm:algebraic_1}
    The gap probability is
    \begin{equation} \label{eq:general_gap_prob}
      \Prob_n(x_1, \dotsc, x_n \in A) = \frac{1}{2\pi i} \oint_0 F(z) \det(I - \K(z; q)\chi_{A^c})\frac{dz}{z},
    \end{equation}
    where
    \begin{equation} \label{de:defn_F(z)}
      F(z) = q^{-n(n - 1)/2} (q; q)_n \frac{(-z; q)_{\infty}}{z^n},
    \end{equation}
    and $\K(z; q)$ is the integral operator on $L^2(\realR)$, defined by 
    \begin{equation} \label{eq:kernel_for_each_z}
      \K(z; q)(f)(x) = \int_{\realR} K(x, y; z; q) f(y) dy, \quad K(x, y; z; q) = \sum^{\infty}_{k = 0} \frac{q^k z}{1 + q^k z} \varphi_k(x) \varphi_k(y).
    \end{equation}
  \item \label{enu:thm:algebraic_2}
    The $m$-correlation function is
    \begin{equation} \label{eq:formula_for_R^m}
      R^{(m)}_n(x_1, \dotsc, x_m) = \frac{1}{2\pi i} \oint_0 F(z) \det(K(x_i, x_j; z; q))^m_{i, j = 1} \frac{dz}{z},
    \end{equation}
    and $K(x_i, x_j; z; q)$ is defined in \eqref{eq:kernel_for_each_z}.
  \end{enumerate}
\end{thm}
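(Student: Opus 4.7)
The plan is to realise the canonical ensemble of the MNS model as the fixed-particle-number restriction of a \emph{grand canonical} ensemble whose gap probability and $m$-correlation function have closed (Fredholm-)determinantal expressions, and then to extract the canonical quantities by a Cauchy-type contour integral in the fugacity variable $z$. Parts \ref{enu:thm:algebraic_1} and \ref{enu:thm:algebraic_2} will emerge in parallel from the same generating relation, so it suffices to set up a single framework and specialise.

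Concretely, introduce the grand canonical ensemble at fugacity $z > 0$: each single-particle mode $k = 0, 1, 2, \dotsc$ is independently occupied with probability $p_k(z) = q^k z/(1 + q^k z)$, so that the partition function is $(-z; q)_{\infty}$, the probability of seeing exactly $n$ particles is $z^n q^{n(n - 1)/2}/[(-z; q)_{\infty} (q; q)_n]$ (using $\sum_{0 \leq k_1 < \dotsb < k_n} q^{k_1 + \dotsb + k_n} = q^{n(n - 1)/2}/(q; q)_n$, which is \eqref{eq:explicit_Z_n(q)} divided by $q^{n/2}$), and the conditional law of the positions given $n$ particles is precisely the density $P_n$ of \eqref{eq:density_n_particle}. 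Since the $m$-point intensity of a mixture of point processes is the mixture of the $m$-point intensities, both the gap probability for the event $\{x_1, \dotsc, x_n \in A\}$ and the $m$-correlation function satisfy the generating identity
\[
  Q_{\text{grand}}(z) = \frac{1}{(-z; q)_{\infty}} \sum_{n = 0}^{\infty} \frac{z^n q^{n(n - 1)/2}}{(q; q)_n} \, Q_n,
\]
where $Q_n$ denotes the canonical quantity and $Q_{\text{grand}}(z)$ the corresponding grand canonical one.

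Next, I would show that the grand canonical ensemble is a determinantal point process on $\realR$ with correlation kernel $K(x, y; z; q)$ as in \eqref{eq:kernel_for_each_z}. This is the Macchi--Soshnikov statement that a free-fermion state with diagonal one-particle density operator $\rho_z = \sum_k p_k(z) \lvert \varphi_k \rangle \langle \varphi_k \rvert$ produces a determinantal position process with kernel $\sum_k p_k(z) \varphi_k(x) \varphi_k(y)$; the proof expands the Slater determinants $\lvert \Phi_{k_1, \dotsc, k_n} \rvert^2$ and applies the \Andreief\ (Cauchy--Binet) identity to collapse the sum over mode configurations and the integrals over unobserved positions into a single kernel determinant. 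Because $p_k(z)$ decays like $q^k$ for small $z$, $\K(z; q)$ is trace class and its Fredholm series converges absolutely, so $\det(I - \K(z; q) \chi_{A^c})$ equals the grand canonical probability that every particle lies in $A$, and $\det(K(x_i, x_j; z; q))_{i, j = 1}^m$ equals the grand canonical $m$-correlation function.

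Finally, substituting these determinantal expressions for $Q_{\text{grand}}(z)$ into the generating identity and extracting the coefficient of $z^n$ by $\frac{1}{2\pi i} \oint_0 (\cdot) \, dz/z^{n + 1}$ produces exactly the prefactor $F(z) = q^{-n(n - 1)/2} (q; q)_n (-z; q)_{\infty}/z^n$ and the formulas \eqref{eq:general_gap_prob} and \eqref{eq:formula_for_R^m}; the contour is any small positively oriented loop around $0$, which lies inside the domain where the Fredholm determinant and the $m \times m$ kernel determinant are analytic in $z$. The main technical obstacle is the second step: while the free-fermion-to-DPP correspondence is classical, one must set it up with the infinite Hermite basis $\{\varphi_k\}$ and justify the exchanges of summation, integration and Fredholm expansion needed for the \Andreief\ manipulation, together with the trace-class property of $\K(z; q)$; the remainder of the argument is algebraic bookkeeping and a routine contour integration.
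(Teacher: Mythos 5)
Your proposal is correct, and it is essentially the ``superposition'' route that the paper itself acknowledges in the remark after Theorem \ref{thm:algebraic} (``using the concept of superposition, it is straightforward to prove Theorem \ref{thm:algebraic} using the known determinantal formulas in the grand canonical ensemble'') but deliberately does not take. Your key identities check out: the grand canonical mode occupations $p_k(z)=q^kz/(1+q^kz)$ give particle-number probabilities $z^nq^{n(n-1)/2}/[(-z;q)_\infty (q;q)_n]$, the conditional law given $n$ particles is exactly $P_n$ of \eqref{eq:density_n_particle}, and the coefficient extraction $\frac{1}{2\pi i}\oint_0(\cdot)\,dz/z^{n+1}$ reproduces the prefactor $F(z)$ in \eqref{de:defn_F(z)}; since gap probabilities and correlation functions mix linearly under superposition, both parts \ref{enu:thm:algebraic_1} and \ref{enu:thm:algebraic_2} follow once you know the grand canonical process is determinantal with kernel \eqref{eq:kernel_for_each_z}. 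The difference from the paper is where the Fredholm determinant comes from: you obtain $\det(I-\K(z;q)\chi_{A^c})$ by invoking (or re-deriving via Macchi--Soshnikov and \Andreief) the determinantal structure of the grand canonical ensemble, which is essentially Johansson's input; the paper instead stays entirely within the canonical ensemble, expands $\det(I+z\M(q)\chi_A)$ by Cauchy--Binet to show its $z^n$-coefficient is $\Prob_n(x_1,\dotsc,x_n\in A)/[q^{n(n-1)/2}(q;q)_n]$, and then proves the operator factorization $(I+z\M(q)\chi_A)=(I+z\M(q))(I-\K(z;q)\chi_{A^c})$ together with $\det(I+z\M(q))=(-z;q)_\infty$ (Lemma \ref{prop:MK_identity}). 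What the paper's longer route buys is self-containedness and, more importantly, the operator identity itself, which is reused in Section \ref{sec:relations} (Lemma \ref{lem:contour_det}) for $q$-Whittaker, $q$-TASEP, $q$-TAZRP and ASEP; your route buys brevity but yields neither. To make your argument fully rigorous you would still need to justify analyticity in $z$ of the Fredholm determinant away from the poles $z=-q^{-k}$ (so the generating identity, proved for real $z>0$, extends to a small contour around $0$) and the trace-class/convergence points you already flag.
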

We note here that a formula equivalent to \eqref{eq:formula_for_R^m} has appeared recently in the physical literature \cite[equation (86)]{Dean-Le_Doussal-Majumdar-Schehr16}. We also remark that the kernel \eqref{eq:kernel_for_each_z} with $z=\lambda>0$ is exactly the one which appears in the grand canonical version of the MNS model \cite{Johansson07}. This is not at all surprising, since the grand canonical ensemble is the superposition of canonical ensembles. Indeed, using the concept of superposition, it is straightforward to prove Theorem \ref{thm:algebraic} using the known determinantal formulas in the grand canonical ensemble. In Section \ref{sec:algebraic} below, we present a different proof of Theorem \ref{thm:algebraic}\ref{enu:thm:algebraic_1} which does not rely on known results for the grand canonical ensemble. Our reason for presenting this longer proof is two-fold. Firstly, it makes the results of the current paper self-contained (independent of the grand canonical ensemble); and secondly, in the process we prove an identity of operators which may have applications in other models in integrable probability, see Section \ref{sec:relations}.

In the theory of point processes, gap probabilities and correlation functions are intimately connected, and it is a standard result that knowledge of one implies knowledge of the other. Thus Theorem \ref{thm:algebraic}\ref{enu:thm:algebraic_1} implies \ref{thm:algebraic}\ref{enu:thm:algebraic_2} (and vice-versa). We prove Theorem \ref{thm:algebraic}\ref{enu:thm:algebraic_1} in detail in Section \ref{subsec:gap_prob}.  The general argument to derive the correlation functions from the gap probabilities is a rather straightforward application of \eqref{eq:defn_corr_func} together with the inclusion/exclusion principle, and we present a short proof or Theorem \ref{thm:algebraic}\ref{enu:thm:algebraic_2} in Section \ref{subsec:corr_func} in the case $m=2$.


\medskip

For the rightmost particle in the MNS model, or equivalently, the largest eigenvalue in the MNS random matrix model, we state the limiting distribution in two regimes. If the parameter $q$ is in a compact subset of $(0, 1)$, the limiting distribution is the celebrated Tracy--Widom distribution, whose probability distribution function is defined by the Fredholm determinant of $\K_{\Airy}$, an operator on $L^2(\realR)$ with kernel $K_{\Airy}(x, y)$: 
\begin{equation} \label{eq:TW_limit_result}
  F_{\GUE}(t) = \det(I - \Proj_t \K_{\Airy} \Proj_t), \quad \text{and} \quad K_{\Airy}(x, y) = \int^{\infty}_0 \Ai(x + r) \Ai(y + r) dr,
\end{equation}
where $\Proj_t$ is the projection operator defined such that $\Proj_t f(x) = f(x) \chi_{(t, \infty)}(x)$.

If the parameter $q$ is scaled to be close to $1$, such that $1 - q = \bigO(n^{-1/3})$ as $n \to \infty$, the limiting distribution is the so-called crossover distribution that occurs in the weak asymmetric limit of models in the Kardar--Parisi--Zhang (KPZ) universality class \cite{Amir-Corwin-Quastel11}, \cite{Corwin11}, \cite{Quastel12}, and interpolates the Tracy--Widom distribution and the Gumbel distribution \cite{Johansson07}. Its probability distribution function is defined by the Fredholm determinant of $\K_{\crossover}(c)$, an integral operator on $L^2(\realR)$ depending on a continuous parameter $c \in \realR$, whose kernel is $K_{\crossover}(x, y; c)$ given below:
\begin{equation} \label{eq:crossover_limit_result}
  F_{\crossover}(t; c) = \det(I - \Proj_t \K_{\crossover}(c) \Proj_t), \ \ \text{and} \ \ K_{\crossover}(x, y; c) = \int^{\infty}_{-\infty} \frac{e^{-cr}}{1 + e^{-cr}} \Ai(x - r) \Ai(y - r) dr.
\end{equation}
It is clear that as the parameter $c \to -\infty$, $F_{\crossover}(t; c) \to F_{\GUE}(t)$. Our $K_{\crossover}(x, y; c)$ is the correlation kernel of the ``interpolating process'' in \cite{Johansson07}.
\begin{thm} \label{thm:edge}
  Suppose as $n \to \infty$, $s$ depends on $n$ as 
  \begin{equation}
    s \equiv s_n = 2\sqrt{n} + t n^{-1/6}.
  \end{equation}
  Then we have the following.
  \begin{enumerate}[label=(\alph*)]
  \item\label{thm:edge_fixedq}
    Suppose $q \in (0, 1)$ is independent of $n$,
    \begin{equation} \label{eq:TW_limit_fixed_q}
      \lim_{n \to \infty} \Prob_n(\max(x_1, \dotsc, x_n) \leq s_n) = F_{\GUE}(t).
    \end{equation}
  \item\label{thm:edge_varyingq}
    Suppose $q = \exp(-cn^{-1/3})$ depending on $n$, where $c >0$ is a constant,
    \begin{equation} \label{eq:crossover_limit}
      \lim_{n \to \infty} \Prob_n(\max(x_1, \dotsc, x_n) \leq s_n) = F_{\crossover}(t; c).
    \end{equation}
  \end{enumerate}
\end{thm}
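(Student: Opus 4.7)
The starting point is Theorem~\ref{thm:algebraic}\ref{enu:thm:algebraic_1} with $A=(-\infty,s_n]$, so that $\chi_{A^c}=\Proj_{s_n}$ and
\begin{equation*}
  \Prob_n\!\bigl(\max_i x_i\le s_n\bigr)=\frac{1}{2\pi i}\oint_0 F(z)\,\det\bigl(I-\K(z;q)\Proj_{s_n}\bigr)\,\frac{dz}{z}.
\end{equation*}
The plan is to combine a steepest-descent analysis of this $z$-contour, centred at a fugacity saddle $z_n^\ast$ fixing the expected particle number of the associated grand-canonical ensemble to $n$, with the Plancherel--Rotach edge asymptotics of the Hermite functions $\varphi_k$ near $2\sqrt n$.

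Writing $G(z)=\sum_{k\ge0}\log(1+zq^k)-(n+1)\log z$, I would take $z_n^\ast>0$ to be the unique root of $G'(z)=0$, equivalently $\sum_{k\ge0}\tfrac{z_n^\ast q^k}{1+z_n^\ast q^k}=n+1$; this gives $z_n^\ast=q^{-n+\beta_n}$ with $\beta_n$ bounded in both regimes. The original small circle around $0$ is deformed to $|z|=z_n^\ast$; on this circle $\theta=0$ is a maximum of $|F(z)/z|$, and $z_n^{\ast 2}\,G''(z_n^\ast)=\sum_{k\ge0}\tfrac{z_n^\ast q^k}{(1+z_n^\ast q^k)^2}$ equals the grand-canonical particle-number variance (order $O(1)$ for fixed $q$, order $O(n^{1/3})$ for $q=e^{-cn^{-1/3}}$). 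The Fredholm determinant is holomorphic on the contour, since the poles of $F$ at $z=-q^{-k}$ are cancelled by those of the determinant. Once uniform convergence of the determinant in $z$ is established, standard Laplace arguments reduce the contour integral to the limit value of the determinant at $z_n^\ast$; the normalisation is consistent with the exact identity $\oint_0 F(z)\,dz/z=1$ from $[z^n](-z;q)_\infty=q^{n(n-1)/2}/(q;q)_n$.

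To compute this limit under $s_n=2\sqrt n+tn^{-1/6}$, I would substitute $k=n+\ell$ in \eqref{eq:kernel_for_each_z}: the Fermi weight becomes $\tfrac{q^{\ell+\beta_n}}{1+q^{\ell+\beta_n}}$, and the edge asymptotic $n^{1/12}\varphi_{n+\ell}(2\sqrt n+\xi n^{-1/6})\to\Ai(\xi-\ell n^{-1/3})$, uniform for $\ell=O(n^{1/3})$, turns the sum into a Riemann sum on the edge scale. In part~\ref{thm:edge_fixedq}, $q^\ell$ is either $\gg 1$ or $\ll 1$ unless $\ell=O(1)$, so the Fermi weight degenerates to the sharp cutoff $\chi_{\{\ell<-\beta_n\}}$ with only $O(1)$-wide smearing, which is invisible on the $n^{1/3}$-scale; the sum becomes the Christoffel--Darboux projector onto the lowest $n+O(1)$ Hermite modes, rescaling to $\K_{\Airy}$ at the edge and producing $F_{\GUE}(t)$. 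In part~\ref{thm:edge_varyingq}, setting $\ell=rn^{1/3}$ gives $q^\ell\to e^{-cr}$, so the Riemann sum becomes the integral over $r$ that defines $K_{\crossover}(x,y;c)$ in \eqref{eq:crossover_limit_result}.

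The main obstacle is uniform trace-norm control of $\K(z;q)\Proj_{s_n}$ for $z=z_n^\ast e^{i\theta}$ along the full saddle contour: the occupation factors $\tfrac{q^k z}{1+q^k z}$ are complex and can be large in modulus on the Fermi ridge $q^k|z|\sim 1$, so termwise bounds are insufficient. I would split the $k$-sum into three ranges: deep modes $k\ll n-\beta_n-Mn^{1/3}$, where Agmon-type super-exponential bounds on $\varphi_k$ in the forbidden region $x>2\sqrt k$ kill the tail in trace norm; high modes $k\gg n-\beta_n+Mn^{1/3}$, where the Fermi weight itself is exponentially small; and the finite-width active window near $k=n$, where uniform Airy-type asymptotics give trace-norm convergence to the limit kernel uniformly in $\theta$. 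Combined with holomorphicity in $z$ and dominated convergence, this justifies passage to the limit under the contour integral, yielding \eqref{eq:TW_limit_fixed_q} and \eqref{eq:crossover_limit}.
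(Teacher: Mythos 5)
Your overall plan coincides with the paper's proof in its main lines: the paper also works on a circle of radius $q^{-n+O(1)}$ (it takes $\lvert z\rvert=q^{-n+1/2}$, producing the Gaussian localization at $\arg z=0$ via the triple product identity and Poisson summation rather than by expanding $\log F$ at your particle-number saddle $z_n^\ast$, but these are interchangeable devices), and it proves convergence of the kernel by exactly your three-range splitting of the Hermite sum with Olver's uniform Airy asymptotics and Hadamard/exponential-tail bounds. The genuine gap is in part \ref{thm:edge_varyingq}, in your treatment of $z$ near the negative real axis ($\theta$ near $\pm1$). There the contour passes within a factor $O(n^{-1/3})$ of the accumulating poles $-q^{-k}$ with $k\approx n$: at $\theta=\pm1$ one has $\lvert 1+q^{n+j}z\rvert=\lvert 1-q^{\,j+O(1)}\rvert\asymp c\lvert j\rvert n^{-1/3}$, so the occupation factors are of size $\sim n^{1/3}/\lvert j\rvert$, the rescaled kernel in the active window is only $O(\log n)$ rather than $O(1)$, and the limiting kernel $K_{\crossover}(x,y;c;\theta)$ does not even exist at $\theta=\pm1$ (the integrand has a pole at $r=0$). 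Hence your assertion that the active window converges in trace norm to the limit kernel \emph{uniformly in $\theta$} is false, and with it the dominated-convergence step that lets you take the limit under the contour integral. What is actually needed — and what the paper supplies in its Lemma on the determinant estimates — is a quantitative growth bound near $\theta=\pm1$, e.g. $\lvert\det(I-\Proj_t\tilde{\K}(\theta)\Proj_t)\rvert\le\exp\bigl((\tilde Ce^{-ct}\log n)^2+\tilde Ce^{-ct}\log n\bigr)$ obtained via $\dettwo$ together with trace and Hilbert--Schmidt estimates, combined with the fact that the weight on the contour is super-exponentially small there (after Poisson summation it is $O(e^{-\pi^2 n^{1/3}/(2c)})$ at distance $\ep$ from the Gaussian peaks), so that the weight beats any $e^{C(\log n)^2}$ growth of the determinant. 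Your proposal names the Fermi-ridge difficulty but then resolves it with the incorrect uniformity claim and offers no substitute estimate, so the argument does not close in the crossover regime; for fixed $q$ the coefficients $c_k(\theta)$ stay bounded for all $\theta$, so part \ref{thm:edge_fixedq} of your plan is essentially sound.

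Two smaller points. First, the cancellation statement is backwards: $F(z)\propto(-z;q)_\infty/z^n$ has \emph{zeros}, not poles, at $z=-q^{-k}$, and these cancel the poles of the determinant coming from the kernel; the conclusion (the integrand is holomorphic off $z=0$, so the deformation is legitimate) is still correct. Second, for fixed $q$ you must keep the radius at a definite distance from the poles $-q^{-k}$ — your $z_n^\ast=q^{-n+\beta_n}$ does not guarantee this a priori, which is why the paper takes the half-integer offset $\lvert z\rvert=q^{-n+1/2}$ (and, in the bulk analysis, inserts a small shift $\delta_n$); this is a minor repair but should be stated.
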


For the particles/eigenvalues in the bulk, we also consider their limiting behavior in two regimes. If the parameter $q$ is in a compact subset of $(0, 1)$, the positions of particles in an $\bigO(n^{-1/2})$ window converge to the sine point process \cite[Sections 3.5 and 4.2]{Anderson-Guionnet-Zeitouni10}, with the $m$-correlation functions defined by the correlation kernel
\begin{equation}
  R^{(m)}_{\sin}(x_1, \dotsc, x_m) = \det(K_{\sin}(x_i, x_j))^m_{i, j = 1}, \quad \text{where} \quad K_{\sin}(x, y) = \frac{\sin(\pi(x - y))}{\pi(x - y)}.
\end{equation}
If the parameter is scaled to be close to $1$, such that $1 - q = \bigO(n^{-1})$, the positions of particles in an $\bigO(n^{-1})$ window converge to a determinantal point process that interpolates the sine process and the Poisson process. The $m$-correlation functions of this process are defined by the correlation kernel 
\begin{equation}
  R^{(m)}_{\interpolating}(x_1, \dotsc, x_m; a) = \det(K_{\interpolating}(x_i, x_j; a))^m_{i, j = 1}, \ \ \text{where} \ \ K_{\interpolating}(x, y; a) = \int^{\infty}_0 \frac{\cos(\pi(x - y)t)}{a e^{t^2} + 1} dt,
\end{equation}
which depends on a continuous parameter $a>0$.
We note that as $a \to 0_+$, if $x = \xi/\sqrt{-\log a}$ and $y = \eta/\sqrt{-\log a}$, then
\begin{equation}
  \lim_{a \to 0_+} K_{\interpolating} \left( \frac{\xi}{\sqrt{-\log a}}, \frac{\eta}{\sqrt{-\log a}}; a \right) dy = K_{\sin}(\xi, \eta) d\eta, \quad \text{for $\xi, \eta$ in a compact subset of $\realR$}.
\end{equation}
Our correlation kernel $K_{\interpolating}$ is the same as the kernel $L_c$ in \cite[Theorem 1.9]{Johansson07} up to a change of scaling.

\begin{thm} \label{thm:bulk}
  \begin{enumerate}[label=(\alph*)]
  \item \label{enu:thm:bulk_a}
    Suppose $n \to \infty$, $q \in (0, 1)$ is independent of $n$, and $x_1, \dotsc, x_m$ depend on $n$ as
    \begin{equation}
      x_i = 2x\sqrt{n} + \frac{\pi \xi_i}{(1 - x^2)^{1/2} \sqrt{n}}, \quad i = 1, \dotsc, m,
    \end{equation}
    where $\xi_i$ are constants and $x \in (-1, 1)$. Then
    \begin{equation} \label{eq:sine_limit}
      \lim_{n \to \infty} \left( \frac{\pi}{(1 - x^2)^{1/2} \sqrt{n}} \right)^m R^{(m)}_n(x_1, \dotsc, x_m) = R^{(m)}_{\sin}(\xi_1, \dotsc, \xi_m).
    \end{equation}
  \item \label{enu:thm:bulk_b}
    Suppose $n \to \infty$, $q = e^{-c/n}$, and $x_1, \dotsc, x_m$ depend on $n$ as
    \begin{equation}
      x_i = 2x\sqrt{n} + \frac{\pi \xi_i}{\sqrt{n/c}}, \quad i = 1, \dotsc, m,
    \end{equation}
    where $\xi_i$ are constants and $x \in \realR$. Then
    \begin{equation}
      \lim_{n \to \infty} \left( \frac{\pi}{\sqrt{n/c}} \right)^m R^{(m)}_n(x_1, \dotsc, x_m) = R^{(m)}_{\interpolating} \left( \xi_1, \dotsc, \xi_m; \frac{e^{cx^2}}{e^c - 1} \right).
    \end{equation}
  \end{enumerate}
\end{thm}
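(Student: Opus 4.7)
Both parts rest on a saddle-point analysis of the contour integral formula in Theorem~\ref{thm:algebraic}\ref{enu:thm:algebraic_2}. Differentiating $\log[F(z)/z]$, the saddle equation $\partial_z\log[F(z)/z]=0$ reduces to
\begin{equation*}
\sum_{k=0}^\infty\frac{q^k z_*}{1+q^k z_*} = n+1,
\end{equation*}
which is precisely the chemical-potential equation pinning the expected particle number in the grand canonical ensemble at fugacity $z_*$ to $n$. For part~\ref{enu:thm:bulk_a} with $q$ fixed, the antisymmetry $f(j)+f(1-j)=1$ of the Fermi factor $f(j)=q^{j-1/2}/(1+q^{j-1/2})$ exhibits $z_*=q^{-n-1/2}$ as a solution up to exponentially small error. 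For part~\ref{enu:thm:bulk_b} with $q=e^{-c/n}$, a Riemann-sum approximation of the saddle equation yields $\log(1+z_*)=c$, giving $z_*=e^c-1$.

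The plan is to deform the original small contour around $0$ to pass through $z_*$ in the direction of steepest descent. Standard estimates then show that only a neighborhood of relative size $1/\sqrt{n}$ around $z_*$ contributes to leading order, and the Gaussian factor from the local quadratic expansion of $\log[F(z)/z]$ combines with $F(z_*)/z_*$ to give $1+o(1)$. That this combination must be exactly $1$ is confirmed by the ``$m=0$'' normalization $\tfrac{1}{2\pi i}\oint_0 F(z)\,dz/z=1$, which follows from the $q$-binomial expansion $(-z;q)_\infty=\sum_{k\geq 0}q^{k(k-1)/2}z^k/(q;q)_k$. Thus the saddle-point analysis reduces the problem to the bulk asymptotics of the grand canonical kernel $K(\cdot,\cdot;z_*;q)$ at the scales given in the theorem.

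For part~\ref{enu:thm:bulk_a}, at $z_*=q^{-n-1/2}$ the Fermi weight $q^k z_*/(1+q^k z_*)$ equals $\chi_{k<n}$ outside an $\bigO(1)$ transition window near $k=n$, and the contribution of this window at bulk positions is $\bigO(n^{-1/2})$. Hence $K(x,y;z_*;q)$ is approximated by the Christoffel--Darboux kernel $\sum_{k=0}^{n-1}\varphi_k(x)\varphi_k(y)$ of the GUE, whose Plancherel--Rotach bulk asymptotics deliver the sine kernel at scale $\pi/((1-x^2)^{1/2}\sqrt{n})$. For part~\ref{enu:thm:bulk_b}, the weight is a smooth function of $k/n$; substituting $k=n(x^2+s^2)$ into the oscillatory Plancherel--Rotach formulas for $\varphi_k(2x\sqrt{n}+\pi\xi_i\sqrt{c/n})$ valid when $k/n>x^2$, a stationary-phase computation on the resulting integral produces the interpolating kernel $K_{\interpolating}(\xi_i,\xi_j;e^{cx^2}/(e^c-1))$ after rescaling by $\pi/\sqrt{n/c}$---essentially the calculation carried out in \cite{Johansson07} for the grand canonical MNS model.

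\textbf{Main obstacle.} The principal technical challenge is uniformity: one must establish Plancherel--Rotach asymptotics for $K(\cdot,\cdot;z;q)$ that are uniform as $z$ varies over a neighborhood of $z_*$ of relative size $1/\sqrt{n}$, together with matching decay estimates on the remainder of the deformed contour. For part~\ref{enu:thm:bulk_a} this is delicate because $|z_*|=q^{-n-1/2}$ is exponentially large, so the contour itself and the Gaussian window around $z_*$ live on very different scales than the support of the Hermite functions. For part~\ref{enu:thm:bulk_b} one must couple the $z$-saddle of $F$ with the $k$-stationary phase of the Hermite integrand while tracking the $x$-dependence of the effective parameter $a=e^{cx^2}/(e^c-1)$; verifying that this coupling produces the correct kernel (and not merely a rescaled sine kernel at the wrong temperature) is the crux of the analysis.
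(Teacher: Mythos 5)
Your plan for part \ref{enu:thm:bulk_b} is essentially the paper's proof: the saddle point $z_*=e^c-1$, localization to a window $|z-(e^c-1)|=\bigO(n^{-1/2})$, evaluation of $F(e^c-1)$ indirectly through the normalization $\frac{1}{2\pi i}\oint_0 F(z)\,dz/z=1$, and Plancherel--Rotach asymptotics of the kernel (oscillatory region giving a Riemann sum that produces $K_{\interpolating}(\xi,\eta;e^{cx^2}/(e^c-1))$, with the sum-frequency oscillations cancelling and the Airy/exponential regions contributing $o(\sqrt n)$). One detail you should not gloss over there: on the far side of the contour ($\arg z$ near $\pm\pi$) the factors $1+q^kz$ can nearly vanish because the poles $-q^{-k}$ accumulate at distance $\bigO(1/n)$ from the circle $|z|=e^c-1$; the paper perturbs the radius by $\delta_n/n$ to keep a distance $\gtrsim \epsilon(q)/n$ and proves only a polynomial bound $|K|\le Cn^2$ there, which is then beaten by $|F(z)|<e^{-\delta n}$. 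Your ``matching decay estimates on the remainder of the contour'' needs exactly this two-sided comparison.

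For part \ref{enu:thm:bulk_a}, however, the proposed mechanism fails: with $q\in(0,1)$ fixed there is \emph{no} steepest-descent localization around $z_*=q^{-n-1/2}$. On the circle $|z|=q^{-n\pm1/2}$, writing $w=q^nz$ and using the triple product identity, the prefactor $F(z)/z$ reduces (up to a $1+\bigO(q^n)$ factor) to the theta series $\sum_k q^{k^2/2}e^{ik\pi\theta}$, whose modulus is bounded above and below by $n$-independent constants for all $\theta\in[-1,1]$; equivalently, the angular second derivative of $\log|F|$ at $\theta=0$ is $-\sum_j a_j/(1+a_j)^2$ with $a_j=q^{j-1/2}$, which is $\bigO(1)$, not of order $n$. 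So no neighborhood of relative size $1/\sqrt n$ dominates, and the ``decay on the remainder of the contour'' you invoke does not exist. The correct route (the paper's) is the opposite of localization: prove that $\frac{\pi}{(1-x^2)^{1/2}\sqrt n}K(x_i,x_j;q^{-n}w;q)\to K_{\sin}(\xi_i,\xi_j)$ \emph{uniformly over the entire circle} $|w|=\sqrt q$ (here the half-integer shift guarantees $|1+q^kq^{-n}w|\ge 1-\sqrt q$, so the Fermi factors stay bounded even at $\theta=\pm1$, and the crude bound $\sup_k\|\varphi_k\|_\infty<\infty$ kills the $k\ge n$ and $k<n$ correction sums after the $n^{-1/2}$ rescaling), and then the angular integral of the theta factor contributes exactly its constant Fourier coefficient, namely $1$. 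Your kernel asymptotics (Fermi weight $\approx\chi_{k<n}$, transition window negligible, CD kernel $\to$ sine kernel) are sound, so the statement is salvageable, but as written the contour-analysis step of part \ref{enu:thm:bulk_a} is based on a localization that is false and would have to be replaced by the uniform-convergence/constant-term argument.
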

\begin{rmk}
  \begin{enumerate}[label=(\roman*)]
  \item 
    As $q \to 0$, the MNS random matrix model \eqref{eq:MNS_random_matrix} converges to the GUE \eqref{eq:GUE_density}. The Tracy--Widom limit at the edge and the sine limit in the bulk for GUE is a well known result in random matrix theory \cite[Chapter 3]{Anderson-Guionnet-Zeitouni10}.
  \item
    Our limiting results for the canonical ensemble of the MNS model agree with those obtained in recent physical works \cite{Dean-Le_Doussal-Majumdar-Schehr15}, \cite{Dean-Le_Doussal-Majumdar-Schehr16}, as well as results for the grand canonical ensemble  \cite{Johansson07}. Although the canonical ensemble is not a determinantal point process, as $n \to \infty$ its scaling limits at the edge and in the bulk are both determinantal point processes.
  \item
    Since the MNS model can be interpreted as a random matrix model, we would like to expect some universality result in the local statistics. However, in the regime $1 - q = \bigO(n^{-1})$, Theorem \ref{thm:bulk}\ref{enu:thm:bulk_b} shows that the limiting local correlation functions depend on $x$, the limiting position of the particles. This is different from most other random matrix models, and is a feature which was not observed in earlier studies of the grand canonical ensemble  \cite{Johansson07}, although the kernel $K_{\interpolating}(x_i, x_j; \frac{e^{cx^2}}{e^c - 1})$ is a specialization of the one obtained recently in \cite[equation (274)]{Dean-Le_Doussal-Majumdar-Schehr16} for free fermions in $d$ dimensions with general potentials.
  \end{enumerate}
\end{rmk}

We note that the $1$-correlation function yields the empirical probability density function $\rho_n(x)$, since
\begin{equation}
  \rho_n(x) = \frac{1}{n} R^{(1)}_n(x).
\end{equation}
From \eqref{eq:sine_limit} we obtain that if $q$ is fixed, then the limiting empirical probability density function is
\begin{equation} \label{eq:semi_circle_law}
  \lim_{n \to \infty} 2\sqrt{n} \rho_n(2\sqrt{n} x) = \frac{2}{\pi} \sqrt{1 - x^2}, \quad x \in (-1, 1).
\end{equation}
Here we use the simple property that $K_{\sin}(x, x) = 1$. This shows that the limiting empirical probability density of the eigenvalues is the semicircle law, the same as that of the GUE random matrix. On the other hand,
\begin{equation}
  K_{\interpolating}(x, x; a) = \frac{-\sqrt{\pi}}{2} \Li_{1/2}(-a^{-1}),
\end{equation}
where $\Li_{1/2}$ is the polylogarithm \cite[25.12.11]{Boisvert-Clark-Lozier-Olver10}. Hence if $q = e^{-c/n}$,
\begin{equation} \label{eq:limiting_global_q_to_1}
  \lim_{n \to \infty} 2\sqrt{n} \rho_n(2\sqrt{n} x) = \frac{-1}{\sqrt{\pi c}} \Li_{1/2}(e^{-cx^2} - e^{c(1 - x^2)}).
\end{equation}
This limiting distribution on the right-hand side of \eqref{eq:limiting_global_q_to_1} is supported on $\realR$, but as $c \to +\infty$, it converges to the semicircle law on the right-hand side of \eqref{eq:semi_circle_law} which is supported on $[-1, 1]$. The limiting empirical probability density function \eqref{eq:limiting_global_q_to_1} agrees with \cite[Formula (8)]{Dean-Le_Doussal-Majumdar-Schehr15}. The asymptotics of $\Li_{1/2}$ can be found in \cite{Wood92}.

\subsection{Generalizations and related models}

The most interesting feature of the MNS model is that its rightmost particle has a similar distribution to the edge particle of several interacting particle models related to Kardar--Parisi--Zhang (KPZ) universality class. In fact, the similarity is not only at the level of the limiting distribution, but also at the level of the algebraic structure for the finite systems. However, this similarity will be clear only after some technical results are established, so we refer the reader to Section \ref{sec:relations} for detail. It is also worth noticing that the very recent preprint \cite{Cunden-Mezzadri-OConnell17} suggests other random matrix models analogous to the MNS random matrix model. Below we discuss the dynamical generalization of the MNS model and compare it with the nonintersecting Brownian motions on a circle.

\subsubsection{Relation to time-periodic Ornstein--Uhlenbeck (OU) processes and the multi-time correlations} \label{subsubsec:multi-time_OU}

It was first noticed by Johansson \cite[Section 1.2]{Johansson07} that the MNS model has an interpretation in terms of time-periodic nonintersecting paths. Our presentation of the relation to Ornstein--Uhlenbeck (OU) process and the multi-time correlations is based on the recent preprint \cite{Le_Doussal-Majumdar-Schehr17} and the physical concepts are explained therein.

The imaginary time propagator of the harmonic oscillator, or more precisely, the particle with Hamiltonian \eqref{eq:Hamiltonian_normalized}, is, by letting with $\hbar = 1$, $m = 1/2$ and $\omega = 1$ in \cite[Formula (17)]{Le_Doussal-Majumdar-Schehr17},
\begin{equation} \label{eq:imaginary_time_prop}
  G(y, \tau \mid x, 0) = \sum^{\infty}_{k = 0} \varphi_k(x) \varphi_k(y) e^{-k\tau},
\end{equation}
where $\tau$ is the imaginary time. Consider the OU process defined by the stochastic differential equation
\begin{equation} \label{eq:OU_proc}
  dx(t) = -x(t) dt + \sqrt{2} dW(t),
\end{equation}
where $W(t)$ is the Wiener process. (Note that our OU process differs from that defined by \cite[Formula (1)]{Le_Doussal-Majumdar-Schehr17} by the choice of constants in the stochastic differential equation.) The imaginary time propagator in \eqref{eq:imaginary_time_prop} is equal to the OU propagator up to a conjugation, see \cite[Formula (19)]{Le_Doussal-Majumdar-Schehr17}. Hence free fermions in a quadratic potential well are related to the nonintersecting OU processes, due to the analogy between the Slater determinant for the former and the Karlin--McGregor formula for the latter. In particular, the ground state of free fermions in a quadratic potential has the same probability density function as that of the one-time distribution of the stationary nonintersecting OU processes. That is, the limit of nonintersecting OU processes starting from time $-M$ and ending at time $M$ as $M \to \infty$ and both the starting and ending positions are close to the origin, since their probability density functions are both time invariant and identical to that of the eigenvalues of a GUE random matrix, see \cite[Formulas (7) and (62)]{Le_Doussal-Majumdar-Schehr17}. 

A key observation in \cite{Le_Doussal-Majumdar-Schehr17} is that the probability density function of the MNS model, or more precisely, the density $P_n(x_1, \dotsc, x_n)$ defined by \eqref{eq:Hamiltonian_normalized}--\eqref{eq:density_n_particle}, is the same as the stationary distribution of the $n$ particles in nonintersecting OU processes defined in \eqref{eq:OU_proc} with time-periodic boundary condition and the period $\beta = 1/(\kappa T) = 1/T$, see Figure \ref{fig:OU_paths}. To explain the stationary distribution, we consider the OU processes $x_1(t), \dotsc, x_n(t)$, such that they are conditioned not to intersect during time $[0, \beta]$, and they satisfy $x_k(0) = x_k(\beta) = y_k$. Suppose $y_1, \dotsc, y_n$ has a joint probability distribution $F$. Then for any $\tau \in (0, \beta)$, $x_1(\tau), \dotsc, x_n(\tau)$ has a joint probability distribution $F_{\tau}$ that depends on $\tau$ and $F$. By explicit computation we verify that $F_{\tau} = F$ for all $\tau \in (0, \beta)$ if and only if $F$ has the probability density function $P_n(y_1, \dotsc, y_n)$ given in \eqref{eq:density_n_particle}. Hence we claim that the distribution of the free fermions at temperature $T$ given by \eqref{eq:density_n_particle} is the stationary distribution of nonintersecting OU processes with time period $1/T$. Also we call the nonintersecting OU processes with time period $1/T$ stationary if its marginal distribution at time $0$ is given by $P_n$ in \eqref{eq:density_n_particle}.

\begin{figure}[htb]
  \centering
  \includegraphics{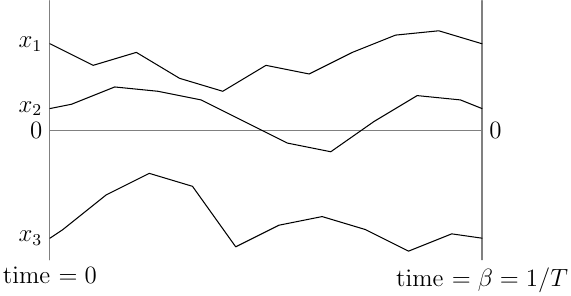}
  \caption{Schematic paths for three particles in nonintersecting OU processes with time period $\beta = 1/T$.}
  \label{fig:OU_paths}
\end{figure}

As a quantum mechanical ensemble, we can consider the dynamics of the MNS model. As often happens, the dynamics of the MNS model along imaginary time is mathematically easier. In \cite{Le_Doussal-Majumdar-Schehr17}, the multi-time joint probability density function of the MNS model along imaginary time is derived, and also the multi-time correlation functions along imaginary time. To be precise, suppose that $\tau_1 < \tau_2 < \dotsb < \tau_m$ are in the interval $[0, \beta)$ and they denote the imaginary times, the multi-time joint probability density function is obtained in \cite[Formula (79)]{Le_Doussal-Majumdar-Schehr17} and the multi-time correlation functions are obtained in \cite[Formula (83)]{Le_Doussal-Majumdar-Schehr17}. Moreover, the multi-time distribution of the stationary nonintersecting OU processes with time period $1/T$ is the same as that of the MNS model along imaginary time, see \cite[Section VII.A]{Le_Doussal-Majumdar-Schehr17}.
\begin{prop} \cite[Formulas (79) and (83), and Section VII.A]{Le_Doussal-Majumdar-Schehr17} \label{prop:multi}
  \begin{itemize}
  \item 
    Let $n$ free fermions be in the quadratic potential well, defined by the Hamiltonian \eqref{eq:Hamiltonian_normalized}, at temperature $T > 0$. Suppose $0 \leq \tau_1, \tau_2, \dotsc, \tau_m < \beta = 1/T$. Then the joint probability density of the fermions at imaginary times $\tau_1, \dotsc, \tau_m$ is, if $\tau_1 < \tau_2 < \dotsb < \tau_m$,
    \begin{multline} \label{eq:multi-time_jpdf}
      P_n(\x^{(1)}, \dotsc, \x^{(m)}) = \frac{q^{n/2}}{Z_n(q)} \left[ \prod^{m - 1}_{l = 1} \det \left( G(x^{(l + 1)}_j, \tau_{l + 1} - \tau_l \mid x^{(l)}_k, 0) \right)^n_{j, k = 1} \right] \\
      \times \det \left( G(x^{(1)}_j, \beta - (\tau_m - \tau_1) \mid x^{(m)}_k, 0) \right)^n_{j, k = 1},
    \end{multline}
    where $\x^{(l)} = (x^{(l)}_1, \dotsc, x^{(l)}_n)$ are the positions of the fermions at time $\tau_l$; the multi-time correlation function of the fermions at imaginary times $\tau_1, \dotsc, \tau_m$ is
    \begin{equation} \label{eq:multi-time_corr_func}
      R^{(m)}_n(x_1, \dotsc, x_m; \tau_1, \dotsc, \tau_m) = \frac{q^{n/2}}{Z_n(q)} \sum_{0 \leq k_1 < k_2 < \dotsb < k_n} \det \left( K_{k_1, \dotsc, k_n}(x_i, x_j; \tau_i, \tau_j) \right)^m_{i, j = 1},
    \end{equation}
    where the kernel function $K_{k_1, \dotsc, k_n}(x_i, x_j; \tau_i, \tau_j)$ will be defined in \eqref{eq:multi_time_corr_kernel} in Section \ref{sec:multi_time}.
  \item
    Let $x_1(t), \dotsc, x_n(t)$ be $n$ independent OU processes defined in \eqref{eq:OU_proc}. Condition them to be nonintersecting over time $[0, \beta = 1/T]$, and $x_k(0) = x_k(\beta) = y_k$, with the positions $y_1, \dotsc, y_n$ be random variables with joint probability density function $P_n(y_1, \dotsc, y_n)$ defined in \eqref{eq:density_n_particle}. Then the joint probability density function of the particles at times $\tau_1, \tau_2, \dotsc, \tau_m \in [0, \beta)$ is given by \eqref{eq:multi-time_jpdf} if $\tau_1 < \tau_2 < \dotsb < \tau_m$, and the multi-time correlation function is given by \eqref{eq:multi-time_corr_func}.
  \end{itemize}
\end{prop}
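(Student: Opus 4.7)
The plan is to derive the multi-time formulas directly from the quantum-mechanical thermal trace, and then to show that the stationary nonintersecting OU interpretation produces the same object via the Karlin--McGregor formula.

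For the first assertion, I would start from the thermal density matrix $\rho = Z_n(q)^{-1} e^{-\beta \mathbf{H}_n}$ on the antisymmetrized $n$-particle Hilbert space, with $\beta = 1/T$ and $\mathbf{H}_n = \sum_{i=1}^n H^{(i)}$ a sum of single-particle Hamiltonians \eqref{eq:Hamiltonian_normalized}. Inserting antisymmetric position projectors at imaginary times $0 \le \tau_1 < \dotsb < \tau_m < \beta$ inside $\Tr \rho$, using cyclicity of the trace to rearrange the wrap-around propagator so that its imaginary time equals $\beta - (\tau_m - \tau_1)$, and converting each single-particle propagator sandwiched between antisymmetrization operators into a Slater determinant in $G$ via the Cauchy--Binet identity, reproduces the product-of-determinants structure of \eqref{eq:multi-time_jpdf}. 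The prefactor $q^{n/2}$ accounts for the $n$-fold zero-point energy that was absorbed out of $G$ in \eqref{eq:imaginary_time_prop}.

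To obtain \eqref{eq:multi-time_corr_func}, I would plug the spectral expansion \eqref{eq:imaginary_time_prop} into every determinant in \eqref{eq:multi-time_jpdf}, expand each determinant via Cauchy--Binet as a sum over strictly increasing multi-indices, and integrate out the $n-1$ unobserved positions at each of the $m$ time slices using repeated applications of Andr\'{e}ief's identity together with orthonormality of $\{\varphi_k\}$. The cyclic structure forces the multi-indices around the loop to coincide, leaving a single sum over one ordered tuple $(k_1 < \dotsb < k_n)$; an Eynard--Mehta-type reorganisation then collapses the remaining factors into the single $m \times m$ determinant of $K_{k_1, \dotsc, k_n}$, with the wrap-around determinant in \eqref{eq:multi-time_jpdf} providing an additional periodic contribution to this kernel when $\tau_i > \tau_j$.

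For the second assertion I would combine the Karlin--McGregor formula with the Doob ground-state conjugation: the ground state $\varphi_0(x) \propto e^{-x^2/4}$ satisfies $\varphi_0^{-1}(H - \tfrac{1}{2})\varphi_0 = -L_{\mathrm{OU}}$, so that the OU propagator factors as $p_{\mathrm{OU}}(y, \tau \mid x, 0) = (\varphi_0(y)/\varphi_0(x))\, G(y, \tau \mid x, 0)$. Karlin--McGregor writes the nonintersecting multi-time density as a cyclic product of determinants in $p_{\mathrm{OU}}$; the $\varphi_0$ ratios telescope around the loop. After substituting the Slater-determinant expansion of $P_n(\mathbf{y})$ as the stationary marginal at the identified times $0$ and $\beta$, I would integrate out the endpoint $\mathbf{y}$ using Andr\'{e}ief's identity; the semigroup property of $G$ then fuses the two propagators meeting at $\mathbf{y}$ into a single determinant of total imaginary time $\beta - (\tau_m - \tau_1)$, exactly reproducing \eqref{eq:multi-time_jpdf}, from which \eqref{eq:multi-time_corr_func} follows automatically.

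The main technical obstacle I foresee is the bookkeeping of signs and of the wrap-around contribution in the extended kernel $K_{k_1, \dotsc, k_n}(x_i, x_j; \tau_i, \tau_j)$: for $\tau_i > \tau_j$ the kernel acquires an additional periodic piece from the wrap-around determinant in \eqref{eq:multi-time_jpdf}, and the alternating signs from the Cauchy--Binet expansions must be combined carefully to match the convention encoded in the definition \eqref{eq:multi_time_corr_kernel}.
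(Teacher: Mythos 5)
There is nothing in the paper to compare your argument against: Proposition \ref{prop:multi} is not proved here but imported from \cite[Formulas (79) and (83), and Section VII.A]{Le_Doussal-Majumdar-Schehr17}, and Section \ref{sec:multi_time} simply takes the per-eigenstate formulas of that reference as input. Your sketch is therefore a self-contained reconstruction, and its two halves are sound in outline and follow the standard route: for \eqref{eq:multi-time_jpdf}, slicing the thermal trace at the imaginary times, using cyclicity to produce the wrap-around propagator of duration $\beta-(\tau_m-\tau_1)$, and writing each free-fermion propagator between antisymmetric states as a determinant of single-particle $G$'s, with the $q^{n/2}$ correctly identified as the zero-point energy removed from $G$ in \eqref{eq:imaginary_time_prop}; for the OU half, the conjugation $\varphi_0^{-1}(H-\tfrac12)\varphi_0=-L_{\mathrm{OU}}$ is correct, the $\varphi_0$-ratios do telescope around the loop, the Karlin--McGregor denominator $\det\bigl(p_\beta(y_j,y_k)\bigr)$ cancels against the Slater expansion of $P_n(\mathbf{y})$ (which is proportional to $\det\bigl(G(y_j,\beta\mid y_k,0)\bigr)$ by Cauchy--Binet), and Andr\'eief plus the semigroup property fuse the two endpoint propagators into the wrap-around determinant.

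Two points deserve care. First, in your derivation of \eqref{eq:multi-time_corr_func} the step ``the cyclic structure forces the multi-indices around the loop to coincide'' is only immediate when entire time slices are integrated out; with one observed point retained at each slice it does not hold term by term, and the reduction to a single sum over $(k_1<\dotsb<k_n)$ is really the content of the Eynard--Mehta/Wick step you delegate to at the end. The cleaner route, and the one used in \cite{Le_Doussal-Majumdar-Schehr17} and echoed in \eqref{eq:m-corr_decomp}, is to insert the many-body eigenbasis at the wrap-around point of the trace, so that the thermal multi-time correlation is the Boltzmann-weighted sum over eigenstates of the fixed-eigenstate correlations, and then apply Wick's theorem within one Slater state to get $\det\bigl(K_{k_1,\dotsc,k_n}\bigr)$; note that a correct derivation produces the weight $q^{k_1+\dotsb+k_n}$ appearing in \eqref{eq:m-corr_decomp}, which is evidently omitted by typo in \eqref{eq:multi-time_corr_func} as printed. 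Second, in the paper's convention \eqref{eq:defn_E}--\eqref{eq:multi_time_corr_kernel} the subtracted propagator $E$ enters when $\tau_i<\tau_j$, not $\tau_i>\tau_j$ as you wrote; since the $m\times m$ determinant is invariant under transposition this is only a convention mismatch (which you flagged), but the extra term is the usual Eynard--Mehta subtraction of the forward transition kernel between the two slices rather than a contribution of the wrap-around determinant specifically.
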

Here we remark that since the finite-temperature Green's function for a quantum system at temperature $T > 0$ is (anti)periodic in imaginary time with period $\beta = 1/(\kappa T)$ (see \cite[Chapter 7]{Fetter-Walecka12} for an explanation), it suffices to consider multi-time joint probability density function and correlation functions at imaginary times in $[0, \beta)$. 

We can simplify the multi-time correlation function \eqref{eq:multi-time_corr_func} in to a form analogous to \eqref{eq:formula_for_R^m}, and derive a formula for the multi-time gap probability that is analogous to \eqref{eq:general_gap_prob}. Before giving our results, we introduce some notations. Define
\begin{equation} \label{eq:defn_E}
  E(x, y; \tau, \sigma) =
  \begin{cases} \displaystyle
    0 & \text{if $\tau \geq \sigma$}, \\
    \begin{aligned}[b]
      & \sum^{\infty}_{k = 0} \varphi_k(x) \varphi_k(y) e^{k(\tau - \sigma)} = \frac{1}{\sqrt{2\pi(1 - e^{2(\tau - \sigma)})}}  \\ & \times \exp \left( \frac{-(1 + e^{2(\tau - \sigma)})(x^2 + y^2) + 4e^{\tau - \sigma} xy}{4(1 - e^{2(\tau - \sigma)}} \right) 
    \end{aligned}
     & \text{if $\tau < \sigma$}.
  \end{cases}
\end{equation}
Note that for $\tau < \sigma$, $E(x, y; \tau, \sigma) = G(y, \tau - \sigma \mid x, 0)$, the imaginary time propagator defined in \eqref{eq:imaginary_time_prop}. Then define
\begin{equation} \label{eq:kernel_K(xytausigmazq)}
  K(x, y; \tau, \sigma; z; q) = \sum^{\infty}_{k = 0} \frac{q^k z}{1 + q^k z} \varphi_k(x) \varphi_k(y) e^{k(\tau - \sigma)} - E(x, y; \tau, \sigma),
\end{equation}
of which the function $K(x, y; z; q)$ in \eqref{eq:kernel_for_each_z} is the $\tau = \sigma$ specialization. Furthermore, we define the integral operator $\K(\tau_1, \dotsc, \tau_m; z; q)$ on $L^2(\realR \times \{ 1, \dotsc, m \})$ whose kernel is represented by an $m \times m$ matrix $(K(x_i, x_j; \tau_i, \tau_j; z; q))^m_{i, j = 1}$, where $K(x, y; \tau, \sigma; z; q)$ is defined in \eqref{eq:kernel_K(xytausigmazq)}. To be concrete, for a function $f$ on $\realR \times \{ 1, \dotsc, m \}$, we denote it by $(f(x; 1), \dotsc, f(x; m))$ where $f(x; k)$ is a function on $\realR \times \{ k \}$, and have
\begin{equation} \label{eq:defn_final_K_op}
  (\K(\tau_1, \dotsc, \tau_m; z; q)f)(x; k) = \sum^m_{j = 1} \int_{\realR} K(x, y; \tau_k, \tau_j; z; q) f(y; j) dy.
\end{equation}
At last if $A_1, \dotsc, A_m \subseteq \realR$ are measurable sets, we denote $\chi_{A_1, \dotsc, A_m}$, the projection operator on $L^2(\realR \times \{ 1, 2, \dotsc, m \})$, such that
\begin{equation} \label{eq:proj_op_A}
  (\chi_{A_1, \dotsc, A_m}f)(x; k) =
  \begin{cases}
    f(x; k) & \text{if $x \in A_k$ for $k = 1, \dotsc, m$}, \\
    0 & \text{otherwise}.
  \end{cases}
\end{equation}
Our result is as follows:
\begin{thm} \label{thm:multi_formulas}
  Consider either the $n$ free fermions at temperature $1/T$ or the $n$-particle time-periodic nonintersecting OU processes with period $1/T$ that is defined in Proposition \ref{prop:multi}. Let $\tau_1, \dotsc, \tau_m \in [0, 1/T)$ be either the imaginary times for free fermions or the times for OU processes.
  \begin{enumerate}[label=(\alph*)]
  \item \label{enu:thm:multi_formulas_a}
    The multi-time $m$-correlation function \eqref{eq:multi-time_corr_func} at $\tau_1, \dotsc, \tau_m \in (0, 1/T)$, as stated in Proposition \ref{prop:multi}, can be written as
    \begin{equation} \label{eq:contour_integral_repr_m-corr}
      R^{(m)}_n(x_1, \dotsc, x_m; \tau_1, \dotsc, \tau_m) = \frac{1}{2\pi i} \oint_0 F(z) \det \left( K(x_i, x_j; \tau_i, \tau_j; z; q) \right)^m_{i, j = 1} dz,
    \end{equation}
    where $F(z)$ is defined in \eqref{de:defn_F(z)} and $K(x_i, x_j; \tau_i, \tau_j; z; q)$ is defined in \eqref{eq:kernel_K(xytausigmazq)}.
  \item \label{enu:thm:multi_formulas_b}
    Suppose $\tau_1, \dotsc, \tau_m$ are distinct. Let $A_1, \dotsc, A_m \subseteq \realR$ be measurable sets. The gap probability that at imaginary time $\tau_k$, all fermions are in $A_k$, or that at time $\tau_k$, all particles in OU processes are in $A_k$, is given by
    \begin{equation}
      \Prob_n(A_1, \dotsc, A_m; \tau_1, \dotsc, \tau_m) = \frac{1}{2\pi i} \oint_0 F(z) \det(I - \K(\tau_1, \dotsc, \tau_m; z; q) \chi_{A^c_1, \dotsc, A^c_m}),
    \end{equation}
    where the operators $\K(\tau_1, \dotsc, \tau_m; z; q)$ and $\chi_{A^c_1, \dotsc, A^c_m}$ are defined in \eqref{eq:defn_final_K_op} and \eqref{eq:proj_op_A}.
  \end{enumerate}
\end{thm}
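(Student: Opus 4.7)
My plan is to reduce both parts to sums over eigenstates $S = \{k_1 < \dotsb < k_n\} \subseteq \Z_{\geq 0}$ weighted by the Boltzmann factor $q^{\sum S}$, then convert these sums into contour integrals via the generating function $(-z;q)_\infty = \prod_{k \geq 0}(1+q^kz)$. The prefactor $F(z)$ from \eqref{de:defn_F(z)} is engineered so that $\frac{1}{2\pi i}\oint_0 F(z)(\,\cdot\,)\frac{dz}{z}$ extracts the $z^n$ coefficient and cancels the normalisation $\frac{q^{n/2}}{Z_n(q)} = \frac{(q;q)_n}{q^{n(n-1)/2}}$ appearing in \eqref{eq:multi-time_corr_func}. (I read the integrals in the statement of Theorem \ref{thm:multi_formulas} as involving $\frac{dz}{z}$, by analogy with Theorem \ref{thm:algebraic}.)

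\textbf{Part (a).} Proposition \ref{prop:multi} already expresses $R^{(m)}_n$ as $\frac{q^{n/2}}{Z_n(q)}\sum_S q^{\sum S}\det(K_{k_1,\dotsc,k_n}(x_i,x_j;\tau_i,\tau_j))_{i,j=1}^m$. The kernel $K(\,\cdot\,;z;q)$ of \eqref{eq:kernel_K(xytausigmazq)} is the Fermi--Dirac weighted analogue of $K_{k_1,\dotsc,k_n}$: mode $k$ carries weight $\frac{q^kz}{1+q^kz}$ in place of the indicator of $k\in S$, while the propagator correction $E(x,y;\tau,\sigma)$ is unchanged. The algebraic identity to establish is
\[
\sum_{0 \leq k_1 < \dotsb < k_n} q^{k_1+\dotsb+k_n}\det\bigl(K_{k_1,\dotsc,k_n}(x_i,x_j;\tau_i,\tau_j)\bigr)_{i,j=1}^m = [z^n]\,(-z;q)_\infty\det\bigl(K(x_i,x_j;\tau_i,\tau_j;z;q)\bigr)_{i,j=1}^m.
\]
I would prove it by expanding the right-hand determinant using row-multilinearity in the spectral decomposition of $K$: each row either picks a mode $k$ (contributing a rank-one row with scalar factor $\tfrac{q^kz}{1+q^kz}\varphi_k(x_i)e^{k\tau_i}$) or retains the $-E$ term, and repeated modes kill the determinant. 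The elementary factorisation $(-z;q)_\infty \prod_\ell \frac{q^{k_\ell}z}{1+q^{k_\ell}z} = z^r q^{k_1+\dotsb+k_r}\prod_{k \notin \{k_1,\dotsc,k_r\}}(1+q^kz)$ then matches each mode-selection term on the right to the sum over subsets $S \supseteq \{k_1,\dotsc,k_r\}$ of size $n$ on the left. Cauchy's integral formula converts $[z^n]$ to the contour integral.

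\textbf{Part (b).} The fixed-eigenstate gap probability is obtained by applying the Eynard--Mehta theorem to the joint density \eqref{eq:multi-time_jpdf}, yielding
\[
\Prob_S(A_1,\dotsc,A_m;\tau_1,\dotsc,\tau_m) = \det\bigl(I - \K_S\,\chi_{A^c_1,\dotsc,A^c_m}\bigr),
\]
where $\K_S$ is built from the matrix kernel $(K_{k_1,\dotsc,k_n}(x_i,x_j;\tau_i,\tau_j))_{i,j=1}^m$ in analogy with \eqref{eq:defn_final_K_op}. The Fredholm determinant version of the part (a) identity,
\[
\sum_S z^{|S|} q^{\sum S}\det(I - \K_S\,\chi) = (-z;q)_\infty\det\bigl(I - \K(\tau_1,\dotsc,\tau_m;z;q)\,\chi\bigr),
\]
then follows by invoking the part (a) identity order-by-order in the Fredholm series $\det(I-\mathcal{L}) = \sum_r \frac{(-1)^r}{r!}\int\det(L(x_i,x_j))_{i,j=1}^r$, since each determinant in the expansion is precisely of the form handled in part (a). Summing over $S$ with the Boltzmann weights and extracting $[z^n]$ via $F(z)$ yields the contour integral.

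\textbf{Main obstacle.} The most delicate step is the fixed-$S$ Eynard--Mehta formula needed in part (b). The density \eqref{eq:multi-time_jpdf} is not quite in standard Eynard--Mehta form: its endpoints are integrated against the canonical distribution rather than being deterministic, and the periodic ``gluing'' factor $\det(G(x^{(1)}_j, \beta - (\tau_m-\tau_1)\,|\,x^{(m)}_k, 0))$ wraps the first and last time slices together in a cyclic manner. Diagonalising this gluing factor via the spectral expansion of $G$ in the Hermite functions $\varphi_k$, and verifying that the Eynard--Mehta inverse reproduces precisely the projection-minus-propagator structure of $K_{k_1,\dotsc,k_n}$ prescribed in \eqref{eq:multi_time_corr_kernel}, will require careful Gaussian integration to keep track of the inverse Gram matrix. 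Once this is in place, the rest of the proof is the algebraic generating-function bookkeeping outlined above.
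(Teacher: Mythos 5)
Your proposal is correct and follows essentially the same route as the paper: decompose over eigenstates with Boltzmann weights, use the fixed-eigenstate extended-kernel formulas, expand the $m\times m$ determinants multilinearly over mode choices, and match coefficients via the identity that the $z^n$-coefficient of $(-z;q)_\infty\prod_\ell \frac{q^{k_\ell}z}{1+q^{k_\ell}z}$ equals $\sum_{S\supseteq\{k_1,\dotsc,k_r\},\,|S|=n}q^{\sum S}$ (the paper's $C_{j_1,\dotsc,j_m}$ in \eqref{eq:explicit_C_j_1_j_m}, proved there by a $q$-binomial computation), then push the same identity term by term through the Fredholm series for part (b), with $F(z)$ and Cauchy's formula supplying the $dz/z$ extraction exactly as you read it. The only real divergence is your flagged ``main obstacle'': the paper does not re-derive the fixed-eigenstate multi-time correlation kernel and gap-probability Fredholm determinant via Eynard--Mehta, but imports them from \cite{Le_Doussal-Majumdar-Schehr17} (their formulas (50)--(52) and (60)--(61)), and in any case, once you condition on a single eigenstate the periodic gluing factor disappears (the initial and final time slices carry the same Slater determinant), so that step is considerably more routine than your plan suggests.
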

We note that if $\tau_1 = \dotsb = \tau_m$, Theorem \ref{thm:multi_formulas}\ref{enu:thm:multi_formulas_a} degenerates to Theorem \ref{thm:algebraic}\ref{enu:thm:algebraic_2}. Hence we can compute the limits of the multi-time correlation functions and gap probability, which will be done in a subsequent publication.

\subsubsection{Nonintersecting Brownian motions on a circle}

The $m$-correlation function formula \eqref{eq:formula_for_R^m} for the MNS model is  given by a contour integral whose integrand is a $m \times m$ determinant depending on a formal correlation kernel. This feature occurs in another model, nonintersecting Brownian motions on a circle with a fixed winding number, studied by the authors in \cite{Liechty-Wang14-2}. To see the analogy, we recall that the counterpart $m$-correlation function in \cite{Liechty-Wang14-2} is $(R^{(n)}_{0 \to T})_{\omega}(a^{(1)}_1, \dotsc, a^{(1)}_{k_1}; \dotsc; a^{(m)}_1, \dotsc, a^{(m)}_{k_m}; t_1, \dotsc, t_m)$ defined in \cite[Formulas (36) and (136)]{Liechty-Wang14-2}, where $\omega$ is the winding number. (The correlation function $(R^{(n)}_{0 \to T})_{\omega}$ is a multi-time correlation function, more analogous to $R^{(m)}(x_1, \dotsc, x_m; \tau_1, \dotsc, \tau_m)$ defined in \eqref{eq:multi-time_corr_func} and re-expressed in \eqref{eq:contour_integral_repr_m-corr}, to which $R^{(m)}_n(x_1, \dotsc, x_m)$ defined in \eqref{eq:defn_corr_func} is a special case.) By \cite[Formula (135)]{Liechty-Wang14-2}, we have that\footnote{In \cite[Formula (135)]{Liechty-Wang14-2}, the symbol $o$ in the third line should be $\omega$.}
\begin{multline} \label{eq:NIBM_winding_k_corr}
  (R^{(n)}_{0 \to T})_{\omega}(a^{(1)}_1, \dotsc, a^{(1)}_{k_1}; \dotsc; a^{(m)}_1, \dotsc, a^{(m)}_{k_m}; t_1, \dotsc, t_m) = \\
  \frac{(-1)^{n + 1}}{2\pi i} \oint_0 R^{(n)}_{0 \to T} \left( a^{(1)}_1, \dotsc, a^{(1)}_{k_1}; \dotsc; a^{(m)}_1, \dotsc, a^{(m)}_{k_m}; t_1, \dotsc, t_m; \frac{\log z}{2\pi i} \right) \frac{dz}{z^{\omega + 1}}.
\end{multline}
Then by \cite[Formula (116)]{Liechty-Wang14-2}
\begin{multline} \label{eq:NIBM_k_corr}
  R^{(n)}_{0 \to T}(a^{(1)}_1, \dotsc, a^{(1)}_{k_1}; \dotsc; a^{(m)}_1, \dotsc, a^{(m)}_{k_m}; t_1, \dotsc, t_m; \tau) = \\
  \det \left( K_{t_i, t_j}(a^{(i)}_{l_i}, a^{(j)}_{l'_j}) \right)_{i, j = 1, \dotsc, m, l_i = 1, \dotsc, k_i, l'_j = 1, \dotsc, k_j},
\end{multline}
where $K_{t_i, t_j}(a^{(i)}_{l_i}, a^{(j)}_{l'_j})$ are given in \cite[Formula (117)]{Liechty-Wang14-2} and depend on $\tau$, see \cite[Remark 2.2]{Liechty-Wang14-2}. We note that the right-hand side of \eqref{eq:NIBM_winding_k_corr} is a holomorphic function since $K_{t_i, t_j}(a^{(i)}_{l_i}, a^{(j)}_{l'_j})$ in \eqref{eq:NIBM_k_corr} is analytic in $e^{2\pi i \tau}$.

The formal similarity of correlation functions between the MNS model and the nonintersecting Brownian motions on a circle is intuitively explained by their periodicities. The MNS model is related to the nonintersecting OU processes with time periodicity, see Section \ref{subsubsec:multi-time_OU}, so it is comparable to the nonintersecting Brownian motions with space periodicity.

Another similarity of the two models is as follows. The grand canonical ensemble of the MNS model, which is the superposition of (the canonical ensemble of) the MNS model according to the Boltzmann distribution, is a determinantal process \cite{Dean-Le_Doussal-Majumdar-Schehr15}. Its counterpart, the nonintersecting Brownian motions on a circle with free winding number, also forms a determinantal process \cite[Section 2.3]{Liechty-Wang14-2}.

\subsection*{Outline}

In Section \ref{sec:algebraic} we prove Theorem \ref{thm:algebraic}. 
In Sections \ref{sec:proof_edge} and \ref{sec:proof_bulk} we prove Theorems \ref{thm:edge} and \ref{thm:bulk} respectively. In Section \ref{sec:relations} we discuss some particle models related to KPZ universality class. In Section \ref{sec:multi_time} we prove Theorem \ref{thm:multi_formulas} for the dynamic generalization of the MNS model. In Appendix \ref{sec:relation_to_MNS_RM} we present a proof of Proposition \ref{prop:Moshe-Neuberger-Shapiro94}.

\subsection*{Acknowledgments}

We thank Gr\'{e}gory Schehr for helpful discussion on the dynamics of the MNS model and Jacek Grela for comments on literature. We also thank Ivan Corwin for calling our attention to mistakes in Section \ref{sec:relations} in an earlier version.

\section{Proof of Theorem \ref{thm:algebraic}} \label{sec:algebraic}
Here we present a proof of Theorem \ref{thm:algebraic}\ref{enu:thm:algebraic_1} which is independent of known results for the grand canonical ensemble. Then Theorem \ref{thm:algebraic}\ref{enu:thm:algebraic_2} follows from the general theory of point processes, and we present a short proof in the case $m=2$. The extension to general $m$ is straightforward. 

\subsection{Gap probability} \label{subsec:gap_prob}

Let $A \subseteq \realR$ be a measurable set. We consider the probability that all the $n$ particles are in $A$, which we denote by $\Prob_n(x_1, \dotsc, x_n \in A)$. We have
\begin{equation} \label{eq:gap_prob_in_each_state}
  \begin{split}
     \Prob_n(x_1, \dotsc, x_n \in A) = {}& \int_A \dotsi \int_A P_n(x_1, \dotsc, x_n) dx_1 \dotsm dx_n \\
    = {}& \frac{q^{n/2}}{Z_n(q)} \int_A \dotsi \int_A \sum_{0 \leq k_1 < k_2 < \dotsb < k_n} \lvert \Phi_{k_1, \dotsc, k_n}(x_1, \dotsc, x_n) \rvert^2 q^{k_1 + \dotsb + k_n} dx_1 \dotsm dx_n \\
    = {}& \frac{q^{n/2}}{Z_n(q)} \sum_{0 \leq k_1 < k_2 < \dotsb < k_n} \int_A \dotsi \int_A \lvert \Phi_{k_1, \dotsc, k_n}(x_1, \dotsc, x_n) \rvert^2 q^{k_1 + \dotsb + k_n} dx_1 \dotsm dx_n.
  \end{split}
\end{equation}
Note that by the \Andreief\ formula,
\begin{equation}
  \begin{split}
    & \int_A \dotsi \int_A
    \begin{vmatrix}
      \varphi_{k_1}(x_1) & \dots & \varphi_{k_1}(x_n) \\
      \vdots & & \vdots \\
      \varphi_{k_1}(x_1) & \dots & \varphi_{k_1}(x_n) \\
    \end{vmatrix}^2
    q^{k_1 + \dotsb + k_n} dx_1 \dotsm dx_n \\
    = {}& n! q^{k_1 + \dotsb + k_n} \det \left( \langle \varphi_{k_i}(x), \varphi_{k_j}(x) \rangle_A \right)^n_{i, j = 1} \\
    = {}& n! \det \left( \langle q^{k_i} \varphi_{k_i}(x), \varphi_{k_j}(x) \rangle_A \right)^n_{i, j = 1}, \\
  \end{split}
\end{equation}
where
\begin{equation}
  \langle f(x), g(x) \rangle_A = \int_A f(x)g(x) dx.
\end{equation}
Hence 
\begin{equation} \label{eq:gap_prob_in_product_series}
  \Prob_n(x_1, \dotsc, x_n \in A) = \frac{q^{n/2}}{Z_n(q)} \sum_{0 \leq k_1 < k_2 < \dotsb < k_n} \det \left( \langle q^{k_i} \varphi_{k_i}(x), \varphi_{k_j}(x) \rangle_A \right)^n_{i, j = 1}.
\end{equation}

Recall the integral operator $\K(z; q)$ defined in \eqref{eq:formula_for_R^m}. We now introduce another integral operator $\M(q)$ acting on $L^2(\realR)$, depending on the parameter $q\in (0,1)$. It is defined by (here $M(x, y; q)$ is identical to $G(y, \tau \mid x, 0)$ in \eqref{eq:imaginary_time_prop})
\begin{equation} \label{eq:kernel_M(q)}
  \M(q)(f)(x) = \int_{\realR} M(x, y; q) f(y) dy, \quad M(x, y; q) = \sum_{k = 0}^\infty q^k \varphi_k(x) \varphi_k(y).
\end{equation}
Let $A\subseteq \realR$ be a measurable set, and let $\chi_A$ be the projection onto $L^2(A)$. It is straightforward to check by definition that $\M(q)$ and $\K(z; q)$ are trace class operators for $0<q<1$, and then $\M(q) \chi_A$ and $\K(z; q) \chi_{A^c}$ are also trace class operators \cite{Simon05}. Hence the Fredholm determinants $\det(I + z \M(q) \chi_A)$ and $\det(I - \K(z; q) \chi_{A_c})$ are well defined. We have the following relation between $\M (q)$ and $\K (z; q)$.

\begin{lem}\label{prop:MK_identity}
  Let $q \in (0, 1)$. For any $z \in \compC$, and for any measurable $A\subseteq \realR$, the following identity holds:
  \begin{equation} \label{eq:MK_identity}
    (I+z \M(q) \chi_A) = (I + z\M(q)) (I - \K(z; q) \chi_{A^c}).
  \end{equation}
  Hence
  \begin{equation} \label{eq:MK_identity_det}
    \det(I + z \M(q) \chi_A) = \det(I + z\M(q)) \det(I - \K(z; q) \chi_{A^c}).
  \end{equation}
\end{lem}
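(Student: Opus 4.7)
The plan is to exploit the fact that the orthonormal basis $\{\varphi_k\}_{k=0}^\infty$ of $L^2(\realR)$ simultaneously diagonalizes $\M(q)$ and $\K(z; q)$. Directly from the kernel formulas \eqref{eq:kernel_M(q)} and \eqref{eq:kernel_for_each_z}, together with the orthonormality relation $\langle \varphi_k, \varphi_j \rangle = \delta_{kj}$, one reads off that $\M(q)\varphi_k = q^k \varphi_k$ and $\K(z; q)\varphi_k = \frac{q^k z}{1 + q^k z}\varphi_k$. In particular the two operators commute, and I can manipulate them as if they were scalars on each spectral subspace.

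The key step is the operator identity
\begin{equation*}
  (I + z\M(q))\,\K(z; q) = z\M(q),
\end{equation*}
which I would prove by the trivial scalar identity $(1 + q^k z)\cdot \frac{q^k z}{1+q^k z} = q^k z$ on each eigenspace, or equivalently by computing the kernel of the composition directly: the convolution $\int_\realR M(x, w; q) K(w, y; z; q)\,dw$ collapses via orthonormality of the $\varphi_k$ to $\sum_{k=0}^\infty \frac{q^{2k} z}{1 + q^k z}\varphi_k(x)\varphi_k(y)$, which when added to $K(x, y; z; q)$ yields $z M(x, y; q)$. The exponential decay of $q^k$ for $q \in (0,1)$ makes the interchange of summation and integration routine.

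Granted this identity, \eqref{eq:MK_identity} is one line of algebra:
\begin{equation*}
  (I + z\M(q))(I - \K(z; q)\chi_{A^c}) = I + z\M(q) - (I + z\M(q))\K(z; q)\chi_{A^c} = I + z\M(q) - z\M(q)\chi_{A^c} = I + z\M(q)\chi_A,
\end{equation*}
using $\chi_A + \chi_{A^c} = I$ at the last step. The determinant identity \eqref{eq:MK_identity_det} then follows from multiplicativity of the Fredholm determinant, $\det(AB) = \det(A)\det(B)$, valid whenever $A - I$ and $B - I$ are trace class, which is the case here since $z\M(q)$ and $\K(z; q)\chi_{A^c}$ have exponentially decaying singular values.

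The main obstacles are bookkeeping rather than conceptual: one should justify the trace class property at the outset (already cited to \cite{Simon05} in the setup), handle the countable set of ``resonant'' values $z = -q^{-k}$ where $\K(z; q)$ has a pole by either excluding them or invoking meromorphic continuation, and confirm that the spectral manipulation is legitimate when $\K(z; q)$ need not be self-adjoint for non-real $z$ (here commutativity with $\M(q)$ and diagonalization in the real basis $\{\varphi_k\}$ suffices). None of these is serious: the whole proof is essentially the scalar calculation $(1 + q^k z)\cdot \frac{q^k z}{1+q^k z} = q^k z$ lifted to the spectral decomposition.
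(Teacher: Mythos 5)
Your proof is correct, and it reaches the key relation by a genuinely different route than the paper. The paper works with the resolvent: it defines $\R(z;q)$ by $I-\R(z;q)=(I+z\M(q))^{-1}$, expands both $\R(z;q)$ and the kernel $K(x,y;z;q)$ as power series in $z$ (using $\M(q)^l=\M(q^l)$) to conclude $\K(z;q)=\R(z;q)$ for $\lvert z\rvert<1$, performs the same one-line algebra you do, and then removes the restriction $\lvert z\rvert<1$ by analytic continuation. You instead diagonalize $\M(q)$ and $\K(z;q)$ simultaneously in the Hermite basis and verify $(I+z\M(q))\K(z;q)=z\M(q)$ from the scalar identity $(1+q^kz)\frac{q^kz}{1+q^kz}=q^kz$; this is equivalent to the paper's relation $\K(z;q)=z\M(q)(I+z\M(q))^{-1}$, but it holds at once for every $z$ outside the pole set $\{-q^{-k}\}$, so no Neumann series and no continuation step are needed --- a more direct argument in this setting, with the determinant step resting on the same trace-class facts and multiplicativity of the Fredholm determinant. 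What the paper's resolvent formulation buys is portability: in Section \ref{sec:relations} the identical argument is rerun for contour-integral kernels where $\M(q)^k\neq\M(q^k)$ and there is no common eigenbasis, so the series/resolvent route is the one that generalizes (Lemma \ref{lem:contour_det}), whereas simultaneous diagonalization is special to the Hermite setting. One small slip in your alternative kernel verification: $\int_{\realR} M(x,w;q)K(w,y;z;q)\,dw$ is the kernel of $\M(q)\K(z;q)$, so it must be multiplied by $z$ before being added to $K(x,y;z;q)$ to produce $zM(x,y;q)$ (indeed $\frac{q^kz}{1+q^kz}+\frac{q^{2k}z^2}{1+q^kz}=q^kz$, whereas adding the unscaled convolution gives $\frac{q^kz(1+q^k)}{1+q^kz}$); your primary eigenspace computation is unaffected by this.
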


\begin{proof}
  Since the Hermite functions $\{\varphi_k(x)\}_{k=0}^\infty$ form an orthonormal basis for $L^2(\realR)$, it is easy to see that
  \begin{equation}\label{eq:powers_of_M}
    \M(q^k) = \M(q)^k.
  \end{equation}
We define the resolvent operator $\R(z; q)$ by
  \begin{equation} \label{eq:defn_R_inverse}
    I - \R(z; q) = (I + z\M(q))^{-1}.
  \end{equation}
  If $\lvert z \rvert < 1$, we have that $\R(z; q)$ is a well-defined integral operator and 
  \begin{equation}
    \R(z; q) = -\sum_{l=1}^{\infty} (-z \M(q))^l.
  \end{equation}
  Assuming for now that $\lvert z \lvert < 1$, and using the fact that the functions $\varphi_k(x)$ are uniformly bounded in $k$ and $x$ (see, e.g. \cite[22.14.17]{Abramowitz-Stegun64}), we have that uniformly for all $x, y \in \realR$
  \begin{equation} \label{eq:kernel_expanded}
    \begin{aligned}
      K(x, y; z; q) &= \sum^{\infty}_{k = 0} q^k z\varphi_k(x) \varphi_k(y) \sum_{l =0}^\infty (-1)^l z^l q^{l k} \\
      &= \sum_{l=1}^\infty (-1)^{l+1} z^l \sum_{k=0}^\infty q^{kl} \varphi_k(x) \varphi_k(y) \\
      &= \sum_{l=1}^\infty (-1)^{l+1} z^l M(x,y; q^l). \\
    \end{aligned}
  \end{equation}
  This implies the identity that
    \begin{equation} \label{eq:K_expressed_in_M}
    \K(z; q) = \R(z; q),
  \end{equation}
  for all $\lvert z \rvert < 1$.
Using the identity $\K(z; q) \chi_{A^c} = \R(z; q) \chi_{A^c}$ we find
  \begin{equation}\label{eq:inv_sub}
    \begin{aligned}
      (I + z \M(q))(I - \K(z; q) \chi_{A^c}) = {}& I + z \M(q) - \R(z; q) \chi_{A^c} -\M(q)\R(z; q) \chi_{A^c} \\
      = {}& I+z \M(q) \chi_A + (z\M(q) - \R(z; q) - \M(q) \R(z; q)) \chi_{A^c} \\
      = {}& I+z \M(q) \chi_A,
    \end{aligned}
  \end{equation}
  where in the last step we use $z\M(q) - \R(z; q) - \M(q) \R(z; q) = 0$, which is a consequence of \eqref{eq:defn_R_inverse}. Hence we prove \eqref{eq:MK_identity} in the case $\lvert z \rvert < 1$. Since the integral operator $\K(z; q)$ is well defined for all $z \in \compC$, by analytic continuation \eqref{eq:MK_identity} holds for all $z \in \compC$. 
\end{proof}

We expand the Fredholm determinant $\det(I + z \M(q)\chi_A)$ into a series of multiple integrals by \cite[Theorem 3.10]{Simon05}, and then simplify it by the Cauchy--Binet identity as follows.
\begin{equation}
  \begin{split}
    \det(I + z \M(q)\chi_A) = {}& 1 + \frac{z}{1!} \int_A M(x, x; q) dx + \frac{z^2}{2!} \int_A dx_1 \int_A dx_2 \det(M(x_i, x_j; q))^2_{i, j = 1} + \dotsb \\
    = {}& 1 + z \left( \sum_{0 \leq k_1} \langle q^{k_1} \varphi_{k_1}, \varphi_{k_1} \rangle_A \right) \\
    & + z^2 \left( \sum_{0 \leq k_1 < k_2} 
      \begin{vmatrix}
        \langle q^{k_1} \varphi_{k_1}, \varphi_{k_1} \rangle_A & \langle q^{k_1} \varphi_{k_1}, \varphi_{k_2} \rangle_A \\
        \langle q^{k_2} \varphi_{k_2}, \varphi_{k_1} \rangle_A & \langle q^{k_2} \varphi_{k_2}, \varphi_{k_2} \rangle_A
      \end{vmatrix} \right) \\
    & + z^3 \left( \sum_{0 \leq k_1 < k_2 < k_3} 
      \begin{vmatrix}
        \langle q^{k_1} \varphi_{k_1}, \varphi_{k_1} \rangle_A & \langle q^{k_1} \varphi_{k_1}, \varphi_{k_2} \rangle_A & \langle q^{k_1} \varphi_{k_1}, \varphi_{k_3} \rangle_A \\
        \langle q^{k_2} \varphi_{k_2}, \varphi_{k_1} \rangle_A & \langle q^{k_2} \varphi_{k_2}, \varphi_{k_2} \rangle_A & \langle q^{k_2} \varphi_{k_2}, \varphi_{k_3} \rangle_A \\
        \langle q^{k_3} \varphi_{k_3}, \varphi_{k_1} \rangle_A & \langle q^{k_3} \varphi_{k_3}, \varphi_{k_2} \rangle_A & \langle q^{k_3} \varphi_{k_3}, \varphi_{k_3} \rangle_A
      \end{vmatrix} \right) \\
    & + \dotsb.
  \end{split}
\end{equation}
With the help of \eqref{eq:explicit_Z_n(q)}, \eqref{eq:density_n_particle} and \eqref{eq:gap_prob_in_product_series} thus find
\begin{equation}
  \det(I + z \M(q)\chi_A)  =  1 + \sum^{\infty}_{n = 1} \frac{z^n}{q^{n(n - 1)/2} (q; q)_n} \Prob_n(x_1, \dotsc, x_n \in A),
\end{equation}
and arrive at the formula for any dimension $n$,
\begin{equation} \label{eq:contour_int_formula_right_most_particle1}
  \Prob_n(x_1, \dotsc, x_n \in A) = q^{n(n - 1)/2} (q; q)_n \frac{1}{2\pi i} \oint_0 \det(I + z \M(q)\chi_A) \frac{dz}{z^{n + 1}}.
\end{equation}
In order to do asymptotic analysis it is convenient to work with the operator $\K(z; q)$ rather than $\M(q)$. Since the operator $\M(q)$ is diagonalized by $\{\varphi_k\}$, the determinant is simple to compute:
\begin{equation} \label{eq:Fred_det_id_M(q)}
  \det(I + z \M(q)) = \prod^{\infty}_{k = 0} (1 + q^k z) = (-z; q)_{\infty}.
\end{equation}
Thus substituting \eqref{eq:MK_identity_det} and \eqref{eq:Fred_det_id_M(q)} into \eqref{eq:contour_int_formula_right_most_particle1}, we obtain the formula \eqref{eq:general_gap_prob} and prove Theorem \ref{thm:algebraic}\ref{enu:thm:algebraic_1}. In particular, when $A = (-\infty, s]$, \eqref{eq:rightmost_pt_as_gap_prob} implies
\begin{equation} \label{eq:contour_int_formula_right_most_particle2}
  \Prob_n(\max(x_1, \dotsc, x_n) \leq s) = q^{-n(n - 1)/2} (q; q)_n \frac{1}{2\pi i} \oint_0 \frac{(-z; q)_{\infty}}{z^n} \det(I - \K(z; q)\chi_{(s, \infty)})\frac{dz}{z}.
\end{equation}

\subsection{Correlation functions} \label{subsec:corr_func}
We now prove Theorem \ref{thm:algebraic}\ref{enu:thm:algebraic_2} assuming the result \ref{thm:algebraic}\ref{enu:thm:algebraic_1}. We present the proof of \eqref{eq:formula_for_R^m} for $m=2$, but the proof is nearly identical for any positive integer $m$. Fix $x_1, x_2\in \realR$ and $\Delta>0$, and introduce the notations
\begin{equation}
A_i^\Delta = [x_i, x_i+\Delta), \quad G_i^\Delta = \{ \textrm{there are no particles in} \ A_i^\Delta\}.
\end{equation} 
We will use the definition \eqref{eq:defn_corr_func} for the $m$-correlation function, and note that 
\begin{multline}
\Prob_n(\text{there is at least one particle in each $[x_i, x_i + \Delta )$, $i = 1, 2$}) = 1-\Prob_n(G_1^\Delta\cup G_2^\Delta) \\
= 1-\left[\Prob_n(G_1^\Delta)+\Prob_n(G_2^\Delta)-\Prob_n(G_1^\Delta\cap G_2^\Delta)\right].
\end{multline}
Using the formula \eqref{eq:general_gap_prob} for the gap probabilities and expanding the Fredholm determinants as series, this is
\begin{multline}
1-\frac{1}{2\pi i } \oint_0 \frac{F(z)}{z}\left[1+\int_{A_1^\Delta} K(y,y)\,dy + \frac{1}{2!}\int_{A_1^\Delta} \int_{A_1^\Delta} \det[K(y_i,y_j)]_{i,j=1}^2\,dy_1\,dy_2 + \bigO(\Delta^3) \right. \\
+1+\int_{A_2^\Delta} K(y,y)\,dy + \frac{1}{2!}\int_{A_2^\Delta} \int_{A_2^\Delta} \det[K(y_i,y_j)]_{i,j=1}^2\,dy_1\,dy_2  + \bigO(\Delta^3) \\
 -1-\int_{A_1^\Delta\cup A_2^\Delta} K(y,y)\,dy - \frac{1}{2!}\int_{A_1^\Delta} \int_{A_1^\Delta} \det[K(y_i,y_j)]_{i,j=1}^2\,dy_1\,dy_2 - \frac{1}{2!}\int_{A_2^\Delta} \int_{A_2^\Delta} \det[K(y_i,y_j)]_{i,j=1}^2\,dy_1\,dy_2\\
\left. - \frac{1}{2!}\int_{A_1^\Delta} \int_{A_2^\Delta} \det[K(y_i,y_j)]_{i,j=1}^2\,dy_1\,dy_2- \frac{1}{2!}\int_{A_2^\Delta} \int_{A_1^\Delta} \det[K(y_i,y_j)]_{i,j=1}^2\,dy_1\,dy_2 + \bigO(\Delta^3)\right]\,dz,
\end{multline} 
where for brevity we have used $K(y_1, y_2) \equiv K(y_1, y_2; z; q)$.
Noting all of the cancellations and the fact that $\frac{1}{2\pi i}\oint \frac{F(z)}{z}\,dz = \Prob_n(\textrm{all particles are in} \ \realR)=1$ we find
\begin{multline}
\Prob_n(\text{there is at least one particle in each $[x_i, x_i + \Delta x)$, $i = 1, 2$}) =\\
 \frac{1}{2\pi i } \oint_0 \frac{F(z)}{z}\left[ \int_{A_1^\Delta} \int_{A_2^\Delta} \det[K(y_1,y_2)]_{i,j=1}^2\,dy_1\,dy_2 + \bigO(\Delta^3)\right]\,dz,
\end{multline} 
from which it immediately follows
\begin{multline}
\lim_{\Delta \to 0} \frac{\Prob_n(\text{there is at least one particle in each $[x_i, x_i + \Delta x)$, $i = 1, 2$})}{\Delta^2} = \\
\frac{1}{2\pi i }\oint_0 \frac{F(z)}{z}  \det[K(x_1,x_2)]_{i,j=1}^2\,dz.
\end{multline} 
This proves \eqref{eq:formula_for_R^m} in the case $m=2$ and $x_1 \ne x_2$. The extension to the general case is straightforward.

\section{Proof of Theorem \ref{thm:edge}} \label{sec:proof_edge}

Our starting point is formula \eqref{eq:contour_int_formula_right_most_particle2}, the special case of \eqref{eq:general_gap_prob} with $A=(-\infty, s]$. After the change of variable
\begin{equation} \label{eq:w=q^nz}
  w = q^n z,
\end{equation}
formula \eqref{eq:contour_int_formula_right_most_particle2} becomes
\begin{equation} \label{eq:integral_rightmost_in_w}
\Prob_n(\max(x_1, \dotsc, x_n) \leq s)= q^{n(n + 1)/2} (q; q)_n \frac{1}{2\pi i} \oint_0 (-q^{-n}w; q)_{\infty} \det(I -  \Proj_s\mathbf{K}(q^{-n} w; q) \Proj_s) \frac{dw}{w^{n + 1}},
\end{equation}
where $\Proj_s$ is the projection onto $L^2(s,\infty)$. It is straightforward to see that
\begin{equation}
  (-q^{-n}w; q)_{\infty} = (-w; q)_{\infty} w^n q^{-n(n + 1)/2} (-q/w; q)_n.
\end{equation}
Thus we have that the integral in \eqref{eq:integral_rightmost_in_w} can be written as
\begin{equation}
  \frac{1}{2\pi i} \oint_0(q; q)_n  (-w; q)_{\infty} (-q/w; q)_n \det(I - {\bf P}_s \mathbf{K}(q^{-n} w; q) {\bf P}_s) \frac{dw}{w}.
\end{equation}
By the triple product identity \cite[Theorem 10.4.1]{Andrews-Askey-Roy99}
\begin{equation}\label{eq:triple_product}
  (-w; q)_{\infty} (-q/w; q)_{\infty} (q; q)_{\infty} = \sum^{\infty}_{k = -\infty}  q^{\frac{k(k - 1)}{2}} w^k, 
\end{equation}
the integral in \eqref{eq:integral_rightmost_in_w} is written as
\begin{equation} \label{eq:final_integral_rightmost_in_w}
  \frac{(q; q)_n}{(q; q)_{\infty}} \frac{1}{2\pi i} \oint_0 \left( \sum^{\infty}_{k = -\infty}  q^{\frac{k(k - 1)}{2}} w^k \right) \frac{(-q/w; q)_n}{(-q/w; q)_{\infty}} \det(I -  {\bf P}_s\mathbf{K}(q^{-n} w; q){\bf P}_s) \frac{dw}{w}.
\end{equation}
We take the contour in \eqref{eq:final_integral_rightmost_in_w} as $\lvert w \rvert = \sqrt{q}$ and make the change of variable $w=\sqrt{q}e^{i \pi \theta}$. Then \eqref{eq:final_integral_rightmost_in_w} becomes
\begin{equation}\label{eq:asy_formula}
  \frac{1}{2}\int_{-1}^{1} \left( \sum^{\infty}_{k = -\infty} q^{k^2/2} e^{ik\pi\theta} \right) \det(I -   \Proj_s\mathbf{K}(q^{-n+1/2} e^{i\pi \theta}; q) \Proj_s) F_n(\theta; q) d\theta,
\end{equation}
where
\begin{equation} \label{eq:defn_F_n(theta;q)}
  F_n(\theta; q) = \frac{(q; q)_n}{(q; q)_{\infty}} \frac{(-\sqrt{q}e^{-i\pi \theta}; q)_n}{(-\sqrt{q}e^{-i\pi \theta}; q)_{\infty}}.
\end{equation}

\subsection{Preliminary estimates of $\tilde{K}_n(x, y)$}

In what follows we will need to compute the limit of the Fredholm determinant in the integrand of \eqref{eq:asy_formula} as $n\to\infty$ in the scaling limit $s = s_n =  2\sqrt{n}+tn^{-1/6}$ for $t\in \realR$. In this scaling
\begin{equation} \label{eq:tilde_K_kernel}
  \det(I - \Proj_s\K(q^{-n+1/2} e^{i\pi \theta}; q) \Proj_s) = \det(I - \Proj_t \tilde{\K} \Proj_t),
\end{equation}
where $\tilde{\K} = \tilde{\K}(\theta)$ has the kernel
\begin{equation}\label{eq:scaled_kernel1}
  \begin{split}
    \tilde{K}_n(x, y) = \tilde{K}_n(x,y; \theta) := {}& n^{-1/6} K(2\sqrt{n}+xn^{-1/6}, 2\sqrt{n}+yn^{-1/6}) \\
    = {}& n^{-1/6}\sum_{k=0}^\infty c_k \varphi_k(2\sqrt{n}+xn^{-1/6})\varphi_k(2\sqrt{n}+yn^{-1/6}),
  \end{split}
\end{equation}
where 
\begin{equation}\label{eq:cks}
  c_k = c_k(\theta) :=  \frac{e^{\pi i \theta}q^{k-n+1/2}}{1+e^{\pi i \theta}q^{k-n+1/2}}=\frac{e^{\pi i \theta/2}\sqrt{q}^{k-n+1/2}}{2\cosh\left(\frac{k-n+1/2}{2}\log q+\frac{ i\pi \theta}{2}\right)},
\end{equation}
with the dependence on $\theta$ suppressed if there is no chance of confusion. 

We need to compute the $n \to \infty$ limit of $\tilde{K}_n(x, y)$ for $x, y$ in a compact subset of $\realR$, and show that $\tilde{K}_n(x, y)$ vanishes exponentially fast as $\max(x, y) \to +\infty$ and $\min(x, y)$ is bounded below. We will use the following global approximation formula for $\varphi_k$, which is from \cite[Section 11.4, Exercises 4.2 and 4.3]{Olver97}. For $x$ in a compact subset of $(-1, +\infty)$ and $k = 0, 1, 2, \dotsc$ uniformly,
\begin{multline} \label{eq:refined_global_est}
  (k + 1/2)^{1/12} \varphi_k(2\sqrt{k + 1/2} x) = 2^{1/6} \left( \frac{\zeta(x)}{x^2 - 1} \right)^{1/4} \left( \Ai \left( (2k + 1)^{2/3} \zeta(x) \right) + \varepsilon_k(x) \right) \\
  \times (1 + \bigO((k + 1/2)^{-1})),
\end{multline}
such that
\begin{enumerate}[label=(\roman*)]
\item 
  the $1 + \bigO((k + 1/2)^{-1})$ factor depends on $k$ only;
\item 
  $\zeta(x)$ is a continuous, differentiable and monotonically increasing function on $(-1, +\infty)$. Moreover, it is bounded below as $x \to -1_+$ and has $x^2$ growth as $x \to +\infty$. The explicit formula of $\zeta(x)$ is given in \cite[Section 11.4, Exercise 4.2]{Olver97}. Around $1$, it satisfies
  \begin{equation} \label{eq:local_prop_zeta}
    \zeta(1) = 0 \quad \text{and} \quad \zeta'(1) = 2^{1/3};
  \end{equation}
\item
  $\varepsilon_n(x)$ is defined in \cite[Section 11.4, Exercise 4.2]{Olver97}, where it is denoted as $\varepsilon(x)$. From \cite[Section 11.2]{Olver97}, we have the estimate uniform in $k$ and $x$,
  \begin{equation}
    \lvert \varepsilon(x) \rvert =
    \begin{cases}
      \bigO \left( (k + 1/2)^{-7/6} (-\zeta(x))^{-1/4} \right) & \text{if $x \in (-1, 1]$}, \\
      \bigO \left( (k + 1/2)^{-7/6} \zeta(x)^{-1/4} \exp \left( -\frac{2}{3}(2k + 1) \zeta(x)^{3/2} \right) \right) & \text{if $x \in [1, +\infty)$}.
    \end{cases}
  \end{equation}
\end{enumerate}
To use estimate \eqref{eq:refined_global_est}, we also need that by \cite[Sections 11.1-2]{Olver97}, especially \cite[Formulas (2.05), (2.13) and (2.15) in Chapter 11]{Olver97},
\begin{equation} \label{eq:asy_Airy_func}
  \lvert \Ai(x) \rvert \leq f(x), \quad \text{where} \quad f(x) =
  \begin{cases}
    \frac{1}{2} \pi^{-1/2} x^{-1/4} e^{-\frac{2}{3} x^{3/2}} & x > 1, \\
    1 & -1 \leq x \leq 1, \\
    \lambda^{1/2} \pi^{-1/2} (-x)^{-1/4} & x < -1,
  \end{cases}
\end{equation}
with the constant $\lambda = 1.04\dots$. 

Below we provide computational results that are used in the proof of both part \ref{thm:edge_fixedq} and part \ref{thm:edge_varyingq} of Theorem \ref{thm:edge}. Note that we use $C$ to denote a large enough positive constant and $\gamma$ a small enough positive constant. It is harmless to assume $C = 1000$ and $\gamma = 1/10$.

\paragraph{$x, y$ in a compact subset.}

First we consider the case that $x, y \in [-M/2, M/2]$ where $M$ is a positive constant. Without loss of generality, we assume that $Mn^{1/3}$ is an integer, and then write
\begin{equation}\label{eq:three_kernels}
  \tilde{K}_n(x,y)=  K^{(1, M)}_n(x,y)+K^{(2, M)}_n(x,y)+K^{(3, M)}_n(x,y),
\end{equation}
where
\begin{align}
  K^{(1, M)}_n(x,y) &= n^{-1/6}\sum_{k=0}^{n-Mn^{1/3}-1} c_k \varphi_k(2\sqrt{n}+xn^{-1/6})\varphi_k(2\sqrt{n}+yn^{-1/6}), \label{eq:three_kernels2:1} \\
  K^{(2, M)}_n(x,y) &= n^{-1/6}\sum_{k=n-Mn^{1/3}}^{n+Mn^{1/3}} c_k \varphi_k(2\sqrt{n}+xn^{-1/6})\varphi_k(2\sqrt{n}+yn^{-1/6}), \label{eq:three_kernels2:2} \\
  K^{(3, M)}_n(x,y) &= n^{-1/6}\sum_{k=n+Mn^{1/3}+1}^{\infty} c_k \varphi_k(2\sqrt{n}+xn^{-1/6})\varphi_k(2\sqrt{n}+yn^{-1/6}). \label{eq:three_kernels2:3}
\end{align}
The following estimates on the coefficients $c_k$ are uniform in $k$ and $\theta$:
\begin{equation} \label{eq:est_c_k}
  c_k = 
  \begin{cases}
    1+\bigO(q^{Mn^{1/3}}),  & k=0, \dots, n-Mn^{1/3} -1, \\
    \bigO(q^{l + Mn^{1/3}}),  & k=n+Mn^{1/3} +l, \quad l=1, 2, \dots.
  \end{cases}
\end{equation}
With the estimates \eqref{eq:est_c_k} for $c_k$ and \eqref{eq:refined_global_est} for $\varphi_k$, it follows that
\begin{equation} \label{eq:K3_est}
  \lvert K^{(3, M)}_n(x,y) \rvert \leq n^{-1/6} C \sum^{\infty}_{l = 1} q^{l + Mn^{1/3}} (n + Mn^{1/3} + l + 1/2)^{-1/6} \leq C \frac{n^{-1/3}}{1 - q} q^{Mn^{1/3}},
\end{equation}
where $C$ is a constant independent of $n, x, y, M, \theta$ and $q$. Similarly,
\begin{multline} \label{eq:inter_est_K^1}
  \lvert K^{(1, M)}_n(x,y) \rvert \leq n^{-1/6} C \sum^{n-Mn^{1/3}-1}_{k=0} (k + 1/2)^{-1/6} \exp \left( -\frac{2}{3}(2k + 1) \zeta \left( \frac{2\sqrt{n} + x n^{-1/3}}{2\sqrt{k + 1/2}} \right)^{3/2} \right) \\
  \times \exp \left( -\frac{2}{3}(2k + 1) \zeta \left( \frac{2\sqrt{n} + y n^{-1/3}}{2\sqrt{k + 1/2}} \right)^{3/2} \right),
\end{multline}
where $C$ is independent of $n, x, y, M, \theta$ and $q$. After some calculation, the sum $K^{(1, M)}_n(x,y)$ is estimated as
\begin{equation}\label{eq:K1_est}
  \lvert K^{(1, M)}_n(x,y) \rvert \leq C \exp(-\gamma M^{3/2}),
\end{equation}
where $C$ and $\gamma$ are independent of $n, x, y, M, \theta$ and $q$.

The approximation of $K^{(2, M)}_n(x, y)$ depends on $\theta$ and will be given later.

\paragraph{$x \to +\infty$ and $y$ is bounded below.}

Let $M$ be the same as above and $N > M$ be a large positive constant, and without loss of generality assume that $N n^{1/3}$ is an integer. Suppose $x \geq 2N$ and $y \geq -M/2$. We write
\begin{equation}
  \tilde{K}_n(x,y) = K^{(4, M, N)}_n(x,y) + K^{(2, M)}_n(x,y) + K^{(5, N)}_n(x,y),
\end{equation}
where $K^{(2, M)}_n(x,y)$ is defined in \eqref{eq:three_kernels2:2}, and
\begin{align}
  K^{(4, M, N)}_n(x,y) = {}& n^{-1/6} \sum_{\substack{0 \leq k \leq n - Mn^{1/3} - 1 \\ \text{or } n + Mn^{1/3} + 1 \leq k \leq n + N n^{1/3}}} c_k \varphi_k(2\sqrt{n} + xn^{-1/6}) \varphi_k(2\sqrt{n} + yn^{-1/6}), \\
  K^{(5, N)}_n(x,y) = {}& n^{-1/6} \sum^{\infty}_{k = n + Nn^{1/3} + 1} c_k \varphi_k(2\sqrt{n} + xn^{-1/6}) \varphi_k(2\sqrt{n} + yn^{-1/6}).
\end{align}
Similar to \eqref{eq:K3_est}, we have the estimate 
\begin{equation}\label{eq:est_K^(5,N)}
  \lvert K^{(5, N)}_n(x,y) \rvert \leq C \frac{n^{-1/3}}{1 - q} q^{N n^{1/3}},
\end{equation}
where $C$ is independent of $n, x, y, N, \theta$ and $q$. Similarly, like \eqref{eq:K3_est} and \eqref{eq:inter_est_K^1}, by the estimate \eqref{eq:est_c_k} for $c_k$ and \eqref{eq:refined_global_est} for $\varphi_k$,
\begin{equation} \label{eq:est_K^(4,M,N)}
  \begin{split}
    & \lvert K^{(4, M, N)}_n(x,y) \rvert \\
    \leq {}& n^{-1/6} C \sum_{\substack{0 \leq k \leq n - Mn^{1/3} - 1 \\ \text{or } n + Mn^{1/3} + 1 \leq k \leq n + N n^{1/3}}} (k + 1/2)^{-1/6} \exp \left( -\frac{2}{3}(2k + 1) \zeta \left( \frac{2\sqrt{n} + x n^{-1/3}}{2\sqrt{k + 1/2}} \right)^{3/2} \right) \\
    \leq {}& C \exp(-\gamma N^{3/2}),
  \end{split}
\end{equation}
where $C$ and $\gamma$ are independent of $n, x, y, M, N, \theta$ and $q$. Note that in \eqref{eq:est_K^(4,M,N)} the estimate of $\varphi_k(2\sqrt{n} + xn^{-1/6})$ is the same as in \eqref{eq:inter_est_K^1}, while the estimate of $\varphi_k(2\sqrt{n} + yn^{-1/6})$ is roughly $\bigO((k + 1/2)^{-1/12})$, as in \eqref{eq:K3_est}.

Below we prove Theorem \ref{thm:edge}. We first give full detail for part \ref{thm:edge_varyingq}, and then show that a simplified argument works for part \ref{thm:edge_fixedq}. The technical core is the estimate of $K^{(2, M)}_n(x, y)$.


\subsection{Gap probability for the rightmost particle: $q = e^{-c n^{-1/3}}$}

Now consider the scaling $q= e^{-c n^{-1/3}}$ for some $c>0$. We begin with the following lemma on the asymptotics of the $q$-Pochhammer symbols appearing in \eqref{eq:asy_formula}.
\begin{lem} \label{lem:Pochhammer}
  For $q=e^{-c n^{-1/3}}$, we have the estimate uniformly for $\theta\in [-1,1]$:
  \begin{equation}\label{eq:qPoch_asymptotics}
    \frac{(q; q)_n}{(q; q)_{\infty}} = 1+\bigO(n^{1/3} e^{-cn^{2/3}}), \qquad 
    \frac{(-\sqrt{q} e^{-\pi i \theta}; q)_n}{(-\sqrt{q} e^{-\pi i \theta}; q)_{\infty}}= 1+\bigO(n^{1/3} e^{-cn^{2/3}}).
  \end{equation}
Thus uniformly for $\theta\in [-1,1]$, the function $F_n(\theta; q)$ defined in \eqref{eq:defn_F_n(theta;q)} satisfies
  \begin{equation}
    F_n(\theta; q) = 1 + \bigO(n^{1/3} e^{-cn^{2/3}}).
  \end{equation}
\end{lem}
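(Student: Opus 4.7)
The plan is to convert each $q$-Pochhammer ratio into the reciprocal of a tail product starting at a very high index, where the deviation from $1$ is controlled by a tiny geometric series. Specifically, the elementary identity $(a; q)_{\infty} = (a; q)_n (aq^n; q)_{\infty}$, which is immediate from the product definition, yields
\begin{equation*}
  \frac{(q; q)_n}{(q; q)_{\infty}} = \frac{1}{(q^{n + 1}; q)_{\infty}}, \qquad \frac{(-\sqrt{q} e^{-\pi i \theta}; q)_n}{(-\sqrt{q} e^{-\pi i \theta}; q)_{\infty}} = \frac{1}{(-q^{n + 1/2} e^{-\pi i \theta}; q)_{\infty}},
\end{equation*}
so \eqref{eq:qPoch_asymptotics} reduces to showing that each infinite product on the right-hand side equals $1 + \bigO(n^{1/3} e^{-c n^{2/3}})$ uniformly in $\theta \in [-1, 1]$.

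Next I would take principal logarithms term by term. Because $\lvert e^{-\pi i \theta} \rvert = 1$ and $q = e^{-c n^{-1/3}}$, every factor in either tail product has the form $1 + w$ with $\lvert w \rvert \leq q^{n + 1/2} = e^{-c(n + 1/2) n^{-1/3}}$, which is uniformly smaller than $1/2$ for all sufficiently large $n$, independently of $\theta$. The standard estimate $\lvert \log(1 + w) \rvert \leq 2\lvert w \rvert$ for $\lvert w \rvert \leq 1/2$ then gives
\begin{equation*}
  \bigl\lvert \log (q^{n + 1}; q)_{\infty} \bigr\rvert \leq 2 \sum_{k = 0}^{\infty} q^{n + 1 + k} = \frac{2 q^{n + 1}}{1 - q}, \qquad \bigl\lvert \log (-q^{n + 1/2} e^{-\pi i \theta}; q)_{\infty} \bigr\rvert \leq \frac{2 q^{n + 1/2}}{1 - q},
\end{equation*}
and plugging in $1 - q = c n^{-1/3}(1 + o(1))$ together with $q^{n + 1/2} \leq e^{-c n^{2/3}}$ shows that both right-hand sides are $\bigO(n^{1/3} e^{-c n^{2/3}})$, uniformly in $\theta$.

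Exponentiating and then taking reciprocals preserves an estimate of the form $1 + \bigO(\varepsilon)$ with $\varepsilon \to 0$, which proves \eqref{eq:qPoch_asymptotics}; the statement for $F_n(\theta; q)$ then follows by multiplying the two estimates together, since the product of two factors of the form $1 + \bigO(n^{1/3} e^{-c n^{2/3}})$ is again of this form. No serious analytic obstacle arises: the only points requiring a little care are (i) verifying that the absolute value of every perturbation is uniformly less than $1/2$ so that the termwise logarithm bound applies, and (ii) observing that $\theta$-uniformity is automatic because every modulus in the estimate depends only on $q$ and $n$.
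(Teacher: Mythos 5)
Your proof is correct and follows essentially the same route as the paper: rewrite each ratio as the reciprocal of a tail product via $(a;q)_\infty=(a;q)_n(aq^n;q)_\infty$, bound the logarithm termwise by a geometric series to get a bound of size $q^{n}/(1-q)=\bigO(n^{1/3}e^{-cn^{2/3}})$, and exponentiate. The only difference is cosmetic — you spell out the $\lvert\log(1+w)\rvert\le 2\lvert w\rvert$ condition and treat both ratios explicitly, whereas the paper does one and notes the other is analogous.
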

\begin{proof}
  We only prove the second equation of \eqref{eq:qPoch_asymptotics}. We have
  \begin{equation}
    \frac{(\sqrt{q} e^{-\pi i \theta}; q)_n}{(\sqrt{q} e^{-\pi i \theta}; q)_{\infty}} = \frac{1}{\prod_{k=0}^\infty (1 -  e^{-\pi i \theta} q^{k+n + 1/2})},
  \end{equation}
  thus
  \begin{equation}
    \begin{split}
      \left\lvert \log\frac{(\sqrt{q} e^{-\pi i \theta}; q)_n}{(\sqrt{q} e^{-\pi i \theta}; q)_{\infty}} \right\rvert \leq {}& \sum_{k=0}^\infty \lvert \log(1 -  e^{-\pi i \theta} q^{k+n + 1/2}) \rvert \\
      < {}& \frac{2q^{n}}{1-q} = \frac{2e^{-cn^{2/3}}}{1-e^{-cn^{-1/3}}}=\frac{2e^{-cn^{2/3}}}{cn^{-1/3}}(1+\bigO(n^{-1/3})).
    \end{split}
  \end{equation}
  The result is obtained by exponentiating.
\end{proof}

Also note that the Poisson summation formula gives 
\begin{equation}\label{eq:Pos_sum}
\begin{aligned}
\sum^{\infty}_{k = -\infty} e^{\frac{-cn^{-1/3}k^2}{2}} e^{k\pi i\theta}  &= n^{1/6} \sqrt{2\pi c^{-1}} \sum_{k=-\infty}^\infty e^{-\frac{n^{1/3}\pi^2(2k- \theta)^2}{2c}}. \\
\end{aligned}
\end{equation}
Applying formulas \eqref{eq:tilde_K_kernel}, \eqref{eq:qPoch_asymptotics} and \eqref{eq:Pos_sum} to the integral \eqref{eq:asy_formula}, we find that \eqref{eq:asy_formula} becomes
\begin{equation}\label{eq:cross1}
  n^{1/6} \int_{-1}^{1} \sqrt{\frac{\pi}{2c}}\left( \sum_{k=-\infty}^\infty e^{-\frac{n^{1/3}\pi^2(2k- \theta)^2}{2c}}\right)\det(I - \Proj_t \tilde{\K}(\theta) \Proj_t) F_n(\theta; q) d\theta.
\end{equation}

Fix a small $\ep>0$. We plug the formula \eqref{eq:Pos_sum} into \eqref{eq:cross1}, use the estimates in Lemma \ref{lem:Pochhammer}, and split the integral \eqref{eq:cross1} into two parts, $I_1$ and $I_2$, where
\begin{align} 
  I_1 = n^{1/6} \int_{-1+\ep}^{1-\ep} \sqrt{\frac{\pi}{2c}}\left( \sum_{k=-\infty}^\infty e^{-\frac{n^{1/3}\pi^2(2k- \theta)^2}{2c}}\right) {}& \det(I - \Proj_t \tilde{\K}(\theta) \Proj_t) F_n(\theta; q) d\theta, \\
  I_2 = n^{1/6} \int_{1-\ep}^{1+\ep} \sqrt{\frac{\pi}{2c}}\left( \sum_{k=-\infty}^\infty e^{-\frac{n^{1/3}\pi^2(2k- \theta)^2}{2c}}\right) {}& \det(I - \Proj_t \tilde{\K}(\theta) \Proj_t) F_n(\theta; q) d\theta.\label{I2_def}
\end{align}

In order to evaluate these integrals as $n\to\infty$, we need some estimates on the determinant $\det(I - \Proj_t \tilde{\K}(\theta) \Proj_t)$ which are uniform in $\theta$. These are given in the following lemma.
\begin{lem}\label{det_estimates}

\begin{enumerate}[label=(\alph*)]
  \item \label{det_estimates_I1}
For $\theta\in (-1+\epsilon, 1-\epsilon)$, the determinant $\det(I - \Proj_t \tilde{\K}(\theta) \Proj_t)$ is bounded uniformly in $\theta$ as $n\to\infty$. Furthermore it has the limit
 \begin{equation} \label{eq:det_conv_edge}
\lim_{n\to\infty}\det(I - \Proj_t \tilde{\K}(\theta) \Proj_t)= \det(I - \mathbf{P}_t \mathbf{K}_{\crossover}(c; \theta) \mathbf{P}_t), 
\end{equation}
where $\mathbf{K}_{\crossover}(c; \theta)$ is the integral operator on $L^2(\realR)$ with kernel
\begin{equation}
K_{\crossover}(x,y;c; \theta) = \int_{-\infty}^\infty \frac{e^{i\pi \theta} e^{-cr}}{1+e^{i \pi \theta} e^{-cr}} \Ai(x - r)\Ai(y-r)\,dr.
\end{equation}
  \item\label{det_estimates_I2}
There exist positive constants $\tilde{C}$ such that for all $n\in \mathbb{N}$ and all $\theta\in (1-\epsilon, 1+\epsilon)$,
    \begin{equation} \label{eq:det_estimatesI2}
      \lvert\det(I - \Proj_t \tilde{\K}(\theta) \Proj_t)\rvert \le \exp\left( \left( \tilde{C} e^{-ct} \log n \right)^2 + \tilde{C} e^{-ct} \log n \right).
    \end{equation}
  \end{enumerate}
\end{lem}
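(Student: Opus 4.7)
Part (a) is proved by establishing pointwise convergence $\tilde K_n(x,y;\theta)\to K_{\crossover}(x,y;c;\theta)$ uniformly for $(x,y)$ in compact sets and $\theta\in(-1+\epsilon,1-\epsilon)$, then upgrading to trace-class convergence of $\Proj_t\tilde{\K}(\theta)\Proj_t$. For $\theta$ bounded away from $\pm 1$, the denominator $|1+e^{i\pi\theta}q^{k-n+1/2}|$ is uniformly bounded below, hence $|c_k(\theta)|$ is bounded by an $\epsilon$-dependent constant. This makes the tail estimates \eqref{eq:K1_est} and \eqref{eq:K3_est} uniform in $\theta$, so $K^{(1,M)}_n$ and $K^{(3,M)}_n$ contribute negligibly once $M$ is chosen large. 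For the central block, write $k=n+j$ and $r=(j+1/2)n^{-1/3}$: the refined Plancherel--Rotach expansion \eqref{eq:refined_global_est} yields $n^{1/12}\varphi_k(2\sqrt n+xn^{-1/6})=\Ai(x-r)(1+o(1))$, while $q^{k-n+1/2}=e^{-cr}(1+o(1))$ gives $c_k(\theta)=\frac{e^{i\pi\theta}e^{-cr}}{1+e^{i\pi\theta}e^{-cr}}(1+o(1))$. The prefactor $n^{-1/6}$ supplies the Riemann step $\Delta r=n^{-1/3}$ together with the Airy-scale normalization, so $K^{(2,M)}_n$ is a Riemann sum converging to $\int_{-M}^M\frac{e^{i\pi\theta}e^{-cr}}{1+e^{i\pi\theta}e^{-cr}}\Ai(x-r)\Ai(y-r)\,dr$; letting $M\to\infty$ recovers $K_{\crossover}(x,y;c;\theta)$.

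To upgrade kernel convergence to Fredholm determinant convergence, construct an $n$-independent, $L^2(t,\infty)$-integrable majorant $|\tilde K_n(x,y;\theta)|\leq \Phi(x)\Phi(y)$ from the Airy bounds underlying \eqref{eq:K1_est}--\eqref{eq:K3_est}, with $\Phi$ decaying like $e^{-\gamma x^{3/2}}$ for $x\geq 0$. Dominated convergence on the Fredholm series for $\det(I-\Proj_t\tilde{\K}(\theta)\Proj_t)$, or equivalently trace-norm convergence of $\Proj_t\tilde{\K}(\theta)\Proj_t\to \Proj_t\K_{\crossover}(c;\theta)\Proj_t$, then yields \eqref{eq:det_conv_edge}. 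Uniform boundedness of the determinant in $\theta$ follows from $|\det(I-A)|\leq\exp(\|A\|_1)$ combined with the uniform majorant.

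For part (b), the approach above breaks down because $|1+e^{i\pi\theta}q^{k-n+1/2}|$ can be as small as $\bigO(n^{-1/3})$ for $\theta$ near $1$ and $k$ near $n-\tfrac12$, so $|c_k(\theta)|$ may grow like $n^{1/3}$. Instead use the regularized Hilbert--Schmidt bound
\begin{equation*}
  \bigl|\det(I-A)\bigr|\leq \exp\Bigl(\tfrac12\|A\|_2^2 + |\Tr A|\Bigr),
\end{equation*}
with $A=\Proj_t\tilde{\K}(\theta)\Proj_t$. Writing $A=\sum_k c_k(\Proj_t\tilde\varphi_k)\otimes(\Proj_t\tilde\varphi_k)$ with $\tilde\varphi_k(x):=n^{-1/12}\varphi_k(2\sqrt n+xn^{-1/6})$ orthonormal in $L^2(\realR)$, and applying Cauchy--Schwarz to $a_{jk}:=\langle\Proj_t\tilde\varphi_j,\Proj_t\tilde\varphi_k\rangle$ bounds both $\|A\|_2$ and $|\Tr A|$ by
\begin{equation*}
  S(\theta):=\sum_k |c_k(\theta)|\,\|\Proj_t\tilde\varphi_k\|^2.
\end{equation*}
The claim reduces to $S(\theta)\leq \tilde C\, e^{-ct}\log n$. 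The bulk and far tails contribute $\bigO(e^{-\gamma M^{3/2}})$ and $\bigO(q^{Mn^{1/3}}/(1-q))$ respectively, negligible for $M=\bigO(\log n)$. The delicate near-edge block $|k-n|\leq Mn^{1/3}$ uses the uniform estimate $|c_{n+j}(\theta)|\lesssim n^{1/3}/(c(|j|+1))$ together with the Airy-type bound on $\|\Proj_t\tilde\varphi_{n+j}\|^2$ derived from \eqref{eq:refined_global_est} and the decay of $\int_t^\infty\Ai(x-r)^2\,dx$. The resulting harmonic sum $\sum_{1\leq|j|\leq Mn^{1/3}}1/|j|\sim\log n$, paired with the $e^{-ct}$ factor extracted from the Airy tail, produces the bound on $S(\theta)$. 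The main technical obstacle is precisely this balance: the $1/|j|$ singular behavior of $c_{n+j}(\theta)$ near $\theta=1$ must be controlled by the Airy decay of $\|\Proj_t\tilde\varphi_{n+j}\|^2$ without losing additional powers of $n$.
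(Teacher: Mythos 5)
Your proposal is correct and follows the same overall strategy as the paper: for part \ref{det_estimates_I1} the decomposition into the sub-edge, edge and super-edge blocks, the uniform boundedness of $c_k(\theta)$ for $\theta$ away from $\pm 1$, the Riemann-sum convergence of the central block to the crossover kernel, and an integrable dominating bound feeding into the Fredholm series is exactly the paper's argument; for part \ref{det_estimates_I2} you use the same key identity $\det = \dettwo\cdot e^{\Tr}$ with the Hilbert--Schmidt bound on $\dettwo$, and the same mechanism $\lvert c_{n+j}(\theta)\rvert \lesssim n^{1/3}/(c\lvert j\rvert+1)$ producing the harmonic sum $\sim \log n$. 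The one place where you genuinely deviate is how the trace and Hilbert--Schmidt norm are controlled in part \ref{det_estimates_I2}: the paper establishes the pointwise kernel bound $\lvert \tilde K_n(x,y)\rvert \le \tilde C e^{-cx}\log n$ for $x,y\ge t$ (equations \eqref{eq:log_est_K2}--\eqref{Ktilde_largex_I2}) and integrates, whereas you exploit the rank-one expansion $\Proj_t\tilde\K\Proj_t=\sum_k c_k(\Proj_t\tilde\varphi_k)\otimes(\Proj_t\tilde\varphi_k)$ with $\tilde\varphi_k$ orthonormal, reducing both norms via Cauchy--Schwarz to the scalar sum $S(\theta)=\sum_k\lvert c_k\rvert\,\lVert\Proj_t\tilde\varphi_k\rVert^2$. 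This is a clean variant: it avoids pointwise large-$x$ kernel estimates at the price of needing uniform control of $\lVert\Proj_t\tilde\varphi_{n+j}\rVert^2$ over the whole half-line, which requires the same Plancherel--Rotach tail input in a slightly different packaging; the paper's pointwise route has the advantage that the large-$x$ estimates are already in place from part \ref{det_estimates_I1}. Two minor inaccuracies, neither fatal: in part \ref{det_estimates_I1} your claimed majorant $\Phi(x)\sim e^{-\gamma x^{3/2}}$ is too strong, since the contribution of the modes $k> n+xn^{1/3}$ decays only like $e^{-cx}$ (cf.\ \eqref{eq:K^4K^5largex}); exponential decay is all the argument needs, so nothing breaks. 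Similarly, the sub-edge contribution to $S(\theta)$ picks up an extra factor $n^{1/3}$ from the number of modes per unit of the Airy variable, which is harmless once $M$ is taken of order $\log n$ as you do.
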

Given the results of this lemma, it is fairly straightforward to prove Theorem \ref{thm:edge}\ref{thm:edge_varyingq}.
Consider $I_1$ first. Clearly as $n\to\infty$ the dominant term in the infinite sum is $k=0$, and we have
\begin{equation}\label{eq:I1}
  I_1 = n^{1/6} \sqrt{\frac{\pi}{2c}}\int_{-1+\ep}^{1-\ep}e^{-\frac{n^{1/3}\pi^2 \theta^2}{2c}}\det(I - \Proj_t \tilde{\K}(\theta) \Proj_t)  \left(1+\bigO(e^{-\frac{n^{1/3}\pi^2}{2c}})\right)d\theta\, .
\end{equation}
Since the Fredholm determinant in the integrand has a limit as $n\to \infty$, we can use Laplace's method to evaluate the integral as $n\to\infty$. The integral $I_1$ is localized close to $\theta=0$, and Laplace's method immediately gives
\begin{equation}
\begin{aligned}
\lim_{n\to\infty} I_1 &= \lim_{n\to\infty}  \det(I -\Proj_t \tilde{\K}(\theta=0) \Proj_t)+o(1) \\
& = \det(I - \mathbf{P}_t \mathbf{K}_{\crossover}(c; \theta=0) \mathbf{P}_t).
\end{aligned}
\end{equation}
Noting that  $\mathbf{K}_{\crossover}(c; \theta=0) \equiv  \mathbf{K}_{\crossover}(c)$ defined in \eqref{eq:crossover_limit_result}, we find
\begin{equation}
\lim_{n\to\infty} I_1 =  \det(I - \mathbf{P}_t \mathbf{K}_{\crossover}(c) \mathbf{P}_t).
\end{equation}
It remains only to show that $\lim_{n\to\infty} I_2 =0$. This follows immediately from \eqref{I2_def} and \eqref{eq:det_estimatesI2}, since the infinite sum in \eqref{I2_def} is vanishing like the exponent of a power of $n$ whereas the determinant is growing at most as the exponent of a power of $\log n$. This completes the proof of Theorem \ref{thm:edge}\ref{thm:edge_varyingq}, provided that Lemma \ref{det_estimates} is true. The remainder of this subsection is dedicated to the proof of this lemma.

\paragraph{Proof of Lemma \ref{det_estimates}\ref{det_estimates_I1}}

We use the expression for the kernel $\tilde{K}_n(x,y)$ in \eqref{eq:three_kernels} and \eqref{eq:three_kernels2:1}--\eqref{eq:three_kernels2:3} for the pointwise approximation as $x, y$ in a compact subset of $\realR$. In the scaling $q=e^{-cn^{-1/3}}$, the estimate \eqref{eq:K3_est}  becomes
\begin{equation}\label{eq:K3_est2}
  \lvert K^{(3, M)}_n(x,y) \rvert \leq Cc^{-1} e^{-cM}.
\end{equation}
Combined with \eqref{eq:K1_est} we see that as $M$ becomes large, $K^{(1, M)}_n(x, y)$ and $K^{(3, M)}_n(x, y)$ vanish, and the dominant contribution should come from $K^{(2, M)}_n(x,y)$.
In the sum $K^{(2, M)}_n(x,y)$, we denote $r=n^{-1/3}(k-n)$ and write the sum as
\begin{multline}\label{eq:K2_int}
  K^{(2, M)}_n(x,y) = \\
  n^{-1/6}\sum_{r\in \{n^{-1/3} \Z \cap [-M,M]\}} \frac{e^{\pi i \theta}q^{1/2+rn^{1/3}}}{1+e^{\pi i \theta}q^{1/2+rn^{1/3}}} \varphi_{n + r n^{1/3}}(2\sqrt{n}+xn^{-1/6})\varphi_{n + r n^{1/3}}(2\sqrt{n}+yn^{-1/6}) \\
  = \int_{-M}^M \frac{e^{\pi i \theta}q^{1/2+\lfloor rn^{1/3}\rfloor}}{1+e^{\pi i \theta}q^{1/2+\lfloor rn^{1/3}\rfloor}}n^{1/12} \varphi_{n + \lfloor rn^{1/3}\rfloor}(2\sqrt{n}+xn^{-1/6})n^{1/12}\varphi_{n + \lfloor rn^{1/3}\rfloor}(2\sqrt{n}+yn^{-1/6})\,dr.
\end{multline}
From \eqref{eq:refined_global_est}, we find that
\begin{equation}\label{eq:Airy_asymptotics}
  \lim_{n\to\infty}n^{1/12} \varphi_{n + \lfloor rn^{1/3} \rfloor} (2\sqrt{n} + n^{-1/6} x) =  \Ai(x - r),
\end{equation}
thus the integrand in \eqref{eq:K2_int} has the pointwise limit 
\begin{equation}
\frac{e^{i\pi \theta} e^{-cr}}{1+e^{i \pi \theta} e^{-cr}} \Ai(x - r)\Ai(y-r),
\end{equation}
and the bounded convergence theorem gives
\begin{equation}
\lim_{n\to\infty} K^{(2, M)}_n(x,y) = \int_{-M}^M \frac{e^{i\pi \theta} e^{-cr}}{1+e^{i \pi \theta} e^{-cr}} \Ai(x - r)\Ai(y-r)\,dr.
\end{equation}
Since both $K_n^{(1,M)}(x,y)$ and $K_n^{(3,M)}(x,y)$ are bounded in $n$ and vanish as $M\to\infty$, we now take $M\to\infty$ and obtain
\begin{equation}\label{eq:kernel_convergence}
  \lim_{n\to\infty} \tilde{K}_n(x,y) =\lim_{M\to\infty} \lim_{n\to\infty} K^{(2, M)}_n(x,y) = \int_{-\infty}^\infty \frac{e^{i\pi \theta} e^{-cr}}{1+e^{i \pi \theta} e^{-cr}} \Ai(x - r)\Ai(y-r)\,dr,
\end{equation}
which is the kernel of a trace-class operator for all $\theta \in (-1+\ep,1-\ep)$. 

We have proved the pointwise convergence of the kernels in the determinant, and actually the convergence in \eqref{eq:kernel_convergence} is uniform if $x, y$ are in a compact subset of $\realR$. To prove the determinant convergence \eqref{eq:det_conv_edge}, we will need estimates on the kernel $\tilde{K}_n(x,y)$ as $\max(x,y)\to\infty$. The estimates \eqref{eq:est_K^(5,N)} and \eqref{eq:est_K^(4,M,N)} imply that, if $q=e^{-cn^{-1/3}}$ and $y \geq t$, then for all $x > 4 \max(-t, 1)$, we take $M = 2 \max(-t, 1)$ and $N = x$, and have
\begin{equation}\label{eq:K^4K^5largex}
  \lvert K^{(4, 2 \max(-t, 1), x)}_n(x,y) \rvert  \leq C e^{-\gamma x^{3/2}} \quad \text{and} \quad \lvert K^{(5, x)}_n(x,y) \rvert \leq Cc^{-1} e^{-c x},
\end{equation}
with constants $C$ and $\gamma$ independent of $n$. Using the method of estimating $K^{(4, M, N)}_n(x, y)$ in \eqref{eq:est_K^(4,M,N)}, we have a similar estimate for $K^{(2, 2 \max(-t, 1))}_n(x,y)$, provided that $\theta\in (-1+\ep, 1-\ep)$:
\begin{equation}\label{K^2largex}
  \lvert K^{(2, 2 \max(-t, 1))}_n(x,y) \rvert  \leq C e^{-\gamma x^{3/2}}.
\end{equation}
Combining \eqref{eq:K^4K^5largex} and \eqref{K^2largex} we obtain the uniform estimate for $x, y \geq t$
\begin{equation}\label{Ktilde_largex}
  \lvert  \tilde{K}_n(x,y) \rvert  \leq \tilde{C} e^{-c x},
\end{equation}
where the constant $\tilde{C}$ depends on $t$ and $c$, but independent of $n$.

The Fredholm determinant $\det(I -   {\bf P}_s\mathbf{K}(q^{-n+1/2} e^{i\pi \theta}; q) {\bf P}_s)=\det(I-\Proj_t \tilde{\K} \Proj_t)$ is given by the series
\begin{equation}\label{eq:det_series}
  \det(I-\Proj_t \tilde{\K} \Proj_t) = 1+\sum_{k=1}^\infty \frac{(-1)^k}{k!} \int_t^\infty dx_1 \cdots \int_t^\infty dx_k \det(\tilde{K}_n(x_i, x_j))_{i,j=1}^k.
\end{equation} 
Each of the determinants in this series can be estimated using \eqref{Ktilde_largex} along with Hadamard's inequality, giving
\begin{equation}\label{Hadamard_est}
  \lvert \det(\tilde{K}(x_i, x_j))_{i,j=1}^k \rvert \le k^{k/2} \tilde{C}^k \prod_{i=1}^k e^{-cx_i},
\end{equation}
so each term in \eqref{eq:det_series} is bounded by
\begin{equation}
  \begin{aligned}
    \left\lvert \frac{(-1)^k}{k!} \int_t^\infty dx_1 \cdots \int_t^\infty dx_k \det(\tilde{K}_n(x_i, x_j))_{i,j=1}^k\right\rvert &\le \frac{k^{k/2}}{k!} \tilde{C}^k \int_t^\infty dx_1 e^{-cx_1} \cdots \int_t^\infty dx_k e^{-cx_k} \\
    &\le \frac{k^{-k/2}}{k!} \left( \tilde{C} c^{-1} e^{-ct} \right)^k.
  \end{aligned}
\end{equation}
Thus the series \eqref{eq:det_series} is dominated by an absolutely convergent series, and the dominated convergence theorem gives that the sum converges to the term-by-term limit. This is exactly $\det(I - \mathbf{P}_t \mathbf{K}_{\crossover}(c; \theta) \mathbf{P}_t)$, since the integrands are dominated by an absolutely integrable function according to \eqref{Hadamard_est}, so the dominated convergence theorem implies that each term converges to the corresponding term in the series for  $\det(I - \mathbf{P}_t \mathbf{K}_{\crossover}(c; \theta) \mathbf{P}_t)$.  This completes the proof of Lemma \ref{det_estimates}\ref{det_estimates_I1}. 

\paragraph{Proof of Lemma \ref{det_estimates}\ref{det_estimates_I2}}

Our estimate of $\det(I-\Proj_t \tilde{\K}(\theta) \Proj_t)$ for $\theta$ close to $1$ is based on the identity (see \cite[Theorem 9.2(d)]{Simon05})
\begin{equation}\label{det2_identity}
  \det(I-\Proj_t \tilde{\K} \Proj_t) = \dettwo (I-\Proj_t \tilde{\K} \Proj_t)e^{\Tr \Proj_t \tilde{\K} \Proj_t},
\end{equation}
where $\dettwo$ is defined in \cite[Chapter 9]{Simon05}. The $\dettwo$ functional can be estimated using the Hilbert-Schmidt norm, see \cite[Theorem 9.2(b)]{Simon05}. In particular we have
\begin{equation}
  \lvert \dettwo (I-\Proj_t \tilde{\K} \Proj_t) \rvert \le \exp(\lVert \Proj_t \tilde{\K} \Proj_t \rVert_2^2),
\end{equation}
where $\lVert \cdot \rVert_2$ represents the Hilbert-Schmidt norm. Combining this inequality with \eqref{det2_identity}, we have
\begin{equation}\label{trace_HS_estimate}
  \lvert \det(I-\Proj_t \tilde{\K} \Proj_t)\rvert \le \exp(\lVert \Proj_t \tilde{\K} \Proj_t \rVert_2^2) e^{\lvert \Tr \Proj_t \tilde{\K} \Proj_t \rvert},
\end{equation}
and we are left to estimate the trace and the Hilbert-Schmidt norms of $\Proj_t \tilde{\K} \Proj_t$.

We begin by  estimating the kernel $\tilde{K}_n(x,y)$ for $\theta \in (1-\ep,1+\ep)$. 
Since \eqref{eq:K3_est2}, \eqref{eq:K1_est} and \eqref{eq:K^4K^5largex} still hold for $\theta \in (1 - \epsilon, 1 + \epsilon)$, we concentrate on $K^{(2, M)}_n(x,y)$. Let us estimate this sum.  Using \eqref{eq:refined_global_est} we obtain the following estimate, which is uniform for $x,y$ in compact sets and $n-Mn^{1/3} < k <n+Mn^{1/3}$:
\begin{multline}
\varphi_{k}(2\sqrt{n}+xn^{-1/6})\varphi_{k}(2\sqrt{n}+yn^{-1/6}) = \\
 n^{-1/6}\Ai(x-(k-n)/(2n^{1/3}))\Ai(y-(k-n)/(2n^{1/3}))(1+\bigO(n^{-2/3})).
\end{multline}
The kernel $K^{(2, M)}_n(x,y)$ is thus estimated as 
\begin{equation}\label{eq:K2_I2}
  \lvert K^{(2, M)}_n(x,y) \rvert \le Cn^{-1/3} \sum_{k=n-Mn^{1/3}}^{n+Mn^{1/3}} \left\lvert c_k(\theta) \Ai(x-(k-n)/(2n^{1/3}))\Ai(y-(k-n)/(2n^{1/3}))\right\rvert
\end{equation}
for some constant $C$ which is independent of $n, M$ and $\theta$. We therefore need to estimate the coefficients $c_k$, and it is convenient to estimate the real and imaginary parts separately. They are
  \begin{equation}
  \Re c_{n+j}(\theta) =  \frac{\cos(\pi\theta)+q^{j+1/2}}{2\cos(\pi\theta)+q^{j+1/2}+q^{-j-1/2}}, \qquad   \Im c_{n+j}(\theta) =  \frac{\sin(\pi\theta)}{2\cos(\pi\theta)+q^{j+1/2}+q^{-j-1/2}}.
\end{equation}
To estimate the imaginary part, note that $\Im c_{n+j}(1) = 0$, but $c_{n+j}$ becomes large in a neighborhood of $\theta=1$ when $q$ is close to $1$. In this neighborhood the critical points of $\Im c_{n+j}(\theta) $ are found to be at 
  \begin{equation}
    \theta = 1 \pm \arcsin \left( \frac{q^{-j - 1/2} - q^{j + 1/2}}{q^{-j - 1/2} + q^{j + 1/2}} \right),
  \end{equation}
where $\lvert \Im c_{n + j}(\theta) \rvert$ attains the maximum. Plugging these critical points into $\Im c_{n+j}(\theta)$ we find the maximum value of $|\Im c_{n+j}(\theta)|$, obtaining
  \begin{equation}\label{eq:im_ck_est}
    \lvert \Im c_{n+j}(\theta) \rvert = \frac{1}{\lvert 1 - q^{-2j - 1} \rvert}.
  \end{equation}
  Now consider the real part of $c_k$. The maximum value of $\lvert \Re c_k\rvert$ is attained at $\theta=1$. At this point we have
  \begin{equation}\label{eq:re_ck_est}
    \lvert \Re c_{n+j} \rvert = \frac{1}{\lvert 1 - q^{-j - 1/2} \rvert}.
  \end{equation}
  Combining \eqref{eq:im_ck_est} and \eqref{eq:re_ck_est} with \eqref{eq:K2_I2} we obtain the estimate for $x, y$ in a compact set and $n$ large enough
  \begin{equation}\label{eq:log_est_K2}
    \left\lvert K^{(2, M)}_n(x,y)\right\rvert \le
    C\sum_{k=n-Mn^{1/3}}^{n+Mn^{1/3}} \frac{ \lvert \Ai(x-(k-n)/(2n^{1/3}))\Ai(y-(k-n)/(2n^{1/3}))) \rvert}{2c \lvert k-n \rvert +1} = \tilde{C} \log n,
  \end{equation}
where $\tilde{C}$ is a positive constant depending on $M, c$ but not $n, \theta$. Now consider the behavior of $\tilde{K}_n(x,y)$ as $x \to \infty$ when $\theta \in (1-\ep,1+\ep)$. The estimates \eqref{eq:K^4K^5largex} still hold here. The estimate  \eqref{K^2largex} needs to be modified slightly for $\theta \in (1-\ep,1+\ep)$. Since the dependence of $K^{(2, M)}_n(x,y)$ on $\theta$ comes entirely from the coefficients $c_k$, and the dependence on $x$ and $y$ comes entirely from the Hermite functions, we can combine the analysis leading to \eqref{eq:log_est_K2} with \eqref{K^2largex} to obtain the estimate
\begin{equation}\label{K^2largex_I2}
  \lvert K^{(2, 2 \max(-t, 1))}_n(x,y) \rvert  \leq C  e^{-\gamma x^{3/2}}\log n,
\end{equation}
for  $\theta \in (1-\ep,1+\ep)$, where once again $C$ and $\gamma$ are constants independent of $n$. Analogous to \eqref{Ktilde_largex}, we therefore have the uniform estimate for all $x, y \geq t$
\begin{equation}\label{Ktilde_largex_I2}
  \lvert \tilde{K}_n(x,y) \rvert  \le \tilde{C} e^{-cx}\log n,
\end{equation}
where $\tilde{C}$ depends on $t$ and $c$, but not $n, \theta$. The trace of $\Proj_t \tilde{\K} \Proj_t$ can thus be estimated as
\begin{equation}\label{Ktilde_trace}
  \lvert \Tr \Proj_t \tilde{\K} \Proj_t \rvert \le \int_t^\infty \lvert \tilde{K}_n(x,x)\rvert \,dx \le \tilde{C} \log n \int_t^\infty e^{-cx}\, dx =  \tilde{C}c^{-1} e^{-ct} \log n,
\end{equation}
and the Hilbert-Schmidt norm is estimated as 
\begin{equation}\label{Ktilde_HS}
  \begin{split}
    \lVert \Proj_t \tilde{\K} \Proj_t \rVert_2^2 = {}& \int_t^\infty\int_t^\infty \lvert \tilde{K}_n(x,y)\rvert^2 \,dy\,dx  \\
    \le {}& \tilde{C}^2(\log n)^2 \int_t^\infty \int_t^\infty e^{-cx}e^{-cy}\,dy\,dx = (\tilde{C}c^{-1} e^{-ct} \log n)^2.
  \end{split}
\end{equation}

Combining this with \eqref{trace_HS_estimate}, \eqref{Ktilde_trace}, and \eqref{Ktilde_HS} we obtain \eqref{eq:det_estimatesI2}. This completes the proof of Lemma \ref{det_estimates}\ref{det_estimates_I2} (with $\tilde{C}c^{-1}$ replaced by $\tilde{C}$).

\subsection{Gap probability for the rightmost particle: fixed $q \in (0, 1)$}

Let $q \in (0, 1)$ be fixed. Then we have the following lemma.
\begin{lem}\label{lem:Airy_convergence}
Let $s\equiv s_n = 2\sqrt{n}+tn^{-1/6}$. The following holds uniformly for all $\theta\in [-1,1]$.
\begin{equation}
 \det(I -  {\bf P}_{s} \mathbf{K}(q^{-n+1/2} e^{i\pi \theta}; q) {\bf P}_s) = \det(I-{\bf P}_t {\bf K}_{\rm Airy}{\bf P}_t) + o(1).
\end{equation}
\end{lem}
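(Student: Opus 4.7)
The plan is to follow the strategy of the proof of Lemma \ref{det_estimates}\ref{det_estimates_I1}, but exploit the fact that for fixed $q \in (0, 1)$, the Fermi--Dirac weights $c_k(\theta)$ of \eqref{eq:cks} transition from $1$ to $0$ across an $\bigO(1)$ window at $k = n$, rather than the $\bigO(n^{1/3})$ window present when $q = e^{-cn^{-1/3}}$. Since the edge rescaling uses index spacing of order $n^{1/3}$, this $\bigO(1)$ window collapses to a sharp cutoff in the rescaled variable, so heuristically the smooth kernel $K_{\crossover}(\cdot; c)$ should degenerate to $K_{\Airy}$ as $c \to +\infty$ (which is immediate by inspection of \eqref{eq:crossover_limit_result}). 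The fixed-$q$ case therefore corresponds to the degenerate, sharp-cutoff limit of the crossover analysis.

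Concretely, I would split
\begin{equation*}
    \tilde{K}_n(x, y; \theta) = \tilde{K}_n^{\GUE}(x, y) + \tilde{K}_n^{\mathrm{def}}(x, y; \theta),
\end{equation*}
where $\tilde{K}_n^{\GUE}(x, y) = n^{-1/6}\sum_{k = 0}^{n - 1} \varphi_k(2\sqrt{n} + xn^{-1/6})\varphi_k(2\sqrt{n} + yn^{-1/6})$ is the $\theta$-independent edge-scaled GUE correlation kernel, and $\tilde{K}_n^{\mathrm{def}}$ uses the weights $c_k(\theta) - \chi_{k < n}$. The convergence of $\det(I - \Proj_t \tilde{\mathbf{K}}_n^{\GUE} \Proj_t)$ to $\det(I - \Proj_t \K_{\Airy} \Proj_t)$ is the classical Tracy--Widom theorem for the GUE, which I would cite. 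The remaining task is to show that $\tilde{K}_n^{\mathrm{def}}$ vanishes in a sufficiently strong sense, uniformly in $\theta \in [-1, 1]$.

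For the defect kernel, the denominator $1 + e^{i\pi\theta}q^{k - n + 1/2}$ is bounded away from zero uniformly in $k \in \Z_{\geq 0}$ and $\theta \in [-1, 1]$: the worst case $\theta = \pm 1$ gives $|1 - q^{k - n + 1/2}|$, which for fixed $q < 1$ is minimized over integer $k$ by $\min(1 - q^{1/2}, q^{-1/2} - 1) > 0$. This yields the geometric bounds $|c_{n + j}(\theta)| \leq C q^{j}$ for $j \geq 0$ and $|c_{n - l}(\theta) - 1| \leq C q^{l}$ for $l \geq 1$, hence $\sum_k |c_k(\theta) - \chi_{k < n}| \leq C(q)$ uniformly in $\theta$. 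Combining this with the uniform bound $|\varphi_k(2\sqrt{n} + xn^{-1/6})| \leq Cn^{-1/12}$ for $k$ near $n$ and $x$ in a compact set (from \eqref{eq:refined_global_est} and \eqref{eq:asy_Airy_func}) gives $|\tilde{K}_n^{\mathrm{def}}(x, y; \theta)| \leq C(q) n^{-1/3}$ on compacts, uniformly in $\theta$.

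To upgrade this pointwise decay to Fredholm determinant convergence, I would recycle the large-$x$ estimates \eqref{eq:K3_est}--\eqref{eq:est_K^(4,M,N)} from the previous subsection (they were proved uniformly in $q$ and $\theta$) to produce an envelope of the form $|\tilde{K}_n(x, y; \theta)| \leq \tilde{C} e^{-\gamma x^{3/2}}$ valid for $x, y \geq t$, and then apply Hadamard's inequality together with dominated convergence to the Fredholm series exactly as in \eqref{eq:det_series}--\eqref{Hadamard_est}. I do not expect a fundamentally new obstacle; the only care needed is to verify that the constants in those earlier estimates stay uniform in $\theta$ near the endpoints $\theta = \pm 1$ in the fixed-$q$ regime, and this reduces to the non-degeneracy of the denominators just noted.
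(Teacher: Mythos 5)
Your proposal is correct, but it is organized differently from the paper's proof. The paper re-uses the three-block split \eqref{eq:three_kernels}--\eqref{eq:three_kernels2:3} and shows, via \eqref{eq:K2_int} and \eqref{eq:Airy_asymptotics}, that for fixed $q$ the piecewise-constant weight in the middle block converges to the sharp cutoff $\chi_{[-M,0]}(r)$, so that $K^{(2,M)}_n \to \int_0^M \Ai(x+r)\Ai(y+r)\,dr$ by bounded convergence; it then lets $M\to\infty$ and invokes the same tail/Hadamard machinery as in Lemma \ref{det_estimates}\ref{det_estimates_I1}, noting only that $c_k(\theta)$ stays uniformly bounded up to $\theta=\pm1$. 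You instead split off the exact finite-$n$ GUE kernel (weights $\chi_{k<n}$), outsource its edge limit to the classical Tracy--Widom theorem, and control the defect kernel through the geometric bounds $|c_{n+j}(\theta)|\le Cq^{j}$, $|c_{n-l}(\theta)-1|\le Cq^{l}$, which are indeed uniform in $\theta\in[-1,1]$ because the exponents $k-n+1/2$ in \eqref{eq:cks} are half-integers, so $|1+e^{i\pi\theta}q^{k-n+1/2}|\ge\min(1-q^{1/2},q^{-1/2}-1)>0$ --- this is exactly the non-degeneracy the paper's closing remark relies on, but you make it quantitative ($O(n^{-1/3})$ on compacts). Your route buys a shorter asymptotic computation (no need to redo the Plancherel--Rotach/Airy limit of the middle block, since it is absorbed into the cited GUE result) and an explicit rate for the $\theta$-dependence; the paper's route stays self-contained and structurally parallel to the crossover case, which is why the authors can simply say ``the procedure is the same.'' One point to make explicit in a full write-up: pointwise smallness of the defect on compacts does not by itself compare the two Fredholm determinants on $L^2(t,\infty)$; your plan handles this correctly by running the envelope $|\tilde K_n(x,y;\theta)|\le \tilde C e^{-\gamma x^{3/2}}$ (valid for all $\theta\in[-1,1]$ at fixed $q$, since the $x$-dependence of $K^{(2,M)}_n$ comes only from the Hermite functions) through the series argument \eqref{eq:det_series}--\eqref{Hadamard_est} applied to the full kernel, whose pointwise limit is the Airy kernel; alternatively one can upgrade the defect bound to a trace-norm bound on $L^2(t,\infty)$ using $\int_{s_n}^\infty \varphi_k^2 = O(n^{-1/3})$ for $k$ near $n$. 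Either finishing move is sound, so there is no gap.
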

\begin{proof}[Sketch of proof]
  In the sum $K^{(2, M)}_n(x,y)$ given by \eqref{eq:K2_int}, formula \eqref{eq:Airy_asymptotics}
  implies that the piecewise constant function in the integrand of \eqref{eq:K2_int} has the pointwise limit
  \begin{equation}
    \Ai(x - r)\Ai(y-r) \chi_{[-M,0]}(r).
  \end{equation}
  The bounded convergence theorem then implies that 
  \begin{equation}
    \begin{aligned}
      \lim_{n\to\infty} K^{(2, M)}_n(x,y) &= \int_{-M}^0  \Ai(x - r)\Ai(y-r)\,dr \\
      &=\int_0^M  \Ai(x + r)\Ai(y+r)\,dr.
    \end{aligned}
  \end{equation}
  Since $M$ was arbitrary we can take it to infinity, in which case $K^{(1, M)}_n(x,y)$ and $K^{(3, M)}_n(x,y)$ vanish by \eqref{eq:K3_est} and \eqref{eq:K1_est}, leaving
  \begin{equation}
    \lim_{n\to\infty}  K_n(x,y) = \lim_{M\to\infty} \lim_{n\to\infty} K^{(2, M)}_n(x,y) 
    =\int_0^\infty  \Ai(x + r)\Ai(y+r)\,dr,
  \end{equation}
  which is the kernel of ${\bf K}_{\rm Airy}$.

  To prove the convergence of the Fredholm determinant, we need to control the vanishing of $\tilde{K}_n(x, y)$ as $\max(x, y) \to \infty$. Since the procedure is the same as the proof of Lemma \ref{det_estimates}\ref{det_estimates_I1}, we omit the detailed verification. We only note that the proof works for all $\theta \in [-1, 1]$, since the coefficients $c_k(\theta)$ are uniformly bounded even if $\theta$ is around $\pm 1$.
\end{proof}

As $n \to \infty$, we have the very fast convergence analogous to Lemma \ref{lem:Pochhammer}
\begin{equation} \label{eq:convergence_inessential}
  \frac{(-q/w; q)_n}{(-q/w; q)_{\infty}} = 1 + \bigO(q^n), \quad  \frac{(q; q)_n}{(q; q)_{\infty}} = 1 + \bigO(q^n).
\end{equation}
Combining this fact with Lemma \ref{lem:Airy_convergence}, we see that the integral \eqref{eq:asy_formula} is
\begin{equation}
  \Prob_n(\max(x_1, \dotsc, x_n) \leq s) = \frac{1}{2}\int_{-1}^{1} \left( \sum^{\infty}_{k = -\infty} q^{k^2/2} e^{ik\pi\theta} \right)  \det(I - \Proj_t \K_{\Airy} \Proj_t)(1+o(1)) d\theta.
\end{equation}
After integrating, the only nonvanishing term in the infinite sum is $k=0$, thus we find
\begin{equation}
  \begin{split}
    \Prob_n(\max(x_1, \dotsc, x_n) \leq s) = {}& \frac{1}{2}\int_{-1}^{1}  \det(I -   \Proj_s \K_{\Airy} \Proj_s)(1+o(1)) d\theta \\
    = {}& \det(I - \Proj_t \K_{\Airy} \Proj_t)(1+o(1)).
  \end{split}
\end{equation}
This proves part \ref{thm:edge_fixedq} of Theorem \ref{thm:edge}.

\section{Proof of Theorem \ref{thm:bulk}} \label{sec:proof_bulk}

As in the proof of Theorem \ref{thm:edge}, we give the detail in part \ref{enu:thm:bulk_b}, and then show that a simplified argument works for part \ref{enu:thm:bulk_a}. Also for notational simplicity we only consider the $2$-correlation function. The generalization to $m$-correlation function is straightforward.


\subsection{Correlation functions for the bulk particles:  $q = e^{-c/n}$}

We assume the contour in \eqref{eq:formula_for_R^m} is
\begin{equation}
  \Gamma = \left\{ \lvert z \rvert = q^{-n} - 1 + \frac{\delta_n}{n} = e^{c} - 1 + \frac{\delta_n}{n} \right\}, 
\end{equation}
such that $\lvert \delta_n \rvert < 1$ and there exists $\epsilon(q) > 0$ independent of $n$ and
\begin{equation} \label{eq:z_abs_value}
  \lvert 1 - q^k(q^{-n} - 1 + \delta_n/n) \rvert > \epsilon(q)/n
\end{equation}
for all $k \geq 0$. The term $\delta_n$ in the definition of $\Gamma$ makes $\Gamma$ away from poles at $-q^{-k}$. For notational simplicity, we assume $\delta_n = 0$ later in this section. 

We compute the asymptotics of $F(z)$ and $K(x, y; z; q)$ separately, and then prove Theorem \ref{thm:edge}\ref{enu:thm:bulk_b}.

For the asymptotics of $F(z)$, we have the following estimate:
\begin{lem} \label{lem:est_F_bulk}
  Let $\epsilon > 0$ be a small constant independent of $n$.
  \begin{enumerate}[label=(\alph*)]
  \item \label{enu:lem:est_F_bulk_a}
    If $z \in \Gamma$ and $\lvert z - (e^{c} - 1) \rvert < \epsilon$, then there exist $\delta > 0$ and $C > 0$ such that
    \begin{equation} \label{eq:est_F_bulk_meso}
      \lvert F(z) \rvert < C \frac{\sqrt{2\pi n}}{\sqrt{c e^{c} (e^{c} - 1)}} \exp(-n \delta \lvert z - (e^{c} - 1) \rvert^2),
    \end{equation}
    and if $\lvert z - (e^{c} - 1) \rvert < n^{-2/5}$, then
    \begin{equation} \label{eq:est_F_bulk_micro}
      \frac{F(z)}{e^c -1} = \frac{\sqrt{2\pi n}}{\sqrt{c e^{c} (e^{c} - 1)}} \exp \left( \frac{n(z - (e^{c} - 1))^2}{2c e^{c}(e^{c} - 1)} \right) (1 + \bigO(n^{-1/5})).
    \end{equation}
  \item \label{enu:lem:est_F_bulk_b}
    If $z \in \Gamma$ and $\lvert z - (e^{c} - 1) \rvert \geq \epsilon$, then there exists $\delta > 0$ such that for large enough $n$,
    \begin{equation} \label{eq:est_F_bulk_macro}
      \lvert F(z) \rvert < e^{-\delta n}.
    \end{equation}
  \end{enumerate}
\end{lem}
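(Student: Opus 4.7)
The approach is a saddle-point analysis of the contour integral. Write
\begin{equation}
F(z) = q^{-n(n-1)/2} (q;q)_n \, e^{\psi(z)}, \qquad \psi(z) := \log(-z;q)_\infty - n \log z,
\end{equation}
and look for the saddle point of $\psi$. The equation $\psi'(z) = 0$ reads $\sum_{k \ge 0} q^k/(1+q^k z) = n/z$; with $q = e^{-c/n}$ the left-hand side is a Riemann sum of mesh $c/n$ for $(n/c)\int_0^\infty e^{-t}/(1+e^{-t}z)\,dt = (n/c)\log(1+z)/z$, so the leading-order saddle equation is $\log(1+z) = c$, giving $z_0 = e^c - 1$, which lies on $\Gamma$. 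This identifies $z_0$ as the saddle, and the fact that the true root of $\psi'=0$ is shifted from $z_0$ by $\bigO(1/n)$ is absorbed in the $\delta_n/n$ tuning of the contour (so we may work as if $\psi'(z_0) = 0$).

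First I would prove part \ref{enu:lem:est_F_bulk_b}, the macroscopic decay bound \eqref{eq:est_F_bulk_macro}. Parametrize $z = (e^c-1)e^{i\phi}$ and write
\begin{equation}
\operatorname{Re}\bigl(\psi(z) - \psi(z_0)\bigr) = \sum_{k \ge 0} \log\left|\frac{1+q^k z}{1+q^k z_0}\right|,
\end{equation}
(the $-n\log(z/z_0)$ term is purely imaginary since $|z|=|z_0|$). Again by a Riemann-sum approximation this is $\frac{n}{c}\int_0^1 \log|(1+us\,e^{i\phi})/(1+us)|\, ds/s + \bigO(1)$ with $s = q^k$, $u = e^c - 1$. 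A direct check using $\log|1+w|$ strict concavity in $w \in \compC$ at fixed $|w|$ shows this integral is strictly negative for $|\phi| \ge \epsilon'$, uniformly bounded above by a negative constant depending only on $\epsilon$ and $c$. Exponentiating gives \eqref{eq:est_F_bulk_macro}.

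Next, for part \ref{enu:lem:est_F_bulk_a}, the microscopic expansion near $z_0$, Taylor-expand
\begin{equation}
\psi(z) = \psi(z_0) + \tfrac{1}{2}\psi''(z_0)(z-z_0)^2 + \bigO\bigl(\lvert\psi'''\rvert \cdot |z-z_0|^3\bigr),
\end{equation}
and compute $\psi''(z_0) = n/z_0^2 - \sum_{k\ge 0} q^{2k}/(1+q^k z_0)^2$. The Riemann-sum approximation evaluates the sum to $(n/c)\int_0^1 u\,du/(1+uz_0)^2 = n\bigl(c/z_0^2 - 1/(z_0(1+z_0))\bigr)/c$, yielding $\psi''(z_0) = n/(cz_0(1+z_0)) + \bigO(1) = n/(c e^c(e^c-1)) + \bigO(1)$. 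Since $\psi'''(z) = \bigO(n)$, for $|z-z_0| < n^{-2/5}$ the cubic remainder is $\bigO(n^{-1/5})$, matching the error in \eqref{eq:est_F_bulk_micro}. The mesoscopic bound \eqref{eq:est_F_bulk_meso} is then an easy consequence: on the arc where $|z-z_0|<\epsilon$, $(z-z_0)^2 = -(e^c-1)^2\phi^2 + \bigO(\phi^3)$ is nearly negative real, so $\operatorname{Re}(\psi(z)-\psi(z_0)) \le -\delta n |z-z_0|^2$ for some $\delta > 0$.

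The main obstacle and most delicate step is the prefactor. One must verify
\begin{equation}
q^{-n(n-1)/2}(q;q)_n \, e^{\psi(z_0)} = (e^c-1)\sqrt{\frac{2\pi n}{c e^c (e^c-1)}}\bigl(1 + o(1)\bigr),
\end{equation}
despite the fact that each of the three factors $q^{-n(n-1)/2} = e^{c(n-1)/2}$, $(q;q)_n$, and $(-z_0;q)_\infty$ grows or decays exponentially in $n$. The plan is to apply Euler--Maclaurin to $\log(q;q)_n = \sum_{k=1}^n \log(1-e^{-ck/n})$, splitting off the endpoint singularity at $k=0$ via $\log(1-e^{-ct}) = \log(ct) + \tilde g(t)$ with $\tilde g$ smooth and $\tilde g(0)=0$ (the $\log(ct)$ piece becomes $\log(n!)$, which is handled by Stirling and contributes the $\sqrt{2\pi n}$). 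Similarly, Euler--Maclaurin applied to the smooth integrand of $\log(-z_0;q)_\infty$ yields $(n/c)\int_0^c v/(1-e^{-v})\,dv + \frac{1}{2}\log(1+z_0) + \bigO(1/n)$. The exponential pieces of these three contributions cancel by the identity $\text{Li}_2(e^{-c}) - \pi^2/6 + \int_0^c v/(1-e^{-v})\,dv = c\log(e^c - 1) - c^2/2$ (verified by differentiating both sides in $c$), while the polynomial factors conspire to the stated $\sqrt{2\pi n/(ce^c(e^c-1))}\cdot(e^c-1)$. Keeping careful track of the subleading corrections in each Euler--Maclaurin estimate yields the $1+\bigO(n^{-1/5})$ error in \eqref{eq:est_F_bulk_micro} (the bottleneck being the cubic remainder in the Taylor expansion of $\psi$, not the Euler--Maclaurin errors, which are $\bigO(1/n)$).
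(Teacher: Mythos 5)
Your argument is correct, and the saddle-point skeleton (saddle at $z_0=e^c-1$, second derivative $\psi''(z_0)=n/(ce^c(e^c-1))+\bigO(1)$, cubic remainder $\bigO(n\lvert z-z_0\rvert^3)=\bigO(n^{-1/5})$ on $\lvert z-z_0\rvert<n^{-2/5}$, monotone decay of $\lvert 1+q^kz\rvert$ along the circle for part \ref{enu:lem:est_F_bulk_b}) coincides with the paper's. Where you genuinely diverge is the prefactor: the paper explicitly declines the direct evaluation (``a direct computation is possible, [but] it is difficult due to the evaluation of $(q;q)_n$ with $q$ close to $1$'') and instead exploits the normalization $\frac{1}{2\pi i}\oint_\Gamma F(z)\frac{dz}{z}=\Prob_n(x_1,\dotsc,x_n\in\realR)=1$ from \eqref{eq:general_gap_prob} with $A=\realR$: once the localization and quadratic expansion are known, Laplace's method turns this identity into \eqref{eq:est_F_saddle_pt} without ever computing $(q;q)_n$ or $(-z_0;q)_\infty$ asymptotically. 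Your route computes the constant head-on via Stirling, Euler--Maclaurin, and the dilogarithm identity $\Li_2(e^{-c})-\pi^2/6+\int_0^c v/(1-e^{-v})\,dv=c\log(e^c-1)-c^2/2$; I checked that the pieces assemble correctly, the $\bigO(n)$ terms cancel exactly and the surviving $\bigO(\log n)$ and $\bigO(1)$ terms give $F(z_0)=\sqrt{2\pi n(1-e^{-c})/c}\,(1+\bigO(n^{-1}))$, which equals $(e^c-1)\sqrt{2\pi n/(ce^c(e^c-1))}$ as required. The trade-off: the paper's trick is shorter and immune to bookkeeping errors in the exponentially large/small factors, but it leans on the gap-probability formula of Theorem \ref{thm:algebraic}; your computation is self-contained and also delivers the constant in $\log F(z_0)$ explicitly, at the price of several delicate Euler--Maclaurin steps. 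Two small repairs: the $\delta_n/n$ shift of $\Gamma$ is there to keep the contour away from the poles $-q^{-k}$, not to recenter the saddle — what you actually need is that $\psi'(z_0)=\bigO(1)$ (true, by the trapezoidal error estimate), so the linear term contributes $\bigO(n^{-2/5})$ on the microscopic scale and is dominated by the quadratic term on the mesoscopic scale; and in part \ref{enu:lem:est_F_bulk_b} your Riemann-sum error is not $\bigO(1)$ uniformly when $\arg z$ is near $\pm\pi$ and $e^c-1\ge 1$, since a single factor $1+q^kz$ may be as small as $\bigO(1/n)$ — but this only makes $\lvert F(z)\rvert$ smaller (each term $\log\lvert 1+q^kz\rvert-\log(1+q^kz_0)\le 0$), so discarding the offending terms still yields \eqref{eq:est_F_bulk_macro}.
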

\begin{proof}
  We write
  \begin{equation} \label{eq:log_F}
    \frac{1}{n} \log F(z) = \frac{1}{n} \log \left( q^{-n(n - 1)/2} (q; q)_n \right) - \log z + \int^{\infty}_0 \log (1 + e^{-c \lfloor nx \rfloor/n} z) dx.
  \end{equation}
  Unless $z$ is very close to the negative real line, $n^{-1} \log F(z)$ is approximated by
  \begin{equation}
    \frac{1}{n} \log F(z) = G_n(z) + \bigO(n^{-1}) \quad \text{if $\arg z \in (-\pi + \epsilon', \pi - \epsilon')$},
  \end{equation}
  where $\epsilon'$ is any positive constant and
  \begin{equation}
    G_n(z) = \frac{1}{n} \log \left( q^{-n(n - 1)/2} (q; q)_n \right) - \log z + \int^{\infty}_0 \log(1 + e^{-c x}z) dx.
  \end{equation}
  
  Hence by differentiation, we have that for $\lvert z \rvert = e^{c} - 1$ and $\arg z \in (-\pi + \epsilon', \pi - \epsilon')$, 
  \begin{align}
    \frac{1}{n} \frac{d}{dz} \log F(z) = {}& G'_n(z) + \bigO(n^{-1}) = -\frac{1}{z} + \int^{\infty}_0 \frac{e^{-c x}}{1 + z e^{-c x}} dx + \bigO(n^{-1}), \\
    \frac{1}{n} \frac{d^2}{dz^2} \log F(z) = {}& G''_n(z) + \bigO(n^{-1}) = \frac{1}{z^2} - \int^{\infty}_0 \frac{e^{-2c x}}{(1 + z e^{-c x})^2} dx + \bigO(n^{-1}),
  \end{align}
  and furthermore
  \begin{equation}
    \left. \frac{1}{n} \frac{d}{dz} \log F(z) \right\rvert_{z = e^c - 1} =  \bigO(n^{-1}), \quad \left. \frac{1}{n} \frac{d^2}{dz^2} \log F(z) \right\rvert_{z = e^c - 1} = \frac{1}{c} \frac{1}{e^{c}(e^{c} - 1)} + \bigO(n^{-1}).
  \end{equation}
  Hence $z = e^{c} - 1$ is a saddle point for $\log F(z)$, and as $z$ moves away from the saddle point $e^{c} - 1$, $\lvert F(z) \rvert$ decreases rapidly, provided that $z$ is in the vicinity of the saddle point. Actually, for $z$ on $\Gamma$ but not in the vicinity of $e^{c} - 1$, note that $\lvert z^{-n - 1} \rvert$ is a constant for $z \in \Gamma$ while $\lvert 1 + q^k z \rvert$ decreases as $\arg z$ changes from $0$ to $\pm \pi$, $\lvert F(z) \rvert$ decreases as $\arg z$ changes from $0$ to $\pm \pi$.
   
  The remaining task is to evaluate $F(e^{c} - 1)$ as $n \to \infty$. Although a direct computation is possible, it is difficult due to the evaluation of $(q; q)_n$ with $q$ close to $1$. Instead, we take an indirect approach. 
  
  In the gap probability formula \eqref{eq:general_gap_prob}, if we take $A = \realR$, we have that the probability on the left-hand side is $1$, and Fredholm determinant on the right-hand side is trivially $1$, so we have
  \begin{equation}
    \frac{1}{2\pi i} \oint_{\Gamma} F(z) \frac{dz}{z} = 1.
  \end{equation}
  By the asymptotic properties of $F(z)$ discussed above, we apply the steepest-descent analysis, and have that
  \begin{equation} \label{eq:saddle}
    \frac{1}{2\pi i} \int^{\infty \cdot i}_{-\infty \cdot i} F(e^{c} - 1) e^{\frac{w^2}{2c e^{c}(e^{c} - 1)}} \frac{dw}{\sqrt{n}(e^c-1)} = 1 + \bigO(n^{-1}),
  \end{equation}
  and then
  \begin{equation} \label{eq:est_F_saddle_pt}
  \frac{  F(e^{c} - 1)}{e^c-1} = \frac{\sqrt{2\pi n}}{\sqrt{c e^{c} (e^{c} - 1)}} (1 + \bigO(n^{-1})).
  \end{equation}
  Hence the lemma is proved.
\end{proof}

We compute the asymptotics of $K(x_1, x_2; z; q)$ with the scaling
\begin{equation} \label{eq:specify_x_y_crit_bulk}
  x_1 = 2x \sqrt{n} + \frac{\pi \xi}{\sqrt{n/c}}, \quad x_2 = 2x\sqrt{n} + \frac{\pi\eta}{\sqrt{n/c}},
\end{equation}
where $x\in \realR$ is fixed and $\xi, \eta$ in a compact subset of $\realR$. The result we need is as follows.
\begin{lem} \label{lem:est_K_bulk}
  Let $\epsilon > 0$ be a small constant independent of $n$. In both parts of the lemma we assume $q=e^{-c/n}$ and $x_1$, $x_2$ are as in \eqref{eq:specify_x_y_crit_bulk}.
  \begin{enumerate}[label=(\alph*)]
  \item \label{enu:lem:est_K_bulk:a}
    If $z \in \Gamma$ and $\lvert z - (e^{c} - 1) \rvert < \epsilon$, then
    \begin{equation} \label{eq:est_K_bulk_near}
      \lim_{n \to \infty} \frac{\sqrt{c}}{\sqrt{n}} K(x_1, x_2; z; q) = K_{\interpolating}(\xi, \eta;x;  c; z),
    \end{equation}
    where
    \begin{equation}
       K_{\interpolating}(\xi, \eta;x; c; z) = \frac{1}{\pi} \int^{\infty}_0 \frac{z}{e^{u^2} e^{c x^2} + z} \cos \left( \pi u(\xi - \eta) \right) du.
    \end{equation}
  \item \label{enu:lem:est_K_bulk:b}
    If $z \in \Gamma$ and $\lvert z - (e^{c} - 1) \rvert \geq \epsilon$, then there exists $C > 0$ such that for large enough $n$,
    \begin{equation} \label{eq:est_K_bulk_2}
      \left\lvert K(x_1, x_2; z; q) \right\rvert < C n^2.
    \end{equation}
  \end{enumerate}
\end{lem}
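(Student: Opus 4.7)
The plan is to analyze the sum defining $K(x_1, x_2; z; q)$ by Plancherel--Rotach asymptotics for the Hermite functions $\varphi_k$, in parallel with the edge analysis carried out for Theorem \ref{thm:edge}, but now in the bulk (oscillatory) regime. For part \ref{enu:lem:est_K_bulk:a} the starting point is localization of the sum. Since $x_1 \sim 2x\sqrt{n}$, the function $\varphi_k(x_1)$ is exponentially small when $k < x^2 n - O(n^{1/3})$ (beyond the classical turning point $2\sqrt{k}$), is governed by Airy asymptotics in the transitional zone $|k - x^2 n| = O(n^{1/3})$, and is oscillatory in the bulk range $k \ge x^2 n + O(n^{1/3})$. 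I would split the sum accordingly, with an additional upper cutoff $k \le Kn$ for large $K$, beyond which $q^k$ supplies exponential decay. The forbidden range contributes $O(n^{-\infty})$ by Airy-decay estimates of the type already used in Section \ref{sec:proof_edge}; the transitional zone contains only $O(n^{1/3})$ uniformly bounded terms, hence contributes $o(\sqrt{n})$, which is negligible after the normalizing factor $\sqrt{c/n}$.

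In the dominant bulk range the natural substitution is $k = x^2 n + u^2 n/c$, under which $q^k = e^{-cx^2 - u^2}$ and therefore $\frac{q^k z}{1 + q^k z} = \frac{z}{e^{cx^2 + u^2} + z}$. Applying the oscillatory Plancherel--Rotach formula $\varphi_k(x) \sim \sqrt{2/(\pi\sqrt{4k - x^2})}\cos(\Psi_k(x))$ together with product-to-sum, the ``sum-of-phases'' piece oscillates rapidly in $k$ and is negligible upon summation (by summation by parts / Riemann--Lebesgue), while the ``difference-of-phases'' piece satisfies $\Psi_k(x_1) - \Psi_k(x_2) \sim \sqrt{k - x_1^2/4}\,(x_1 - x_2) = \pi u(\xi - \eta)$, and the amplitude simplifies via $\sqrt{4k - x_1^2} = 2u\sqrt{n/c}(1 + o(1))$. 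Converting the sum to a Riemann integral with $dk = (2un/c)\,du$ and collecting factors, the bulk contribution evaluates to $\sqrt{n/c}\,K_{\interpolating}(\xi, \eta; x; c; z)$, and after multiplication by $\sqrt{c/n}$ this is precisely $K_{\interpolating}(\xi, \eta; x; c; z)$.

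For part \ref{enu:lem:est_K_bulk:b} the bound is crude and follows from uniform estimates. The defining constraint on $\Gamma$ ensures that $|1 + q^k z| > q^k \epsilon(q)/n$ for all $k \ge 0$ and $z \in \Gamma$ (since $|1 + q^k z| = q^k |z + q^{-k}|$ and $|z + q^{-k}|$ on $\Gamma$ is minimized along the real axis at distance $\ge \epsilon(q)/n$), hence $\bigl|\frac{q^k z}{1 + q^k z}\bigr| \le n|z|/\epsilon(q) \le Cn$ uniformly in $k$ and $z \in \Gamma$. Combined with the uniform bound $|\varphi_k(x)| \le C$ and the geometric decay $q^k \le e^{-ck/n}$ controlling the tail, term-by-term summation gives $|K(x_1, x_2; z; q)| \le Cn \cdot O(n/c) = O(n^2)$.

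The main technical obstacle will be the rigorous treatment of the bulk Plancherel--Rotach expansion with uniform error control across the full bulk range of $k$, and the matching with the Airy transition zone so that the contribution from $|k - x^2 n| = O(n^{1/3})$ is indeed negligible at the target precision. One natural approach is to invoke Olver's uniform Airy-type expansion (as used in Section \ref{sec:proof_edge} for the edge regime), which simultaneously covers the bulk and transition regimes. A secondary difficulty is justifying that the rapidly oscillating phase-sum contribution is negligible after summation, which requires a summation-by-parts argument exploiting the slow variation in $k$ of the spectral weight $q^k z/(1 + q^k z)$.
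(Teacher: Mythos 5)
Your outline is correct and follows essentially the same strategy as the paper's proof: split the sum over $k$ according to the classically forbidden, transitional, and oscillatory ranges of $\varphi_k$ at $x\approx 2x\sqrt n$, use Plancherel--Rotach asymptotics with product-to-sum, discard the rapidly oscillating ``sum-of-phases'' piece by cancellation, and convert the ``difference-of-phases'' piece into the Riemann integral giving $K_{\interpolating}$; part \ref{enu:lem:est_K_bulk:b} is the same crude counting bound. The one structural difference is your treatment of the turning point: you take an $O(n^{1/3})$ transitional window and aim to push the oscillatory asymptotics all the way down to it via Olver's uniform Airy expansion, and you correctly flag this matching as your main technical obstacle. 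The paper sidesteps exactly this issue by using windows of width $\epsilon n$ (pieces $K^{\sup}$, $K^{\inter}$, $K^{\sub}$, $K^{\res}$): with $\epsilon$ fixed, the classical oscillatory formula (Sz\H{o}g\H{o} 8.22.12) is valid on $k>n(x^2+\epsilon)$ and yields the integral from $\sqrt{\epsilon c}$ to $\infty$, the transitional piece is only shown to be $O(\sqrt\epsilon)$ after normalization via \eqref{eq:refined_global_est} and \eqref{eq:asy_Airy_func} (matching the $u\in(0,\sqrt{\epsilon c})$ part of the limit integral), and $\epsilon\to 0$ is taken at the end; so no uniform bulk-to-Airy matching is ever needed, which buys a shorter argument at the price of a two-step limit. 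Two smaller points: your claim that the forbidden region is $O(n^{-\infty})$ is overstated near the window boundary (it is only negligible, which is all you need; the paper handles it with Sz\H{o}g\H{o} 8.22.13 for $n\epsilon<k\le n(x^2-\epsilon)$ and Olver's estimates for $k\le n\epsilon$), and your argument, like the paper's main text, implicitly assumes $x\neq 0$; the case $x=0$ has no forbidden region and needs the separate small-argument Hermite asymptotics (see Remark \ref{rmk:c=0}). For part \ref{enu:lem:est_K_bulk:b}, your lower bound $|1+q^kz|\ge q^k\epsilon(q)/n$ is valid but weaker than what \eqref{eq:z_abs_value} actually gives, namely $|1+q^kz|\ge\epsilon(q)/n$ for all $k$; either suffices for the uniform $Cn$ bound on the multiplier, and your $O(n)\times O(n)$ count plus geometric tail reproduces the paper's $O(n^2)$ bound (the paper additionally notes that the $k\ge n$ part is only $O(\sqrt n)$ because there the denominator is bounded below by $e^{-c}$, but this refinement is not needed for the stated estimate).
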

Here we note that 
\begin{equation} \label{eq:relation_K_inters}
  K_{\interpolating}(\xi, \eta; x; c; e^{c} - 1) =  K_{\interpolating} \left(\xi, \eta; \frac{e^{c x^2}}{e^{c} - 1} \right).
\end{equation}

\begin{proof}[Proof of Lemma \ref{lem:est_K_bulk}\ref{enu:lem:est_K_bulk:a}]
  We concentrate on the case $x > 0$. The argument for the $x < 0$ case is the same, since $\varphi_k$ are even or odd functions, depending on the parity of $k$. The case $x = 0$ requires some modification, and we discuss it in Remark \ref{rmk:c=0}.
  
  Recall that $K(x_1, x_2; z; q)$ is an infinite linear combination of $\varphi_k(x_1) \varphi_k(x_2)$ with $k \geq 0$. Let $\epsilon > 0$ be a small constant. Then we divide $K(x_1, x_2; z; q)$ into four parts as follows:
  \begin{align}
    K^{\sup}(x_1; x_2; z; q) = {}& \sum^{\infty}_{k > n(x^2 + \epsilon)} \frac{q^k z}{1 + q^k z} \varphi_k(x_1) \varphi_k(x_2), \\
    K^{\inter}(x_1; x_2; z; q) = {}& \sum_{n(x^2 - \epsilon) < k \leq n(x^2 + \epsilon)} \frac{q^k z}{1 + q^k z}  \varphi_k(x_1) \varphi_k(x_2), \\
    K^{\sub}(x_1; x_2; z; q) = {}& \sum_{n\epsilon < k \leq n(x^2 - \epsilon)} \frac{q^k z}{1 + q^k z} \varphi_k(x_1) \varphi_k(x_2), \\
    K^{\res}(x_1; x_2; z; q) = {}& \sum_{0 \leq k \leq n\epsilon} \frac{q^k z}{1 + q^k z}  \varphi_k(x_1) \varphi_k(x_2).
  \end{align}
  
  Below we show that as $n \to \infty$, for all small enough $\epsilon > 0$, there exists $C > 0$ that is independent of $\epsilon$, such that
  \begin{gather}
    \left\lvert \frac{\sqrt{c}}{\sqrt{n}} K^{\sup}(x_1, x_2; z; q) - \frac{1}{\pi} \int^{\infty}_{\sqrt{\epsilon c}} \frac{ z}{e^{ u^2}e^{c x^2} + z} \cos \left( \pi u(\xi - \eta) \right) du \right\rvert < C \sqrt{\epsilon}, \label{eq:est_ker_crit_sup} \\
    \left\lvert \frac{\sqrt{c}}{\sqrt{n}} K^{\inter}(x_1, x_2; z; q) \right\rvert < C \sqrt{\epsilon}, \quad \left\lvert \frac{\sqrt{c}}{\sqrt{n}} K^{\sub}(x_1, x_2; z; q) \right\rvert = o(1), \quad \left\lvert \frac{\sqrt{c}}{\sqrt{n}} K^{\res}(x_1, x_2; z; q) \right\rvert = o(1). \label{eq:est_ker_crit_remainder}
  \end{gather}
  By taking $\epsilon \to 0$ in the inequalities above, we prove \eqref{eq:est_K_bulk_near}. Below we prove the four results. For notational simplicity, when we prove the three estimates in \eqref{eq:est_ker_crit_remainder}, we only consider the case that $x_1 = x_2 =  2x\sqrt{n}$. 
  
  First we prove \eqref{eq:est_ker_crit_sup}. By \cite[Formula 8.22.12]{Szego75}, for $k > n(x^2 + \epsilon)$, we have
  \begin{equation} \label{eq:limit_Herm_bulk}
    \begin{split}
      \sqrt{\pi} k^{1/4} \varphi_k(2x\sqrt{n}) = {}& \sin(\phi_k)^{-1/2} \sin \left[ \frac{2k + 1}{4} (\sin(2\phi_k) - 2\phi_k) + \frac{3\pi}{4} \right] + \bigO(n^{-1}) \\
      = {}& \left( 1 - \frac{x^2 n}{k + \frac{1}{2}} \right)^{-1/4} \sin \left[ -(2k + 1) \theta_k \right] + \bigO(n^{-1}),
    \end{split}
  \end{equation}
  where
  \begin{equation}
    \phi_k = \arccos \left( x\sqrt{\frac{n}{k + 1/2}} \right), \quad \theta_k = -(2k + 1) \int^1_{x\sqrt{\frac{n}{k + 1/2}}} \sqrt{1 - t^2} dt + \frac{3\pi}{4}.
  \end{equation}
If $x_1$ is as specified in \eqref{eq:specify_x_y_crit_bulk}, then 
  \begin{equation} \label{eq:limit_Herm_bulk_x}
    \sqrt{\pi} k^{1/4} \varphi_k(x_1) = \left( 1 - \frac{x^2 n}{k + \frac{1}{2}} \right)^{-1/4} \sin \left( \theta_k +x\pi \xi\sqrt{c} \sqrt{\frac{k + 1/2}{x^2 n} - 1}  \right) + \bigO(n^{-1}),
  \end{equation}
  and also have an analogous formula for $\varphi_k(x_2)$ with $x_2$ specified in \eqref{eq:specify_x_y_crit_bulk}. Then we have
  \begin{multline}
    \pi k^{1/2} \varphi_k(x_1) \varphi_k(x_2) = \left( 1 - \frac{x^2 n}{k + \frac{1}{2}} \right)^{-1/2} \\
    \times \frac{1}{2} \left[ \cos \left( \pi\sqrt{c} (\xi - \eta)\sqrt{\frac{k + 1/2}{n} - x^2} \right) - \cos \left( 2\theta_k + \pi\sqrt{c}  (\xi + \eta)\sqrt{\frac{k + 1/2}{n} - x^2}\right) \right] + \bigO(n^{-1}).
  \end{multline}
  Now we define
  \begin{equation} \label{eq:terms_K^sup_1}
    K^{\sup, 1}(x_1, x_2; z; q) = \sum^{\infty}_{k > n(x^2 + \epsilon)} \frac{q^k z}{1 + q^k z} \frac{k^{-1/2}}{2\pi} \left( 1 - \frac{x^2 n}{k + \frac{1}{2}} \right)^{-1/2} \cos \left(\pi \sqrt{c} (\xi - \eta)\sqrt{\frac{k + 1/2}{n} - x^2}  \right),
  \end{equation}
  and
  \begin{equation} \label{eq:terms_K^sup_2}
    \begin{split}
      & K^{\sup, 2}(x_1, x_2; z; q) = K^{\sup}(x_1, x_2; z; q) - K^{\sup, 1}(x_1, x_2; z; q) \\
      = {}& \sum^{\infty}_{k > n(x + \epsilon)} \frac{q^k z}{1 + q^k z} \left[ \frac{k^{-1/2}}{2\pi} \left( 1 - \frac{x^2 n}{k + \frac{1}{2}} \right)^{-1/2} \cos \left( 2\theta_k +\pi\sqrt{c}(\xi + \eta)  \sqrt{\frac{k + 1/2}{n} - x^2} \right) + \bigO(1) \right].
    \end{split}
  \end{equation}
  
  It is not hard to see that if $\arg(z) \in (-\pi + \epsilon', \pi - \epsilon')$ for $\epsilon' > 0$, then
  \begin{equation} \label{eq:asy_K^sup_1}
    \begin{split}
      \frac{\sqrt{c}}{\sqrt{n}} K^{\sup, 1}(x_1, x_2; z; q) = {}&\frac{\sqrt{c}}{2\pi} \int^{\infty}_{x^2 + \epsilon} \frac{e^{-c \kappa} z}{1 + e^{-c \kappa} z} \frac{1}{\sqrt{\kappa}} \left( 1 - \frac{x^2}{\kappa} \right)^{-1/2} \cos \left(\pi\sqrt{c} \sqrt{\kappa - x^2} (\xi - \eta) \right) d\kappa + \bigO(n^{-1}) \\
      = {}&\frac{\sqrt{c}}{2\pi} \int^{\infty}_{\epsilon} \frac{ z}{e^{c t}e^{c x^2 }+ z} \cos \left(\pi \sqrt{tc}(\xi - \eta) \right) \frac{dt}{\sqrt{t}} + \bigO(n^{-1}) \\
      = {}&\frac{1}{\pi} \int^{\infty}_{\sqrt{\epsilon c}} \frac{ z}{e^{ u^2}e^{c x^2}  + z} \cos \left(\pi u(\xi - \eta) \right) du + \bigO(n^{-1}).
    \end{split}
  \end{equation}
  On the other hand, we need to show that
  \begin{equation} \label{eq:asy_K^sup_2}
    \left\lvert \frac{c}{\sqrt{n}} K^{\sup, 2}(x_1, x_2; z; q) \right\rvert = o(1).
  \end{equation}
  Since $q^kz/(1 + q^k z) = \bigO(e^{-ck/n})$, although $K^{\sup, 2}$ is defined by an infinite sum in \eqref{eq:terms_K^sup_2}, it suffices to show that for any $\epsilon > 0$ and $N > x$, as $n \to \infty$,
  \begin{equation} \label{eq:auxiliary_sum}
    \sum_{n(x + \epsilon) < k < nN} \frac{q^k z}{1 + q^k z} \frac{k^{-1/2}}{2\pi} \left( 1 - \frac{x^2 n}{k + \frac{1}{2}} \right)^{-1/2} \cos \left( 2\theta_k +\pi\sqrt{c}(\xi + \eta)  \sqrt{\frac{k + 1/2}{n} - x^2} \right) = o(\sqrt{n}).
  \end{equation}
  We note that if we sum up the absolute values of the terms in \eqref{eq:auxiliary_sum}, the result is $\bigO(\sqrt{n})$. For any $k > n(x^2 + \epsilon)$,
  \begin{equation}
    \theta_k - \theta_{k - 1} = \arcsin(x\sqrt{n/k}) - \frac{\pi}{2} + \bigO(n^{-1}),
  \end{equation}
  hence the terms in \eqref{eq:auxiliary_sum} has cancellations. It is not hard to see that the cancellations make the left-hand side of \eqref{eq:auxiliary_sum} to be $o(\sqrt{n})$.
  
  The approximations \eqref{eq:asy_K^sup_1} and \eqref{eq:asy_K^sup_2} imply \eqref{eq:est_ker_crit_sup}.
  
  Next we prove the estimates \eqref{eq:est_ker_crit_remainder} in the special case $x_1 = x_2 =  2x\sqrt{n}$. The analysis is nearly identical for general $\xi$ and $\eta$.

  To prove the first estimate, we use the approximation formula \eqref{eq:refined_global_est}. For $n(x^2 - \epsilon) < k \leq n(x^2 + \epsilon)$, and $x$ in a compact subset of $(-1, +\infty)$,
  \begin{equation}
    k^{1/12} \varphi_k(2\sqrt{k + 1/2} x) = 2^{1/6} \left( \frac{\zeta(x)}{x^2 - 1} \right)^{1/4} \Ai((2k + 1)^{2/3} \zeta(x)) + \bigO(n^{-1}).
  \end{equation}
  Hence we have
  \begin{equation}
    k^{1/12} \varphi_k(2x\sqrt{n}) = 2^{1/6} \left( \frac{\zeta(x_k)}{x^2_k - 1} \right)^{1/4} \Ai((2k + 1)^{2/3} \zeta(x_k)) + \bigO(n^{-1}), \quad \text{where} \quad x_k = \sqrt{\frac{x^2 n}{k + 1/2}}.
  \end{equation}
  
  Hence using the estimate \eqref{eq:asy_Airy_func} of Airy function, we have that if $\arg z \in (-\pi + \delta, \pi - \delta)$ and $n$ is large enough, the first inequality of \eqref{eq:est_ker_crit_remainder} is proved by the estimate
  \begin{equation}
    \begin{split}
      & \frac{1}{\sqrt{n}} \left\lvert K^{\inter}(2x\sqrt{n}, 2x\sqrt{n}; z) \right\rvert \\
      = {}& \left( \frac{2}{n} \right)^{1/3} \left\lvert \int^{x^2 + \epsilon}_{x^2 - \epsilon} \frac{e^{-c \kappa} z}{1 + e^{-c \kappa} z} \left( \frac{\zeta(\frac{x}{\sqrt{\kappa}})}{\frac{x^2}{\kappa} - 1} \right)^{1/2} \Ai \left( (2n)^{2/3} \zeta \left( \frac{x}{\sqrt{\kappa}} \right) \right)^2 d\kappa + \bigO(n^{-1}) \right\rvert \\
      \leq {}& \left( \frac{2}{n} \right)^{1/3} \int^{x^2 + \epsilon}_{x^2 - \epsilon} \left\lvert \frac{e^{-c \kappa} z}{1 + e^{-c \kappa} z} \right\rvert 2^{1/3} \left( \frac{\zeta(\frac{x}{\sqrt{\kappa}})}{\frac{x^2}{\kappa} - 1} \right)^{1/2} f \left( (2n)^{2/3} \zeta \left( \frac{x}{\sqrt{\kappa}} \right) \right)^2 d\kappa \\
      \leq {}& C \sqrt{\epsilon},
    \end{split}
  \end{equation}
  where $f$ is defined in \eqref{eq:asy_Airy_func}, $\zeta(x)$ has the behavior close to $1$ given in \eqref{eq:local_prop_zeta}, and $C > 0$ is independent of $n$ and $\epsilon$. 
  
  To prove the second estimate, By \cite[Formula 8.22.13]{Szego75}, for $n\epsilon < k \leq n(x^2 - \epsilon)$, we have
  \begin{equation} \label{eq:limit_Herm_out}
    \begin{split}
      \sqrt{\pi} k^{1/4} \varphi_k(2x\sqrt{n}) = {}& \frac{1}{2} \sinh(\phi_k)^{-1/2} \exp \left[ \frac{2k + 1}{4} (2\phi_k - \sinh(2\phi_k)) \right] (1 + \bigO(n^{-1})) \\
      = {}& \frac{1}{2} \left( \frac{x^2 n}{k + \frac{1}{2}} - 1 \right)^{-1/4} \exp \left[ -(2k + 1) \int^{x\sqrt{\frac{n}{k + 1/2}}}_1 \sqrt{t^2 - 1} dt \right] (1 + \bigO(n^{-1})),
    \end{split}
  \end{equation}
  where
  \begin{equation}
    \phi_k = \arccosh \left( x\sqrt{\frac{n}{k + 1/2}} \right).
  \end{equation}
  It is clear that
  \begin{equation}
    \lvert \varphi_k(2x\sqrt{n}) \rvert < e^{-\epsilon' n}
  \end{equation}
  for all $n\epsilon < k \leq n(x^2 - \epsilon)$, where $\epsilon' > 0$ is a constant depending on $\epsilon$ and $c$. This estimate implies the second inequality of \eqref{eq:est_ker_crit_remainder} with $x_1 = x_2 =  2x\sqrt{n}$. 
  
Finally, by the estimate of Hermite polynomials provided in \cite[Section 11.4, Exercises 4.2 and 4.3]{Olver97}, we have that
  \begin{equation} \label{eq:est_Herm_res}
    \lvert \varphi_k(2x\sqrt{n}) \rvert < e^{-\epsilon'' n}
  \end{equation}
  for all $k \leq n\epsilon$, where $\epsilon'' > 0$ depends on $c$ only, provided that $\epsilon$ is small enough. This estimate implies the last inequality of \eqref{eq:est_ker_crit_remainder} with $x = y = 2x\sqrt{n}$. Here we note that the result in \cite[Section 11.4, Exercises 4.2 and 4.3]{Olver97} is valid even for very small $k$, like $k = 1, 2, \dotsc$, except for $k = 0$. But it is obvious that when $k = 0$, \eqref{eq:est_Herm_res} holds.
\end{proof}

\begin{rmk} \label{rmk:c=0}
  The case $x = 0$ is different, because $K^{\sub}$ is not longer meaningful, and $K^{\inter}$ and $K^{\res}$ need to be combined. The asymptotic analysis becomes easier, since $\varphi_k(\xi/\sqrt{n})$ has limiting formulas simpler than \eqref{eq:limit_Herm_bulk}, \eqref{eq:limit_Herm_bulk_x}, and \eqref{eq:limit_Herm_out}, see \cite[22.15.3--4]{Abramowitz-Stegun64}. We omit the detail, because a similar computation is done in \cite[Proof of Theorem 1.9]{Johansson07}.
\end{rmk}

\begin{proof}[Proof of Lemma \ref{lem:est_K_bulk}\ref{enu:lem:est_K_bulk:b}]
  The difficulty is that when $\arg z$ is close to $\pm \pi$, the denominator $1 + q^k z$ appearing in $K(x_1, x_2; z; q)$ can be close to zero. But since $\lvert z \rvert = q^{-n} - 1 + \delta_n/n = e^{c} - 1 + \delta_n/n$, and we assume that $\delta_n = 0$, for $k \geq n$
  \begin{equation}
    \lvert 1 + q^k z \rvert \geq 1 - q^n(q^{-n} - 1) \geq q^{n} = e^{-c} > 0,
  \end{equation}
  and the denominator is not close to zero. Then by the estimates that we use in the proof of part \ref{enu:lem:est_K_bulk:a}, we have for all $z \in \Gamma$,
  \begin{equation} \label{eq:est_K_bulk_leftmost_1}
    \sum^{\infty}_{k \geq n} \frac{q^k z}{1 + q^k z} \varphi_k(x_1) \varphi_k(x_2) = \bigO(n^{1/2}).
  \end{equation}
  On the other hand, for $k < n$, by assumption \eqref{eq:z_abs_value} we have $\lvert 1 + q^k z \rvert \geq \epsilon(q)/n$, and then by the uniform boundedness of the Hermite functions, 
  \begin{equation} \label{eq:est_K_bulk_leftmost_2}
    \left\lvert \sum^n_{k = 0} \frac{q^k z}{1 + q^k z} \varphi_k(x_1) \varphi_k(x_2) \right\rvert < \epsilon(q) n^2.
  \end{equation}
  The combination of \eqref{eq:est_K_bulk_leftmost_1} and \eqref{eq:est_K_bulk_leftmost_2} implies \eqref{eq:est_K_bulk_2}, and then finish the proof.
\end{proof}

\begin{proof}[Proof of Theorem \ref{thm:bulk}\ref{enu:thm:bulk_b} for $2$-correlation function]
  Using the estimates in Lemmas \ref{lem:est_F_bulk} and \ref{lem:est_K_bulk}, we have that the integral in \eqref{eq:formula_for_R^m} concentrates in the vicinity of the saddle point $z = e^{c} - 1$, and more precisely, in the region $\lvert z - (e^{c} - 1) \rvert = \bigO(n^{-1/2})$. A straightforward application of the Laplace method yields that if $x_1, x_2$ depend on $\xi, \eta$ as in \eqref{eq:specify_x_y_crit_bulk}, using \eqref{eq:saddle},
  \begin{equation}
    \begin{split}
    & \lim_{n \to \infty} \left(\frac{\pi}{\sqrt{n/c}}\right)^2 R^{(2)}_n(x_1, x_2) \\
    = {}& \lim_{n \to \infty} \frac{1}{2\pi i} \oint_0 F(z)
    \begin{vmatrix}
      \frac{\pi\sqrt{c}}{\sqrt{n}}K(x_1, x_2; z) & \frac{\pi\sqrt{c}}{\sqrt{n}}K(x_1, x_2; z) \\
     \frac{\pi\sqrt{c}}{\sqrt{n}} K(x_2, x_1; z) & \frac{\pi\sqrt{c}}{\sqrt{n}}K(x_2, x_2; z) 
    \end{vmatrix}
    \frac{dz}{z} \\
    = {}& \lim_{n \to \infty} \frac{1}{2\pi i} \int^{\infty \cdot i}_{-\infty \cdot i} \frac{F(e^{c} - 1)}{e^c-1} e^{\frac{w^2}{2c e^{c}(e^{c} - 1)}}
    \begin{vmatrix}
      K_{\interpolating}(\xi, \xi; c;x; e^{c} - 1) & K_{\interpolating}(\xi, \eta;x; c; e^{c} - 1) \\
      K_{\interpolating}(\eta, \xi; c;x; e^{c} - 1) & K_{\interpolating}(\eta, \eta;x; c; e^{c} - 1) 
    \end{vmatrix}
    \frac{dw}{\sqrt{n}} \\
    = {}&
    \begin{vmatrix}
      K_{\interpolating}(\xi, \xi; x; c; e^{c} - 1) & K_{\interpolating}(\xi, \eta; x; c; e^{c} - 1) \\
      K_{\interpolating}(\eta, \xi; x; c; e^{c} - 1) & K_{\interpolating}(\eta, \eta;x; c; e^{c} - 1) 
    \end{vmatrix}.
    \end{split}
  \end{equation}
  By \eqref{eq:relation_K_inters} we prove the $2$-correlation function formula in Theorem \ref{thm:bulk}\ref{enu:thm:bulk_b}.
\end{proof}

\subsection{Correlation functions for the bulk particles: fixed $q \in (0, 1)$}

We let $q$ be in a compact subset of $(0, 1)$. We assume that the contour in \eqref{eq:formula_for_R^m} is $\lvert z \rvert = q^{-n + 1/2}$, and take the change of variable like in \eqref{eq:w=q^nz}
\begin{equation} \label{eq:scaled_w}
  w = q^n z \quad \text{with} \quad \lvert w \rvert = \sqrt{q}.
\end{equation}
Then analogous to \eqref{eq:final_integral_rightmost_in_w}, we write the $m = 2$ case of \eqref{eq:formula_for_R^m} as 
\begin{equation}
  \begin{split}
    R^{(2)}_n(x_1, x_2) = {}& \frac{q^{n/2}}{Z_n(q)} q^{n^2} q^{-\frac{n(n + 1)}{2}} \frac{1}{2\pi i} \oint_0 \frac{dw}{w} \left( \prod^{\infty}_{k = 0} (1 + q^k w) \right) \left( \prod^n_{k = 1} (1 + q^k w^{-1}) \right) \\
    & \times
    \begin{vmatrix}
      K(x_1, x_1; q^{-n}w; q) & K(x_1, x_2; q^{-n}w; q) \\
      K(x_2, x_1; q^{-n}w; q) & K(x_2, x_2; q^{-n}w; q)
    \end{vmatrix} \\
    = {}& \frac{(q; q)_n}{(q; q)_{\infty}} \frac{1}{2\pi i} \oint_0 \frac{dw}{w} \left( \sum^{\infty}_{k = -\infty}  q^{\frac{k(k - 1)}{2}} w^k \right) \frac{(-q/w; q)_n}{(-q/w; q)_{\infty}} \\
    & \times
    \begin{vmatrix}
      K(x_1, x_1; q^{-n}w; q) & K(x_1, x_2; q^{-n}w; q) \\
      K(x_2, x_1; q^{-n}w; q) & K(x_2, x_2; q^{-n}w; q)
    \end{vmatrix},
  \end{split}
\end{equation}
where we make use of identity \eqref{eq:triple_product}.
Next we find the asymptotics of $K(x_i, x_j; q^{-n} w; q)$. We write
\begin{equation}
  K(x_i, x_j; q^{-n}w; q) = K^{(0)}_n(x_i, x_j) + K^{(1)}(x_i, x_j; q^{-n}w; q) - K^{(2)}(x_i, x_j; q^{-n}w; q),
\end{equation}
where
\begin{align}
  K^{(0)}(x_i, x_j) = {}& \left( \sum^{n - 1}_{k = 0} \varphi_k(x_i) \varphi_k(x_j) \right), \\
  K^{(1)}(x_i, x_j; q^{-n}w; q) = {}& \left( \sum^{\infty}_{k = 0} \frac{q^k w}{1 + q^k w} \varphi_{n + k}(x_i) \varphi_{n + k}(x_j) \right), \\
  K^{(2)}(x_i, x_j; q^{-n}w; q) = {}& \left( \sum^n_{k = 1} \frac{q^k w^{-1}}{1 + q^k w^{-1}} \varphi_{n - k}(x_i) \varphi_{n - k}(x_j) \right).
\end{align}
It is well known that $K^{(0)}_n(x_i, x_j)$ is the correlation kernel of $n$-dimensional GUE random matrix, and for
\begin{equation}
  x_i = 2\sqrt{n} x + \frac{\pi \xi_i}{(1 - x^2)^{1/2} \sqrt{n}}, \quad x_j = 2\sqrt{n} x + \frac{\pi \xi_j}{(1 - x^2)^{1/2} \sqrt{n}}, \quad \text{where} \quad x \in (-1, 1),
\end{equation}
we have \cite[Chapter 3]{Anderson-Guionnet-Zeitouni10}
\begin{equation}
  \lim_{n \to \infty} \frac{\pi}{(1 - x^2)^{1/2} \sqrt{n}} K^{(0)}_n(x_i, x_j) = K_{\sin}(\xi_i, \xi_j) := \frac{\sin(\pi(\xi_i - \xi_j))}{\pi(\xi_i - \xi_j)}.
\end{equation}
To estimate $K^{(1)}$ and $K^{(2)}$, it suffices to use the rough estimate from \cite[22.14.17]{Abramowitz-Stegun64}, we have $\lvert \varphi_n(x) \rvert \leq \frac{\kappa}{2^{1/4} \pi^{1/4}}$, where $\kappa \approx 1.086435$. Then we have
\begin{equation}
  \left\lvert K^{(1)}(x_i, x_j; q^{-n}w; q) \right\rvert \leq \sum^{\infty}_{k = 0} \left\lvert \frac{q^k w}{1 + q^k w} \right\rvert \frac{\kappa^2}{\sqrt{2 \pi}} < \sum^{\infty}_{k = 0} \frac{q^k}{1 - \sqrt{q}} \frac{\kappa^2}{\sqrt{2 \pi}} < \frac{1}{(1 - q)(1 - \sqrt{q})}.
\end{equation}
Similarly, we also have 
\begin{equation}
  \left\lvert K^{(2)}(x_i, x_j; q^{-n}w; q) \right\rvert < \frac{1}{(1 - q)(1 - \sqrt{q})}.
\end{equation}
Hence we have that uniformly in $w$ on the circle $\lvert w \rvert = \sqrt{q}$
\begin{equation} \label{eq:asy_K_finite_q}
  \lim_{n \to \infty} \frac{\pi}{(1 - x^2)^{1/2} \sqrt{n}} K(x_i, x_j; q^{-n}w; q) = K_{\sin}(\xi_i, \xi_j).
\end{equation}
Using the very fast convergence \eqref{eq:convergence_inessential}, we have
\begin{equation}
  \begin{split}
    \lim_{n \to \infty} \left( \frac{\pi}{(1 - x^2)^{1/2} \sqrt{n}} \right)^2 R^{(2)}_n(x_1, x_2) = {}& \frac{1}{2\pi i} \oint_0 \frac{dw}{w} \left(\sum^{\infty}_{k = -\infty}  q^{\frac{k(k - 1)}{2}} w^k \right) (1 + o(1)) \det(K_{\sin}(\xi_i, \xi_j))^2_{i, j = 1} \\
    = {}& \det(K_{\sin}(\xi_i, \xi_j))^2_{i, j = 1}.
  \end{split}
\end{equation}
Hence Theorem \ref{thm:bulk}\ref{enu:thm:bulk_b} is proved for the $2$-correlation function case.

\begin{rmk}
  The argument in this section also occurs in \cite[Proposition 3.7]{Johansson-Lambert15}.
\end{rmk}

\section{Relation to interacting particle systems} \label{sec:relations}

Theorem \ref{thm:algebraic}\ref{enu:thm:algebraic_1} for the gap probability in the MNS model has analogues in the study of several interacting particle systems that are related to the Kardar--Parisi--Zhang (KPZ) universality class. In this section we consider the $q$-Whittaker processes, which are obtained by a specialization of Macdonald processes \cite[Section 3]{Borodin-Corwin13}, and the Asymmetric Simple Exclusion Process (ASEP) as primary examples. We also consider the $q$-deformed Totally Asymmetric Simple Exclusion Process ($q$-TASEP), which is a continuous limit of the $q$-Whittaker process \cite[Section 3.3]{Borodin-Corwin13}, \cite{Borodin-Corwin-Sasamoto14} and the $q$-deformed Totally Asymmetric Zero Range Process ($q$-TAZRP), which is the dual process of $q$-TASEP \cite{Korhonen-Lee14}, \cite{Lee-Wang17}. 

\paragraph{Identity of Fredholm determinants}

Let $f(\xi)$ be a meromorphic function on $\compC$ with the finite set of poles $\A = \{ a_1, \dotsc, a_m \} \not\ni 0$, and suppose that $f(0) = 1$. Let $\Gamma_{0, \A}$ be a contour with positive orientation such that $0$ and $\A$ are enclosed in $\Gamma_{0, \A}$. On the other hand, let $\Gamma_{\A}$ be a contour with positive orientation such that $\A$ is enclosed in $\Gamma_{\A}$ but $0$ is outside of $\Gamma_{\A}$. We assume the condition
\begin{equation} \label{eq:essential_condition_of_Gamma}
  \Gamma_{0, \A} \cap q \cdot \Gamma_{0, \A} = \emptyset, \quad \text{and} \quad \Gamma_{\A} \cap q^k \cdot \Gamma_{\A} = \emptyset, \quad \text{for $k = 1, 2, \dotsc$}.
\end{equation}
where for a contour $C$, $q^k \cdot C = \{q^k z \mid z \in C \}$. Furthermore, we define $\Gamma = \Gamma_{0, \A} \cup (-\Gamma_{\A})$, where $-\Gamma_{\A}$ is the contour $\Gamma_{\A}$ with negative orientation. 

\begin{figure}[htb]
  \centering
  \includegraphics{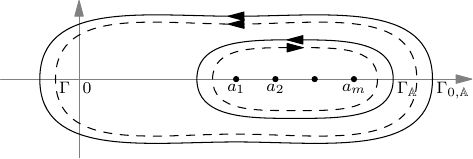}
  \caption{The shapes of $\Gamma_{\mathbb{A}}$ and $\Gamma_{0, \mathbb{A}}$ (solid) and the shape of $\Gamma$ (dashed).}
\end{figure}

We define kernel functions on $\Gamma$,
\begin{equation} \label{eq:kernel_formula_contours}
  M(\xi, \eta; q) = \frac{f(\eta)}{\xi - q\eta},
\end{equation}
and
\begin{equation} \label{eq:K_kernel_contour}
  \begin{split}
    K(\xi, \eta; z; q) = {}& \sum_{k=1}^\infty (-1)^{k+1} z^k  \frac{F(\eta; q; k) }{\xi - q^k \eta},
 \\
  \end{split}
\end{equation}
where
\begin{equation}\label{def:F}
F(\eta; q; k):=
\prod_{j=0}^{k-1} f(q^j \eta), \quad k =1,2, 3 \dots
\end{equation}
Since $f(0)=1$ and $0<q<1$, the function $F(\eta; q; k)$ is bounded in $k$, so the power series \eqref{eq:K_kernel_contour} converges for all $|z|<1$.

In many cases the infinite sum \eqref{eq:K_kernel_contour} can be written in a compact form as a contour integral, which often gives the continuation to $|z|\ge 1$. Suppose that the function $F(\eta; q; k)$ is such that the discrete variable $k$ may be extended to a complex variable $s$ in such a way that $F(\eta; q; s)$ is analytic in the right-half of the $s$-plane and decays fast enough as $s\to\infty$ in the right-half of the $s$-plane. Then by calculation of residues, we have the contour integral formula for all $z \in \compC$
\begin{equation}\label{K-integral}
 K(\xi, \eta; z; q) = \frac{1}{2\pi i} \int_{\delta-i\infty}^{\delta+i\infty} \frac{\pi z^s}{\sin(\pi s)}  \frac{ F(\eta; q; s)}{\xi -q^s\eta}ds,
\end{equation}
for some small positive number $0<\delta<1$.

These two kernels define integral operators on $L^2(\Gamma_{0, \A})$, $L^2(\Gamma_{\A})$ and $L^2(\Gamma)$, where the measure is $(2\pi i)^{-1} d\eta$ with the orientation positive on $\Gamma_{0, \A}$ and negative on $\Gamma_{\A}$. We denote these integral operators all by $\M(q)$ and $\K(z; q)$, and the domain is assumed to be  $L^2(\Gamma)$ unless otherwise specified. We have the following technical lemma.

\begin{lem} \label{lem:contour_det}
  Let contours $\Gamma_{0, \A}$, $\Gamma_{\A}$ and meromorphic function $f(\xi)$ be given above. Suppose the meromorphic functions $M(\xi, \eta; q)$ and $K(\xi, \eta; z; q)$ are defined by \eqref{eq:kernel_formula_contours} and \eqref{eq:K_kernel_contour} respectively, and $\M(q)$ and $\K(z; q)$ are integral operators with kernels $M(\xi, \eta; q)$ and $K(\xi, \eta; z; q)$. Then for all $|z|<1$,
  \begin{align}
    \det(I + z \M(q))_{L^2(\Gamma_{0, \A})} = {}& (-z; q)_{\infty} \det(I + \K(z; q))_{L^2(\Gamma_{\A})}, \label{eq:Fred_det_id_contour} \\
    \det(I - z \M(q))_{L^2(\Gamma_{\A})} = {}& (-z; q)_{\infty} \det(I - \K(z; q))_{L^2(\Gamma_{0, \A})}. \label{eq:Fred_det_id_contour_alt}
  \end{align}
  Moreover, if the kernel function $K(\xi, \eta; z; q)$ can be extended by \eqref{K-integral} to $\lvert z \rvert \geq 1$, identities \eqref{eq:Fred_det_id_contour} and \eqref{eq:Fred_det_id_contour_alt} holds for general $z$.
\end{lem}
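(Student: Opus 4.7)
My plan is to extend the operator identity approach of Lemma~\ref{prop:MK_identity} to the contour setting, viewing $L^2(\Gamma)$ as the direct sum $L^2(\Gamma_{0,\A}) \oplus L^2(\Gamma_{\A})$ with the sign inherited from the orientation of $\Gamma = \Gamma_{0,\A} \cup (-\Gamma_{\A})$. The heart of the argument is to establish that on $L^2(\Gamma)$ one has $\M(q)^k = \M_k$, where $\M_k$ is the operator with kernel $M_k(\xi,\eta) = F(\eta;q;k)/(\xi - q^k \eta)$. I would prove this by induction on $k$: the inductive step reduces to evaluating the contour integral
\[
(\M(q)\,\M_k)(\xi, \eta') = \oint_\Gamma \frac{f(\eta)\,F(\eta';q;k)}{(\xi - q\eta)(\eta - q^k \eta')} \frac{d\eta}{2\pi i},
\]
which by the residue theorem equals the sum of residues in the annular region between $\Gamma_{\A}$ and $\Gamma_{0,\A}$. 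Under \eqref{eq:essential_condition_of_Gamma} together with the natural nested geometry of the contours, the poles of $f(\eta)$ at $\A$ sit inside $\Gamma_{\A}$, the pole $\eta = \xi/q$ lies outside $\Gamma_{0,\A}$, and the only remaining pole $\eta = q^k \eta'$ lands in the annular region; its residue is exactly $F(\eta';q;k+1)/(\xi - q^{k+1}\eta') = M_{k+1}(\xi,\eta')$. Combining this with the power series \eqref{eq:K_kernel_contour} yields $\K(z;q) = -\sum_{k \ge 1}(-z\M(q))^k = I - (I + z\M(q))^{-1}$, so that $(I - \K(z;q))(I + z\M(q)) = I$ holds on $L^2(\Gamma)$ for $|z| < 1$.

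Next I would compute $\det(I + z\M(q))_{L^2(\Gamma)}$ using $\M(q)^k = \M_k$. The trace is
\[
\Tr \M(q)^k = \oint_\Gamma \frac{F(\eta;q;k)}{\eta(1 - q^k)}\frac{d\eta}{2\pi i} = \frac{F(0;q;k)}{1 - q^k} = \frac{1}{1 - q^k},
\]
since $F(0;q;k) = f(0)^k = 1$, and the remaining poles at $q^{-j}\A$ (for $0 \le j \le k-1$) lie outside $\Gamma_{0,\A}$ by the geometric hypotheses. Substituting into $\log \det(I + z\M(q)) = \sum_{k \ge 1}(-1)^{k+1} z^k \Tr \M(q)^k / k$ and exchanging summations produces $\sum_{j \ge 0} \log(1 + q^j z) = \log (-z;q)_\infty$.

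To deduce the Fredholm determinant identities, I would decompose $\M(q)$ as a $2 \times 2$ block operator on the Hilbert space $L^2(\Gamma_{0,\A}) \oplus L^2(\Gamma_{\A})$ with blocks $\M_{ij}$ obtained by restricting the kernel; the signed integration on $\Gamma_{\A}$ introduces a minus sign on the columns indexed by $\Gamma_{\A}$. A direct check identifying each cyclic trace contribution confirms that the ordinary block trace agrees with the signed trace on $L^2(\Gamma)$, so the evaluation $\det(I + z\M(q)) = (-z;q)_\infty$ carries over. The Schur complement formula applied to the $(1,1)$-block yields
\[
(-z;q)_\infty = \det(I + z\M)_{L^2(\Gamma_{0,\A})} \cdot \det\!\bigl(I - z\M_{11} + z^2 \M_{10}(I + z\M_{00})^{-1} \M_{01}\bigr)_{L^2(\Gamma_{\A})},
\]
while the block-inverse formula applied to $(I - \K)(I + z\M) = I$ identifies the second determinant with $\det(I + \K)_{L^2(\Gamma_{\A})}^{-1}$. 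Rearranging proves \eqref{eq:Fred_det_id_contour}; a symmetric Schur complement on the $(2,2)$-block produces \eqref{eq:Fred_det_id_contour_alt}. When $K$ is continued via \eqref{K-integral} to $|z| \geq 1$, both sides are entire functions of $z$, so the identities extend by analytic continuation.

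The main technical obstacle is the inductive residue computation underlying $\M(q)^k = \M_k$: one must verify that $q^k \eta'$ always falls into the annular region between $\Gamma_{\A}$ and $\Gamma_{0,\A}$, and that the poles $q^{-j}\A$ for $j \ge 1$ never enter $\Gamma_{0,\A}$. Neither fact is fully encoded by \eqref{eq:essential_condition_of_Gamma} alone, so a careful proof needs to spell out the implicit nesting of the iterated contours $q^j \Gamma_{\A}$ which ensures that every residue is assigned unambiguously to the correct side of $\Gamma$ at each inductive step.
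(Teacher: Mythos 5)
Your proposal is correct in substance and shares the paper's skeleton: identify $\K(z;q)$ with the resolvent of $I+z\M(q)$ on $L^2(\Gamma)$ through the kernel formula \eqref{Mk_kernel} for $\M(q)^k$ (which the paper states without proof and you propose to verify by an inductive residue computation), and then factor out $\det(I+z\M(q))_{L^2(\Gamma)}=(-z;q)_\infty$. You diverge in the two sub-steps. First, you pass from the operator identity to the determinant identities via a $2\times 2$ block/Schur-complement computation on $L^2(\Gamma_{0,\A})\oplus L^2(\Gamma_{\A})$, whereas the paper mimics Lemma \ref{prop:MK_identity}, writing $(I+z\M(q))(I-\K(z;q)\chi_{(-\Gamma_{\A})})=I+z\M(q)\chi_{\Gamma_{0,\A}}$ and invoking multiplicativity of Fredholm determinants; these are algebraically equivalent, the paper's being shorter, yours producing \eqref{eq:Fred_det_id_contour} and \eqref{eq:Fred_det_id_contour_alt} symmetrically from the two Schur complements (just note the formula needs the relevant block invertible, so prove it for small $\lvert z\rvert$ and extend by analyticity). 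Second, and more substantively, you evaluate $\det(I+z\M(q))_{L^2(\Gamma)}$ from $\Tr \M(q)^k=(1-q^k)^{-1}$ and the trace–log expansion, while the paper expands $M(\xi,\eta;q)=\sum_{k\ge 0}q^k\,\xi^{-k-1}f(\eta)\eta^k$ and uses the triangularity \eqref{triangular}; the price of the paper's route is the auxiliary annulus condition \eqref{eq:convergence_condition_on_Gamma}, removed afterwards by deforming the contours and moving the poles $a_1,\dotsc,a_m$. That device is also precisely what closes the gap you flag at the end: the pole bookkeeping behind \eqref{Mk_kernel} and behind your trace computation (the points $q^{-j}\A$, $j\ge 1$, and $\xi/q$ must stay outside $\Gamma_{0,\A}$, and $q^k\eta'$ must land between $\Gamma_{\A}$ and $\Gamma_{0,\A}$) is indeed not implied by \eqref{eq:essential_condition_of_Gamma} alone, but it does follow from \eqref{eq:convergence_condition_on_Gamma}, since then every modulus on $\Gamma$ lies within a factor $q^{-1}$ of every other; so the clean way to finish your argument is the paper's two-stage scheme — prove everything under \eqref{eq:convergence_condition_on_Gamma}, then extend by meromorphy in the parameters and contour deformation. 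Finally, soften your closing claim that both sides are entire after the extension \eqref{K-integral}: the integrand $z^s$ carries a branch cut, and all that is needed (and all the paper asserts) is analytic continuation of the identity in $z$ where the extended kernel is defined.
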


\begin{proof}
  We only prove identity \eqref{eq:Fred_det_id_contour} for $\lvert z \rvert < 1$, since the result for $\lvert z \rvert \geq 1$ can be obtained by direct analytic continuation, and the proof of \eqref{eq:Fred_det_id_contour_alt} is analogous.
  
  The proof is similar to that of Lemma \ref{prop:MK_identity}. The main difference is that the operator $\M(q)$ on $L^2(\Gamma)$ no longer satisfies $\M(q)^k= \M(q^k)$. Instead we have the formula for the kernel of $\M(q)^k$,
  \begin{equation}\label{Mk_kernel}
  \M^k(\xi,\eta; q) =  \frac{F(\eta; q; k)}{\xi - q^k\eta},
  \end{equation}
where $F(\eta; q; k)$ is defined in \eqref{def:F}. Then analogous to \eqref{eq:defn_R_inverse}, we define the operator $\R(z; q)$ on ${L^2(\Gamma)}$ by
  \begin{equation}
    I - \R(z; q) = (I + z \M(q))^{-1},
  \end{equation}
Similar to \eqref{eq:inv_sub}, we have the identity of operators on $L^2(\Gamma)$,
  \begin{equation}
    (I + z\M(q))(I - \R(z; q) \chi_{(-\Gamma_{\A}})) = I + z \M(q) \chi_{\Gamma_{0, \A}},
  \end{equation}
  and we are left to find an expression for the kernel of $\R(z;q)$.
Assuming that $\lvert z \rvert < 1$, then we can write $\R(z;q)$ as a power series in $z$:
  \begin{equation}
  \R(z;q) = \sum_{k=1}^\infty (-1)^{k+1} z^k \M(q)^k.
  \end{equation}
  Using \eqref{Mk_kernel} we then find that the kernel of $  \R(z;q) $ can be written as
    \begin{equation}
  \R(\xi, \eta; z;q) = \sum_{k=1}^\infty (-1)^{k+1} z^k  \frac{F(\eta; q; k)}{\xi - q^k\eta},
  \end{equation}
  and we find that $\R(z;q) = \K(z; q)$ for $|z|<1$.

  Then similar to \eqref{eq:MK_identity_det}, we have that
  \begin{equation}
    \begin{split}
      \det(I + z \M(q))_{L^2(\Gamma_{0, \A})} = {}& \det(I + z \M(q) \chi_{\Gamma_{0, \A}})_{L^2(\Gamma)} \\
      = {}& \det(I + z\M(q))_{L^2(\Gamma)} \det(I - \K(z; q)\chi_{(-\Gamma_{\A})})_{L^2(\Gamma)}  \\
      = {}& \det(I + z\M(q))_{L^2(\Gamma)} \det(I - \K(z; q))_{L^2(-\Gamma_{\A})} \\
      = {}& \det(I + z\M(q))_{L^2(\Gamma)} \det(I + \K(z; q))_{L^2(\Gamma_{\A})},
    \end{split}
  \end{equation}
  where in the last line the orientation of the integral contour is changed, and so does the sign for the operator $\K(z; q)$. 
  
   In order to complete the proof of Lemma \ref{lem:contour_det}, we are left to prove that
     \begin{equation}\label{det_pochammer}
    \det(I + z\M(q))_{L^2(\Gamma)} = (-z; q)_{\infty}.
  \end{equation}
   We first prove this result under the restriction that 
  \begin{equation} \label{eq:convergence_condition_on_Gamma}
    q < \left\lvert \frac{\xi}{\eta} \right\rvert < q^{-1} \quad \text{for all $\xi, \eta \in \Gamma$}.
  \end{equation}
   Define two bases for $L^2(\Gamma)$: $\{ \varphi_k(\xi) =\xi^{-k-1}  \}^{\infty}_{k = -\infty}$ and $\{ \psi_k(\eta) =f(\eta) \eta^{k } \}^{\infty}_{k = -\infty}$. They satisfy the condition
  \begin{equation}\label{triangular}
    \frac{1}{2\pi i} \int_{\Gamma} \varphi_j(\xi) \psi_k(\xi) d\xi = \left\{
    \begin{aligned}
    & 0 \quad &j<k, \\
    & \frac{f^{(j-k)}(0)}{(j-k)!} \quad &j \ge k.
    \end{aligned}\right.
      \end{equation}
  Under the additional condition \eqref{eq:convergence_condition_on_Gamma}, we have that for $\xi, \eta \in \Gamma$,
  \begin{equation}\label{M-expansion}
    M(\xi, \eta; q) = \sum^{\infty}_{k = 0} q^k \varphi_k(\xi) \psi_k(\eta).
       \end{equation}
Combining this expansion in $\varphi_k$ and $\psi_k$ with the triangularity condition \eqref{triangular} and noting that $f(0)=1$, we easily obtain \eqref{det_pochammer} and prove \eqref{eq:Fred_det_id_contour} under technical restrictions $\lvert z \rvert < 1$ and \eqref{eq:convergence_condition_on_Gamma}.
  
  To remove the technical restriction \eqref{eq:convergence_condition_on_Gamma}, we note that for fixed $q$ and $a_1, \dotsc, a_m$, the contour $\Gamma = \Gamma_{0, \A} \cup (-\Gamma_{\A})$ that satisfies both \eqref{eq:essential_condition_of_Gamma} and \eqref{eq:convergence_condition_on_Gamma} may not exist. However, if $q$ is fixed and $a_1, \dotsc, a_m$ are regarded as movable parameters, then the contour $\Gamma$ exists given that $a_1, \dotsc, a_m$ cluster tightly enough. Although \eqref{eq:Fred_det_id_contour} is proved under the additional condition \eqref{eq:convergence_condition_on_Gamma}, since the kernels $M(\xi, \eta; q)$ and $K(\xi, \eta; z; q)$ are meromorphic functions, by deforming the contours $\Gamma_{0, \A}$ and $\Gamma_{\A}$ and moving the poles $a_1, \dotsc, a_m$ if necessary, we can remove the technical restriction \eqref{eq:convergence_condition_on_Gamma}. 
\end{proof}


\paragraph{$q$-Whittaker processes}

The $q$-Whittaker processes are interacting particle systems defined in \cite[Section 3]{Borodin-Corwin13}. Since the definition is relatively involved, we refer the reader to the original paper, and only remark that in the $N$-particle model,
\begin{enumerate*}[label=(\roman*)]
\item
  the speeds of particles depend on parameters $a_1, \dotsc, a_N \in (0, \infty)$, and
\item
  the transition probabilities depend on parameters $\alpha_1, \dotsc, \alpha_N; \beta_1, \dotsc, \beta_N; \gamma$.
\end{enumerate*}

A moment generating formula for the $q$-Whittaker processes is given in \cite[Theorem 3.23]{Borodin-Corwin13}: 
\begin{equation}
  \left\langle \frac{1}{(-z q^{\lambda_N}; q)_{\infty}} \right\rangle_{\MM_{t = 0}(a_1, \dotsc, a_N; \rho)} = \frac{1}{(-z; q)_{\infty}} \det(1 + z \M(q))_{L^2(\Gamma_{0, \A})},
\end{equation}
where the integral operator $\M(q)$ has the kernel $M(\xi, \eta; q)$ defined by \eqref{eq:kernel_formula_contours}, with the function $f$ given as
\begin{equation} \label{eq:f_for_Whittaker}
  f(\eta) = \left( \prod^N_{m = 1} \frac{a_m}{a_m - \eta} \right) \left( \prod^N_{i = 1} (1 - \alpha_i \eta) \frac{1 + q\beta_i \eta}{1 + \beta_i \eta} \right) \exp((q - 1) \gamma \eta),
\end{equation}
and $\Gamma_{0, \A}$ (denoted by $\tilde{C}_{a, \rho}$ in \cite{Borodin-Corwin13}) is a star-shaped contour centered at $0$ and containing $\A = \{ a_1, \dotsc, a_N \}$ but no other singularities of $f(\eta)$. Then \cite[Corollary 3.24]{Borodin-Corwin13} gives the probability distribution
\begin{equation}
  \Prob_{\MM_{t = 0}(a_1, \dotsc, a_N; \rho)}(\lambda_N = n) = \frac{q^n}{2\pi i} \oint_C \frac{\det(1 + z\M(q))_{L^2(\Gamma_{0, \A})}}{(-z; q)_{n + 1}} dz,
\end{equation}
where the contour $C$ encloses poles $-1, -q^{-1}, \dotsc, -q^n$. It is obvious that this $\Gamma_{0, \A}$ satisfies \eqref{eq:essential_condition_of_Gamma}. For the meaning of the notations and technical conditions, see the paper \cite{Borodin-Corwin13}.

\begin{rmk}
Actually the kernel in for the operator ${\bf M}(q)$ given in \cite[Theorem 3.23]{Borodin-Corwin13} is of the form $ M(\xi, \eta; q) = \frac{f(\xi)}{\xi - q\eta}$ rather than $ M(\xi, \eta; q) = \frac{f(\eta)}{\xi - q\eta}$ as in \eqref{eq:kernel_formula_contours}. It is clear that these two operators give the same Fredholm determinant, since they are each the composition of the integral operator with kernel $ \frac{1}{\xi - q\eta}$ with multiplication by the function $f$, only in different orders.
\end{rmk}

Suppose there also exists a contour $\Gamma_{\A}$ that encloses $\A$ but not $0$ and satisfies condition \eqref{eq:essential_condition_of_Gamma}. Lemma \ref{lem:contour_det} immediately implies that
\begin{align} \label{eq:moment_gene_formula_on_A}
  \left\langle \frac{1}{(-z q^{\lambda_N}; q)_{\infty}} \right\rangle_{\MM_{t = 0}(a_1, \dotsc, a_N; \rho)} = {}& \det(I + \K(z;q))_{L^2(\Gamma_{\A})}, \\
  \Prob_{\MM_{t = 0}(a_1, \dotsc, a_N; \rho)}(\lambda_N = n) = {}& \frac{q^n}{2\pi i} \oint_C (zq^{n + 1}; q)_{\infty} \det(I + \K(z;q))_{L^2(\Gamma_{\A})},
\end{align}
where and the integral operator $\K(z; q)$ has the kernel $K(\xi, \eta; z; q)$ defined in \eqref{eq:K_kernel_contour} with $f$ specified in \eqref{eq:f_for_Whittaker}.

This essentially reproduces the result \cite[Corollary 3.17]{Borodin-Corwin13}, which expresses the moment generating formula on the left-hand side of \eqref{eq:moment_gene_formula_on_A} by a Fredholm determinant formula where the domain consists of infinitely many copies of $\Gamma_{\A}$ (denoted by $C_{a, \rho}$ there). Note that the function $f(\eta)$ can be written as
\begin{equation}
  f(\eta) = \frac{g(\eta)}{g(q\eta)}, \quad g(\eta) = e^{-\gamma \eta} \left( \prod_{m=1}^N \frac{1}{(\eta/a_m; q)_\infty} \right) \left( \prod^N_{i = 1} \frac{(\alpha_i \eta; q)_{\infty}}{1 + \beta_i \eta} \right).
\end{equation}
Thus the function $F(\eta; q,k)$ can be  written in the closed form
\begin{equation}
F(\eta; q, k) = \frac{g(\eta)}{g(q^k \eta)}.
\end{equation}
Then as in \eqref{K-integral} we can write the kernel for the operator $\K(z; q)$ in the closed form
\begin{equation} \label{eq:alt_q-Whittaker}
 K(\xi, \eta; z; q) = \frac{1}{2\pi i} \int_{\delta-i\infty}^{\delta+i\infty} \frac{\pi z^s}{\sin(\pi s)} \frac{g(\eta)}{g(q^s\eta)} \frac{ds}{\xi -q^s\eta},
\end{equation}
which allows for analytic continuation to all $z\in \compC$. The special case of \eqref{eq:alt_q-Whittaker} with all $\alpha_i, \beta_i = 0$ is given in \cite[Theorem 3.18]{Borodin-Corwin13}, except that $\xi$ and $\eta$ are exchanged, which does not change the Fredholm determinant. 

\paragraph{$q$-TASEP}

The $q$-deformed Totally Asymmetric Simple Exclusion Process ($q$-TASEP) is a well-studied model in the KPZ universality class \cite{Borodin-Corwin-Sasamoto14}, \cite{Ferrari-Veto15}, \cite{Barraquand15}, \cite{Imamura-Sasamoto17}. It is also a continuous limit of the $q$-Whittaker processes \cite{Borodin-Corwin13}. We refer to \cite{Borodin-Corwin-Sasamoto14} for the definition of $q$-TASEP and for the meaning of the notations, and only remark that the speeds of the particles $x_1, \dotsc, x_n$ depend on parameters $a_1, \dotsc, a_n$.

In \cite[Theorem 3.13]{Borodin-Corwin-Sasamoto14}, with the so-called step initial condition, a moment generating function for the position of the $n$-th particle at time $t$ is provided as
\begin{equation}
  \E \left[ \frac{1}{(-z q^{x_n(t) + n}; q)_{\infty}} \right] = \frac{1}{(-z; q)_{\infty}} \det( I + z \M(q))_{L^2(\Gamma_{0, \A})},
\end{equation}
where the integral operator $\M(q)$ has the kernel $M(\xi, \eta; q)$ given by \eqref{eq:kernel_formula_contours} with
\begin{equation}
  f(\eta) = \left( \prod^n_{m = 1} \frac{a_m}{a_m - \eta} \right) \exp((q - 1) t \eta),
\end{equation}
and the contour $\Gamma_{0, \A}$ (denoted by $\tilde{C}_a$ in \cite{Borodin-Corwin-Sasamoto14}) is a star-shaped contour centered at $0$ and enclosing $\A = \{ a_1, \dotsc, a_n \}$. Hence suppose there exists a contour $\Gamma_{\A}$ that encloses $\A$ but not $0$, then by Lemma \ref{lem:contour_det}, we have the alternative moment generating function
\begin{equation}
  \E \left[ \frac{1}{(-z q^{x_n(t) + n}; q)_{\infty}} \right] = \det( I + \K(z; q))_{L^2(\Gamma_{\A})},
\end{equation}
where $\K(z; q)$ is the integral operator with kernel \eqref{eq:K_kernel_contour}. Analogous to \eqref{eq:alt_q-Whittaker}, we can write the kernel
\begin{equation}
  K(\xi, \eta; z; q) = \frac{1}{2\pi i} \int_{\delta-i\infty}^{\delta+i\infty} \frac{\pi z^s}{\sin(\pi s)} \frac{g(\eta)}{g(q^s\eta)} \frac{ds}{\xi -q^s\eta}, \quad g(\eta) = e^{-t \eta} \prod_{m=1}^n \frac{1}{(\eta/a_m; q)_\infty}.
\end{equation}
This reproduces the formula \cite[Theorem 3.12]{Borodin-Corwin-Sasamoto14}, again up to the exchange of the variables $\xi$ and $\eta$. 

\paragraph{$q$-TAZRP}

The $q$-deformed Totally Asymmetric Zero Range process ($q$-TAZRP) is a dual process to the $q$-TASEP, see \cite{Korhonen-Lee14}, \cite{Wang-Waugh16} and \cite{Lee-Wang17} for a detailed definition of the model and the duality. The $q$-TAZRP was originally defined in \cite{Sasamoto-Wadati98} with the name $q$-boson process.

Let the (inhomogeneous) $q$-TAZRP be defined as in \cite{Lee-Wang17}, with the conductance of the sites given by $a_k = (1 - q)^{-1}b_k$ ($k \in \intZ$), and assume that the particle number is $N$ and the initial condition is the so-called step initial condition that $x_1(0) = \dotsb = x_N(0) = 0$. Then the distribution of the leftmost particle $x_N$ at time $t > 0$ is by \cite[Proposition 2.1, Formulas (117) and (118)]{Lee-Wang17}
\begin{equation}
  \begin{split}
    & \Prob_{0^N}(x_N(t) > M) \\
    = {}& \frac{1}{(2\pi i)^N} \int_{\Gamma_{0, \A}} \frac{dw_1}{dw_1} \dotsi \int_{\Gamma_{0, \A}} \frac{dw_N}{dw_N} \prod_{1 \leq i < j \leq N} \frac{w_i - w_j}{qw_i - w_j} \prod^N_{j = 1} \left[ \prod^M_{k = 0} \left( \frac{b_k}{b_k - w_j} \right) e^{-w_j t} \right] \\
    = {}& \frac{[N]_q!}{N!}\frac{(q - 1)^N q^{-N(N - 1)/2}}{(2\pi i)^N} \int_{\Gamma_{0, \A}} \frac{dw_1}{dw_1} \dotsi \int_{\Gamma_{0, \A}} \frac{dw_N}{dw_N} \det \left( \frac{1}{qw_k - w_j} \right)^N_{j, k = 1} \\
    & \phantom{\frac{[N]_q!}{N!}\frac{(q - 1)^N q^{-N(N - 1)/2}}{(2\pi i)^N} \int_{\Gamma_{0, \A}} \frac{dw_1}{dw_1} \dotsi \int_{\Gamma_{0, \A}}} \times \prod^N_{j = 1} \left[ \prod^M_{k = 0} \left( \frac{b_k}{b_k - w_j} \right) e^{-w_j t} \right] \\
    = {}& [N]_q! \frac{(q - 1)^N}{q^{N(N - 1)/2}} \frac{1}{2\pi i} \oint_0 \det(1 + z\M(q)_{L^2(\Gamma_{0, \A})}) \frac{dz}{z^{N + 1}},
  \end{split}
\end{equation}
where $\M(q)$ is the integral operator on $L^2(\Gamma_{0, \A})$ with kernel $M(\xi, \eta; q)$ given in \eqref{eq:kernel_formula_contours} with the function $f$ specified as
\begin{equation} \label{eq:f_q_TAZRP}
  f(\xi) = \prod^M_{k = 0} \left( \frac{b_k}{b_k - \xi} \right) e^{-\xi t},
\end{equation}
and the contour $\Gamma_{0, \A}$ is the same as the contour $C$ in \cite[Proposition 2.1]{Lee-Wang17} that is a large enough circle. Then applying Lemma \ref{lem:contour_det}, we have that
\begin{equation}
  \Prob_{0^N}(x_N(t) > M) = [N]_q! \frac{(q - 1)^N}{q^{N(N - 1)/2}} \frac{1}{2\pi i} \oint_0 \frac{(-z; q)_{\infty}}{z^{N + 1}} \det(1 + \K(z;q))_{L^2(\Gamma_{\A})}\, dz,
\end{equation}
where $\K(z; q)$ is the integral operator on $L^2(\Gamma_{\A})$ with kernel
\begin{equation}
  K(\xi, \eta; z; q) = \frac{1}{2\pi i} \int_{\delta-i\infty}^{\delta+i\infty} \frac{\pi z^s}{\sin(\pi s)} \frac{g(\eta)}{g(q^s\eta)} \frac{ds}{\xi -q^s\eta}, \quad g(\eta) = e^{-\frac{t}{1 - q} \eta} \prod_{m=0}^M \frac{1}{(\eta/b_m; q)_\infty},
\end{equation}
and the contour $\Gamma_{\A}$ encloses $b_1, \dotsc, b_N$ counterclockwise and satisfies \eqref{eq:essential_condition_of_Gamma}, given that such $\Gamma_{\A}$ exists.

\paragraph{ASEP}

Asymmetric Simple Exclusion Process (ASEP) is another important interacting particle system in the KPZ universality class. In \cite{Tracy-Widom09a}, the distribution of the $m$-th rightmost particle $x_m$ is derived with the Bernoulli initial condition that all positive sites are initially occupied with probability $\rho$ and all non-positive sites are empty initially. As a special $\rho = 1$ case, the step initial condition $x_n(0) = n$ is considered in \cite{Tracy-Widom08a}. Let $\Gamma_{\A}$ be a small counterclockwise circle around $-1/\tau$, where $\tau = p/q$ such that $p$ and $q = 1 - p$ are the right jumping rate and left jumping rate respectively for a particle, and $\M(\tau)$ be an integral operator on $L^2(\Gamma_{\A})$ defined by the kernel $M(\xi, \eta; \tau)$ in the form of \eqref{eq:kernel_formula_contours}, with $q$ replaced by $\tau$ (since $q$ is reserved as the left hopping rate in ASEP), with
\begin{equation}
  f(\eta) = \left( \frac{1 + \eta}{1 + \eta/\tau} \right)^x e^{-\frac{1 - \tau}{1 + \tau} t \left( \frac{1}{1 + \eta/\tau} - \frac{1}{1 + \eta} \right)} \frac{1}{1 - \frac{\eta}{\theta \tau}}, \quad \text{where} \quad \theta = \frac{\rho}{1 - \rho}.
\end{equation}
Then we have
\begin{equation} \label{eq:ASEP_Gamma_A}
  \Prob(x_m(t) \leq x) = \frac{1}{2\pi i} \oint \frac{\det(I + z \M(\tau))_{L^2(\Gamma_{\A})}}{(-z; \tau)_m} \frac{dz}{z},
\end{equation}
where the contour is a large enough circle.
Formula \eqref{eq:ASEP_Gamma_A} is equivalent to \cite[Formula (1)]{Tracy-Widom08a} with the change of variables $\lambda \mapsto -z$. The equivalence of $\det(I + z\M(\tau))$ and $\det(I - \lambda q K)$ there can be seen by the change of variables $\xi \mapsto (1 + \eta)/(1 + \eta/\tau)$ and $\xi' \mapsto (1 + \xi)/(1 + \xi/\tau)$, where $\xi, \xi'$ are notations in the definition of $K$ in \cite[Section 2]{Tracy-Widom08a}. 

By Lemma \ref{lem:contour_det}, we have that if $\Gamma_{0, \A}$ is a contour enclosing $0$ and $-1/\tau$, then
\begin{equation}
  \Prob(x_m(t) \leq x) = \frac{1}{2\pi i} \oint (-z q^m; \tau)_{\infty} \det(I + \K(z; \tau))_{L^2(\Gamma_{0, \A})} \frac{dz}{z},
\end{equation}
where $\K(z; \tau)$ is the integral operator on $L^2(\Gamma_{0, \A})$ with kernel
\begin{equation}
  K(\xi, \eta; z; \tau) = \frac{1}{2\pi i} \int_{\delta-i\infty}^{\delta+i\infty} \frac{\pi z^s}{\sin(\pi s)} \frac{g(\eta)}{g(\tau^s\eta)} \frac{ds}{\xi -\tau^s\eta}, \quad g(\eta) = (1 + \eta/\tau)^{-x} e^{\frac{1 - \tau}{1 + \tau} \frac{t}{1 + \eta/\tau}} \frac{1}{(\eta/(\theta \tau), \tau)_{\infty}}.
\end{equation}
This formula is equivalent to \cite[Corollary 5.4]{Borodin-Corwin-Sasamoto14}.

\section{Multi-time correlation functions and gap probabilities} \label{sec:multi_time}

In this section, we prove Theorem \ref{thm:multi_formulas}. Our derivation of the multi-time correlation functions is based on \cite[Formulas (50), (51) and (52)]{Le_Doussal-Majumdar-Schehr17}, while our derivation of the multi-time gap probabilities is based on \cite[Formulas (60) and (61)]{Le_Doussal-Majumdar-Schehr17}.

Since the model of free fermions at finite temperature and that of time-periodic nonintersecting OU processes are equivalent, we prove Theorem \ref{thm:multi_formulas} for free fermions at finite temperature.

\subsection{Multi-time correlation functions}

Let $\tau_1, \tau_2, \dotsc, \tau_m \in [0, \beta)$ be imaginary times in Proposition \ref{prop:multi}. Since the free fermions at finite temperature are distributed over eigenstates with respect to the Boltzmann distribution as in \cite[Formula (78)]{Le_Doussal-Majumdar-Schehr17}, The multi-time $m$-correlation function at $x_i$ and time $\tau_i$ with $i = 1, \dotsc, m$ is expressed as
\begin{equation} \label{eq:m-corr_decomp}
  R^{(m)}_n(x_1, \dotsc, x_m; \tau_1, \dotsc, \tau_m) = \frac{q^{n/2}}{Z_n(q)} \sum_{0 \leq k_1 < k_2 < \dotsb < k_n} q^{k_1 + \dotsb + k_n} R^{(m)}_{k_1, \dotsc, k_n}(x_1, \dotsc, x_m; \tau_1, \dotsc, \tau_m),
\end{equation}
where $R^{(m)}_{k_1, \dotsc, k_n}(x_1, \dotsc, x_m; \tau_1, \dotsc, \tau_m)$ is the multi-time $m$-correlation function for the $n$-particle model with eigenstate $(k_1, \dotsc, k_n)$. By \cite[Formulas (50)--(52)]{Le_Doussal-Majumdar-Schehr17}, We have
\begin{equation}
  R^{(m)}_{k_1, \dotsc, k_n}(x_1, \dotsc, x_m; \tau_1, \dotsc, \tau_m) = \det \left( K_{k_1, \dotsc, k_n}(x_i, x_j; \tau_i, \tau_j) \right)^m_{i, j = 1},
\end{equation}
where
\begin{equation} \label{eq:multi_time_corr_kernel}
  K_{k_1, \dotsc, k_n}(x, y; \tau, \sigma) = \tilde{K}_{k_1, \dotsc, k_n}(x, y; \tau, \sigma) - E(x, y; \tau, \sigma),
\end{equation}
such that $E(x, y; \tau, \sigma)$ is defined in \eqref{eq:defn_E}, and
\begin{equation}
  \tilde{K}_{k_1, \dotsc, k_n}(x, y; \tau, \sigma) = \sum^n_{i = 1} \varphi_{k_i}(x) \varphi_{k_i}(y) e^{k_i(\tau - \sigma)}.
\end{equation}

Then 
it is straightforward to write (with $k_0 = -1$)
\begin{equation} \label{eq:expansion_R^(m)_x_i_tau_i}
  \begin{split}
     R^{(m)}_{k_1, \dotsc, k_n}(x_1, \dotsc, x_m; \tau_1, \dotsc, \tau_m) 
    = {}& \sum^n_{i_1 = 0} \dotsm \sum^n_{i_m = 0} \det \left( \hat{K}_{k_{i_l}}(x_j, x_l; \tau_j, \tau_l) \right)^m_{j, l = 1} \\
    = {}& \sum_{\substack{j_1 < j_2 < \dotsb < j_m \\ \{ j_1, \dotsc, j_m \} \subseteq \{ k_0 = -1, k_1, \dotsc, k_n \}}} \hat{R}^{(m)}_{j_1, \dotsc, j_m}(x_1, \dotsc, x_m; \tau_1, \dotsc, \tau_m),
  \end{split}
\end{equation}
where
\begin{equation}
  \hat{K}_k(x_j, x_l; \tau_j, \tau_l) =
  \begin{cases}
    \varphi_k(x_j) \varphi_k(x_l) e^{k(\tau_j - \tau_l)} & \text{if $k \geq 0$}, \\
    - E(x_j, x_l; \tau_j, \tau_l) & \text{if $k = -1$},
  \end{cases}
\end{equation}
and
\begin{equation}
  \begin{split}
    \hat{R}^{(m)}_{j_1, \dotsc, j_m}(x_1, \dotsc, x_m; \tau_1, \dotsc, \tau_m; n) = {}& \det \left( \sum^m_{i = 1} \hat{K}_{j_i}(x_k, x_l; \tau_k, \tau_l; n) \right)^m_{k, l = 1} \\
    = {}& \sum_{\kappa, \lambda \in S_n}  \sgn(\lambda) \prod^m_{i = 1} \hat{K}_{j_{\kappa(i)}}(x_i, x_{\lambda(i)}; \tau_i, \tau_{\lambda(i)}; n).
  \end{split}
\end{equation}
Hence 
we have
\begin{multline}
  R^{(m)}_n(x_1, \dotsc, x_m; \tau_1, \dotsc, \tau_m) = \frac{q^{n/2}}{Z_n(q)} \left( \sum_{0 \leq j_1 < \dotsb < j_m} C_{j_1, \dotsc, j_m} \hat{R}^{(m)}_{j_1, \dotsc, j_m}(x_1, \dotsc, x_m; \tau_1, \dotsc, \tau_m) \right. \\
  + \left. \sum_{0 \leq i_1 < \dotsb < i_{m - 1}} C_{i_1, \dotsc, i_{m - 1}} \hat{R}^{(m)}_{-1, i_1, \dotsc, i_{m - 1}}(x_1, \dotsc, x_m; \tau_1, \dotsc, \tau_m) \right),
\end{multline}
where $C_{j_1, \dotsc, j_m}$ and $C_{i_1, \dotsc, i_{m - 1}}$ are defined as
\begin{equation} \label{eq:defn_C_j_1--j_m}
 C_{j_1, \dotsc, j_m} = \sum_{\substack{0 \leq k_1 < \dotsb < k_n \\ \{ k_1, \dotsc, k_n \} \supseteq \{ j_1, \dotsc, j_m \}}} q^{k_1 + \dotsb + k_n}.
\end{equation}
Now we state the explicit formula of $C_{j_1, \dotsc, j_m}$ and postpone its proof to the end of this section:
\begin{equation} \label{eq:explicit_C_j_1_j_m}
 C_{j_1, \dotsc, j_m} = q^{j_1 + \dotsb + j_m} \frac{1}{2\pi i} \oint_0 \frac{dz}{z^{n - m + 1}} \left[ \prod^{\infty}_{k = 0} (1 + q^k z) \right] \left[ \prod^m_{i = 1} (1 + q^{j_i} z)^{-1} \right].
\end{equation}
We note that the contour in \eqref{eq:explicit_C_j_1_j_m} can be any one that encloses $0$ in positive orientation, for the integrand has only one pole at $0$.



in \eqref{eq:defn_C_j_1--j_m}. On the other hand, we have
\begin{multline} \label{eq:expansion_det_K(x,y;tau,sigma;z)}
  \det \left( K(x_i, x_j; \tau_i, \tau_j; z; q) \right)^m_{i, j = 1} \\
  = \sum_{0 \leq j_1 < \dotsb < j_m} q^{j_1 + \dotsb + j_m} z^m \left[ \prod^m_{l = 1} (1 + q^{j_l} z)^{-1} \right] \hat{R}^{(m)}_{j_1, \dotsc, j_m}(x_1, \dotsc, x_m; \tau_1, \dotsc, \tau_m) \\
  + \sum_{0 \leq i_1 < \dotsb < i_{m - 1}} q^{i_1 + \dotsb + i_{m - 1}} z^{m - 1} \left[ \prod^{m - 1}_{l = 1} (1 + q^{i_l} z)^{-1} \right] \hat{R}^{(m)}_{-1, i_1, \dotsc, i_{m - 1}}(x_1, \dotsc, x_m; \tau_1, \dotsc, \tau_m),
\end{multline}
where $K(x, y; \tau, \sigma; z; q)$ is defined in \eqref{eq:kernel_K(xytausigmazq)}. Hence analogous to \eqref{eq:formula_for_R^m}, we prove \eqref{eq:contour_integral_repr_m-corr} by comparing \eqref{eq:expansion_R^(m)_x_i_tau_i}, \eqref{eq:explicit_C_j_1_j_m} and \eqref{eq:expansion_det_K(x,y;tau,sigma;z)}. 

\begin{proof}[Proof of equation \eqref{eq:explicit_C_j_1_j_m}]
  In this proof, we use the notational convention that $j_0 = -1$ and $j_{m + 1} = \infty$.
  
  By definition,
  \begin{equation}
    C_{j_1, \dotsc, j_m} = q^{j_1 + \dotsb + j_m} \sum_{\substack{0 \leq l_0 \leq j_1,\ 0 \leq l_1 \leq j_2 - j_1 - 1, \dotsc, 0 \leq l_{m - 1} \leq j_m - j_{m - 1} - 1,\\ 0 \leq l_m = n - m - (l_0 + l_1 + \dotsb + l_{m - 1})}} \prod^m_{i = 0} g_i(l_i),
  \end{equation}
  where for $k = 0, 1, \dotsc, m$
  \begin{equation}
    g_k(l) = \sum_{j_k < i_1 < \dotsb < i_l < j_{k + 1}} q^{i_1 + \dotsb + i_l}.
  \end{equation}
  For $k = 0, 1, \dotsc, m$, letting
  \begin{equation}
    G_k(z) = \sum^{j_{k + 1} - j_k - 1}_{l = 0} g_k(l) z^l,
  \end{equation}
  we find
  \begin{equation} \label{eq:C_j_1_j_m_integral}
    C_{j_1, \dotsc, j_m} = q^{j_1 + \dotsb + j_m} \frac{1}{2\pi i} \oint_0 \frac{dz}{z^{n - m + 1}} \prod^m_{k = 0} G_k(z).
  \end{equation}
  Now we compute $G_k(z)$. Inductively, we compute that
  \begin{equation}
    g_m(l) = \frac{q^{\binom{l}{2}}}{(q; q)_l} q^{l(j_m + 1)}.
  \end{equation}
  and similarly we can obtain that for $k = 0, 1, \dotsc, m - 1$
  \begin{equation}
    g_k(l) = \qbinom{j_{k + 1} - j_k - 1}{l} q^{\binom{l}{2}} q^{l(j_k + 1)},
  \end{equation}
  with the understanding that $j_0 = -1$.
  Hence by \cite[Corollary 10.2.2(b)]{Andrews-Askey-Roy99} for the $k = m$ case, and \cite[Corollary 10.2.2(c)]{Andrews-Askey-Roy99} for the $k = 0, 1, \dotsc, m - 1$ cases, we have
  \begin{align} \label{eq:expr_G_k}
    G_k(z) = (-q^{j_k + 1}z; q)_{j_{k + 1} - j_k - 1} = \prod^{j_{k + 1} - 1}_{l = j_k + 1} (1 + q^l z).
  \end{align}
  Hence
  \begin{equation} \label{eq:prodct_G_k}
    \prod^m_{k = 0} G_k(z) = \left[ \prod^{\infty}_{l = 0} (1 + q^l z) \right] \left[ \prod^m_{i = 1} (1 + q^{j_i} z)^{-1} \right],
  \end{equation}
  and we prove \eqref{eq:explicit_C_j_1_j_m} by plugging \eqref{eq:prodct_G_k} into \eqref{eq:C_j_1_j_m_integral}.
\end{proof}

\subsection{Multi-time gap probability}

Next we assume the times $\tau_1, \dotsc, \tau_m$ are distinct, and compute the gap probability for free fermions at finite temperature such that at times $\tau_1, \dotsc, \tau_m$, all particles are in the measurable sets $A_1, \dotsc, A_m$ respectively. We denote this probability by $\Prob_n(A_1, \dotsc, A_m; \tau_1, \dotsc, \tau_m)$. According to the Boltzmann distribution of eigenstates, we have that \cite[Formula (78)]{Le_Doussal-Majumdar-Schehr17}
\begin{equation} \label{eq:multi-time_gap_prob_decomp}
  \Prob_n(A_1, \dotsc, A_m; \tau_1, \dotsc, \tau_m) = \frac{q^{n/2}}{Z_n(q)} \sum_{0 \leq k_1 < k_2 < \dotsb k_n} q^{k_1 + \dotsb + k_n} \Prob_{k_1, \dotsc, k_n}(A_1, \dotsc, A_m; \tau_1, \dotsc, \tau_m),
\end{equation}
where $\Prob_{k_1, \dotsc, k_n}(A_1, \dotsc, A_m; \tau_1, \dotsc, \tau_m)$ is the gap probability that all particles ate in $A_1, \dotsc, A_m$ at times $\tau_1, \dotsc, \tau_m$ respectively for the determinantal point process characterized by the multi-time correlation kernel $K_{k_1, \dotsc, k_n}(x, y; \tau, \sigma)$ in \eqref{eq:multi_time_corr_kernel}. By \cite[Formulas (60) and (61)]{Le_Doussal-Majumdar-Schehr17}, we have
\begin{equation} \label{eq:multi-time_comp_Fred_det}
  \Prob_{k_1, \dotsc, k_n}(A_1, \dotsc, A_m; \tau_1, \dotsc, \tau_m) = \det(I - \K_{k_1, \dotsc, k_n}(\tau_1, \dotsc, \tau_m) \chi_{A^c_1, \dotsc, A^c_m}),
\end{equation}
where $\K_{k_1, \dotsc, k_n}(\tau_1, \dotsc, \tau_m)$ is, analogous to $\K(\tau_1, \dotsc, \tau_m; z; q)$ in \eqref{eq:defn_final_K_op}, an integral operator on $L^2(\realR \times \{ 1, 2, \dotsc, m \})$ whose kernel is represented by an $m \times m$ matrix $(K_{k_1, \dotsc, k_n}(x_i, x_j; \tau_i, \tau_j))^m_{i, j = 1}$, and
\begin{equation}
  (\K_{k_1, \dotsc, k_n}(\tau_1, \dotsc, \tau_m)f)(x; k) = \sum^m_{j = 1} \int_{\realR} K_{k_1, \dotsc, k_n}(x, y; \tau_k, \tau_j) f(y; j) dy.
\end{equation}

By \cite[Formula (3.5)]{Simon05}, we have the expansion 
\begin{equation} \label{eq:formal_expan_Fred_det}
  \det(I - \K_{k_1, \dotsc, k_n}(\tau_1, \dotsc, \tau_m) \chi_{A^c_1, \dotsc, A^c_m}) = 1 + \sum^{\infty}_{l = 1} \frac{(-1)^l}{l!} \Tr \left[ \Lambda^l \left( \K_{k_1, \dotsc, k_n}(\tau_1, \dotsc, \tau_m) \chi_{A^c_1, \dotsc, A^c_m} \right) \right],
\end{equation}
where the trace of the $l$-th exterior power of $\K_{k_1, \dotsc, k_n}(\tau_1, \dotsc, \tau_m) \chi_{A^c_1, \dotsc, A^c_m}$ is computed as
\begin{equation} \label{eq:computation_trace_Lambda_multi-time}
  \begin{split}
    & \Tr \left[ \Lambda^l \left( \K_{k_1, \dotsc, k_n}(\tau_1, \dotsc, \tau_m) \chi_{A^c_1, \dotsc, A^c_m} \right) \right] \\
    = {}& \sum^m_{s_1 = 1} \sum^m_{s_2 = 1} \dotsi \sum^m_{s_l = 1} \int_{A^c_{s_1}} dx_1 \int_{A^c_{s_2}} dx_2 \dotsi \int_{A^c_{s_l}} dx_l \det(K_{k_1, \dotsc, k_n}(x_i, x_j; \tau_{s_i}, \tau_{s_j}))^l_{i, j = 1} \\
    = {}& \sum^m_{s_1 = 1} \sum^m_{s_2 = 1} \dotsi \sum^m_{s_l = 1} \int_{A^c_{s_1}} dx_1 \int_{A^c_{s_2}} dx_2 \dotsi \int_{A^c_{s_l}} dx_l R^{(m)}_{k_1, \dotsc, k_n}(x_1, \dotsc, x_l; \tau_{s_1}, \dotsc, \tau_{s_l}).
  \end{split}
\end{equation}
The proof of \eqref{eq:computation_trace_Lambda_multi-time} is analogous to that of \cite[Theorem 3.10]{Simon05}.

By \eqref{eq:multi-time_gap_prob_decomp}, \eqref{eq:multi-time_comp_Fred_det}, \eqref{eq:formal_expan_Fred_det} and \eqref{eq:computation_trace_Lambda_multi-time}, we have
\begin{multline}
  \Prob_n(A_1, \dotsc, A_m; \tau_1, \dotsc, \tau_m) = \frac{q^{n/2}}{Z_n(q)} \sum_{0 \leq k_1 < k_2 < \dotsb k_n} q^{k_1 + \dotsb + k_n} \cdot 1 \\
  + \sum^{\infty}_{l = 1} \frac{(-1)^l}{l!} \sum^m_{s_1 = 1} \dotsi \sum^m_{s_l = 1} \int_{A^c_{s_1}} dx_1 \dotsi \int_{A^c_{s_l}} dx_l \\
  \frac{q^{n/2}}{Z_n(q)} \sum_{0 \leq k_1 < k_2 < \dotsb k_n} q^{k_1 + \dotsb + k_n} R^{(m)}_{k_1, \dotsc, k_n}(x_1, \dotsc, x_l; \tau_{s_1}, \dotsc, \tau_{s_l}).
\end{multline}
By \eqref{eq:explicit_Z_n(q)}, \eqref{eq:m-corr_decomp} and \eqref{eq:contour_integral_repr_m-corr}, we have
\begin{equation}
  \begin{split}
    & \Prob_n(A_1, \dotsc, A_m; \tau_1, \dotsc, \tau_m) \\
    = {}& 1 + \sum^{\infty}_{l = 1} \frac{(-1)^l}{l!} \sum^m_{s_1 = 1} \dotsi \sum^m_{s_l = 1} \int_{A^c_{s_1}} dx_1 \dotsi \int_{A^c_{s_l}} dx_l R^{(m)}(x_1, \dotsc, x_m; \tau_1, \dotsc, \tau_m) \\
    = {}& 1 + \sum^{\infty}_{l = 1} \frac{(-1)^l}{l!} \sum^m_{s_1 = 1} \dotsi \sum^m_{s_l = 1} \int_{A^c_{s_1}} dx_1 \dotsi \int_{A^c_{s_l}} dx_l \frac{1}{2\pi i} \oint_0 F(z) \det(K(x_i, x_j; \tau_{s_i}, \tau_{s_j}; z; q))^l_{i, j = 1} \frac{dz}{z} \\
    = {}& \frac{1}{2\pi i} \oint_0 F(z) \left[ 1 + \sum^{\infty}_{l = 1} \frac{(-1)^l}{l!} \sum^m_{s_1 = 1} \dotsi \sum^m_{s_l = 1} \int_{A^c_{s_1}} dx_1 \dotsi \int_{A^c_{s_l}} dx_l \det(K(x_i, x_j; \tau_{s_i}, \tau_{s_j}; z; q))^l_{i, j = 1} \right] \frac{dz}{z} \\
    = {}& \frac{1}{2\pi i} \oint_0 F(z) \det(I - \K(\tau_1, \dotsc, \tau_m; z; q) \chi_{A^c_1, \dotsc, A^c_m}) \frac{dz}{z},
  \end{split}
\end{equation}
and prove Theorem \ref{thm:multi_formulas}\ref{enu:thm:multi_formulas_b}.

\appendix
  
\section{Equivalence to MNS random matrix model} \label{sec:relation_to_MNS_RM}

With the help of the two integral representations of Hermite polynomials (\cite[22.10.9 and 22.10.15]{Abramowitz-Stegun64}):
\begin{align}
 H_k(x) = {}& \frac{1}{\sqrt{2\pi} i} \int^{i\infty}_{-i\infty} e^{\frac{1}{2} (s - x)^2} s^k ds, \label{eq:contour_integral_H_k_1st} \\
 \frac{1}{k!} H_k(x) e^{-x^2/2} = {}& \frac{1}{2\pi i} \oint_{\Gamma} e^{-\frac{1}{2} (t - x)^2} t^{-k} \frac{dt}{t}, \label{eq:contour_integral_H_k_2nd}
\end{align}
where $\Gamma$ is a contour around $0$ with positive orientation, the density function $P_n(x_1, \dotsc, x_n)$ in \eqref{eq:density_n_particle} is expressed as
\begin{equation} \label{eq:alt_density_n_particle}
  \begin{split}
    P_n(x_1, \dotsc, x_n) = {}& \frac{q^{n/2}}{n! Z_n(q)} \sum^{\infty}_{k_1, \dotsc, k_n = 0}
    \begin{vmatrix}
      \varphi_{k_1}(x_1) & \dots & \varphi_{k_1}(x_n) \\
      \vdots & & \vdots \\
      \varphi_{k_n}(x_1) & \dots & \varphi_{k_n}(x_n) \\
    \end{vmatrix}^2
    q^{k_1 + \dotsb + k_n} \\
    = {}& \frac{q^{n/2}}{n! Z_n(q)} \sum^{\infty}_{k_1, \dotsc, k_n = 0} q^{k_1 + \dotsb + k_n}
    \begin{vmatrix}
      H_{k_1}(x_1) & \dots & H_{k_1}(x_n) \\
      \vdots & & \vdots \\
      H_{k_n}(x_1) & \dots & H_{k_n}(x_n)
    \end{vmatrix} \\
    & \times
    \begin{vmatrix}
      \frac{1}{\sqrt{2\pi} k_1 !} H_{k_1}(x_1) e^{-x^2_1/2} & \dots & \frac{1}{\sqrt{2\pi} k_1 !} H_{k_1}(x_n) e^{-x^2_n/2} \\
      \vdots & & \vdots \\
      \frac{1}{\sqrt{2\pi} k_n !} H_{k_n}(x_1) e^{-x^2_1/2} & \dots & \frac{1}{\sqrt{2\pi} k_n !} H_{k_n}(x_n) e^{-x^2_n/2}
    \end{vmatrix} \\
    = {}& \frac{q^{n/2}}{n! Z_n(q)} \sum^{\infty}_{k_1, \dotsc, k_n = 0} \frac{1}{(\sqrt{2\pi} i)^n} \int^{i\infty}_{-i\infty} ds_1 \dotsi \int^{i\infty}_{-i\infty} ds_n \prod^n_{j = 1} e^{\frac{1}{2} (s_j - x_j)^2} \det \left( (qs)^{k_l}_j \right)^n_{j, l = 1} \\
    & \times \frac{1}{((2\pi)^{3/2} i)^n} \oint_{\Gamma} \frac{dt_1}{t_1} \dotsi \oint_{\Gamma} \frac{dt_n}{t_n} \prod^n_{j = 1} e^{-\frac{1}{2} (t_j - x_j)^2} \det \left( t^{-k_l}_j \right)^n_{j, l = 1} \\
    = {}& \frac{q^{n/2}}{Z_n(q)} \frac{1}{(2\pi i)^{2n}} \int^{i\infty}_{-i\infty} ds_1 \dotsi \int^{i\infty}_{-i\infty} ds_n \oint_{\Gamma} \frac{dt_1}{t_1} \dotsi \oint_{\Gamma} \frac{dt_n}{t_n} \prod^n_{j = 1} \frac{e^{\frac{1}{2} (s_j - x_j)^2}}{e^{\frac{1}{2} (t_j - x_j)^2}} \\
    & \times \sum^{\infty}_{k_1, \dotsc, k_n = 0} \sum_{\sigma \in S_n} \sgn(\sigma) \left( \frac{qs_{\sigma(j)}}{t_j} \right)^{k_j},
  \end{split}
\end{equation}
where in the first step we symmetrize the indices $k_1, \dotsc, k_n$, and in the last step we use the symmetry among $k_1, \dotsc, k_n$. Under the assumption that $\lvert t_j \rvert > q \lvert s_k \rvert$ for all $j, k$, We have
\begin{equation} \label{eq:summation_power_func}
  \begin{split}
    \sum^{\infty}_{k_1, \dotsc, k_n = 0} \sum_{\sigma \in S_n} \sgn(\sigma) \left( \frac{qs_{\sigma(j)}}{t_j} \right)^{k_j} = {}& \sum_{\sigma \in S_n} \sgn(\sigma) \prod^n_{j = 1} \frac{1}{1 - qs_{\sigma(j)}/t_j} \\
    = {}& \prod^n_{j = 1} t_j \det \left( \frac{1}{t_j - q s_l} \right)^n_{j, l = 1}.
  \end{split}
\end{equation}
Hence we deform the contour $\Gamma$ in \eqref{eq:alt_density_n_particle} into $\Gamma_s = \{ z \in \compC \mid \lvert z \rvert = \max(\lvert s_k \rvert)$ that depends on $s_1, \dotsc, s_n$, and plug \eqref{eq:summation_power_func} into \eqref{eq:alt_density_n_particle}. Using the residue theorem, we have
\begin{equation} \label{eq:compact_P_n}
  \begin{split}
    & P_n(x_1, \dotsc, x_n) \\
    = {}& \frac{q^{n/2}}{Z_n(q)} \frac{1}{(2\pi i)^{2n}} \int^{i\infty}_{-i\infty} ds_1 \dotsi \int^{i\infty}_{-i\infty} ds_n \oint_{\Gamma_s} dt_1 \dotsi \oint_{\Gamma_s} dt_n \prod^n_{j = 1} \frac{e^{\frac{1}{2} (s_j - x_j)^2}}{e^{\frac{1}{2} (t_j - x_j)^2}} \det \left( \frac{1}{t_j - q s_l} \right)^n_{j, l = 1} \\
    = {}& \frac{q^{n/2}}{Z_n(q)} \frac{1}{(2\pi i)^n} \int^{i\infty}_{-i\infty} ds_1 \dotsi \int^{i\infty}_{-i\infty} ds_n \prod^n_{j = 1} e^{\frac{1}{2} (s_j - x_j)^2} \sum_{\sigma \in S_n} \sgn(\sigma) \prod^n_{j = 1} e^{-\frac{1}{2} (qs_j - x_{\sigma(j)})^2} \\
    = {}& \frac{q^{n/2}}{Z_n(q)} \sum_{\sigma \in S_n} \sgn(\sigma) \frac{1}{(2\pi i)^n} \int^{i\infty}_{-i\infty} ds_1 \dotsi \int^{i\infty}_{-i\infty} ds_n \prod^n_{j = 1} \exp \left( \frac{1}{2}[(s_j - x_j)^2 - (qs_j - x_{\sigma(j)})^2] \right) \\
    = {}& \frac{q^{n/2}}{Z_n(q)} (2\pi(1 - q^2))^{-n/2} \sum_{\sigma \in S_n} \sgn(\sigma) e^{-\frac{1}{2(1 - q^2)} (x_{\sigma(j)} - qx_j)^2} \\
    = {}& \frac{q^{n/2}}{Z_n(q)} (2\pi(1 - q^2))^{-n/2} \prod^n_{j = 1} e^{-\frac{1}{2} \frac{1 + q^2}{1 - q^2} x^2_j} \sum_{\sigma \in S_n} \sgn(\sigma) e^{\frac{q}{1 - q^2} x_{\sigma(j)} x_j} \\
    = {}& \frac{q^{n/2}}{Z_n(q)} (2\pi(1 - q^2))^{-n/2} \prod^n_{j = 1} e^{-\frac{1}{2} \frac{1 + q^2}{1 - q^2} x^2_j} \det \left( e^{\frac{q}{1 - q^2} x_j x_k} \right)^n_{j, k = 1}.
  \end{split}
\end{equation}
Comparing the right-hand side of \eqref{eq:compact_P_n} with \cite[Formula (3)]{Moshe-Neuberger-Shapiro94}, we prove Proposition \ref{prop:Moshe-Neuberger-Shapiro94}.


\begin{thebibliography}{10}

\bibitem{Abramowitz-Stegun64}
Milton Abramowitz and Irene~A. Stegun.
\newblock {\em Handbook of mathematical functions with formulas, graphs, and
  mathematical tables}, volume~55 of {\em National Bureau of Standards Applied
  Mathematics Series}.
\newblock For sale by the Superintendent of Documents, U.S. Government Printing
  Office, Washington, D.C., 1964.

\bibitem{Amir-Corwin-Quastel11}
Gideon Amir, Ivan Corwin, and Jeremy Quastel.
\newblock Probability distribution of the free energy of the continuum directed
  random polymer in {$1+1$} dimensions.
\newblock {\em Comm. Pure Appl. Math.}, 64(4):466--537, 2011.

\bibitem{Anderson-Guionnet-Zeitouni10}
Greg~W. Anderson, Alice Guionnet, and Ofer Zeitouni.
\newblock {\em An introduction to random matrices}, volume 118 of {\em
  Cambridge Studies in Advanced Mathematics}.
\newblock Cambridge University Press, Cambridge, 2010.

\bibitem{Andrews-Askey-Roy99}
George~E. Andrews, Richard Askey, and Ranjan Roy.
\newblock {\em Special functions}, volume~71 of {\em Encyclopedia of
  Mathematics and its Applications}.
\newblock Cambridge University Press, Cambridge, 1999.

\bibitem{Barraquand15}
Guillaume Barraquand.
\newblock A phase transition for {$q$}-{TASEP} with a few slower particles.
\newblock {\em Stochastic Process. Appl.}, 125(7):2674--2699, 2015.

\bibitem{Borodin-Corwin13}
Alexei Borodin and Ivan Corwin.
\newblock Macdonald processes.
\newblock {\em Probab. Theory Related Fields}, 158(1-2):225--400, 2014.

\bibitem{Borodin-Corwin-Sasamoto14}
Alexei Borodin, Ivan Corwin, and Tomohiro Sasamoto.
\newblock From duality to determinants for {$q$}-{TASEP} and {ASEP}.
\newblock {\em Ann. Probab.}, 42(6):2314--2382, 2014.

\bibitem{Boulatov-Kazakov92}
Dmitri Boulatov and Vladimir Kazakov.
\newblock Vortex-antivortex sector of one-dimensional string theory via the
  upside-down matrix oscillator.
\newblock {\em Nuclear Phys. B Proc. Suppl.}, 25A:38--53, 1992.
\newblock Random surfaces and $2$D quantum gravity (Barcelona, 1991).

\bibitem{Corwin11}
Ivan Corwin.
\newblock The {K}ardar-{P}arisi-{Z}hang equation and universality class.
\newblock {\em Random Matrices Theory Appl.}, 1(1):1130001, 76, 2012.

\bibitem{Cunden-Mezzadri-OConnell17}
Fabio~Deelan Cunden, Francesco Mezzadri, and Neil O'Connell.
\newblock Free {F}ermions and the {C}lassical {C}ompact {G}roups.
\newblock {\em J. Stat. Phys.}, 171(5):768--801, 2018.

\bibitem{Dean-Le_Doussal-Majumdar-Schehr15}
David~S. Dean, Pierre Le~Doussal, Satya~N. Majumdar, and Gr\'egory Schehr.
\newblock Finite-temperature free fermions and the {K}ardar-{P}arisi-{Z}hang
  equation at finite time.
\newblock {\em Phys. Rev. Lett.}, 114:110402, Mar 2015.

\bibitem{Dean-Le_Doussal-Majumdar-Schehr16}
David~S. Dean, Pierre Le~Doussal, Satya~N. Majumdar, and Gr\'egory Schehr.
\newblock Noninteracting fermions at finite temperature in a $d$-dimensional
  trap: {U}niversal correlations.
\newblock {\em Phys. Rev. A}, 94:063622, Dec 2016.

\bibitem{Le_Doussal-Majumdar-Schehr17}
Pierre~Le Doussal, Satya~N. Majumdar, and Gr\'egory Schehr.
\newblock Periodic {A}iry process and equilibrium dynamics of edge fermions in
  a trap.
\newblock {\em Ann. Physics}, 383:312--345, 2017.

\bibitem{Ferrari-Veto15}
Patrik~L. Ferrari and B\'alint Vet\H{o}.
\newblock Tracy-{W}idom asymptotics for {$q$}-{TASEP}.
\newblock {\em Ann. Inst. Henri Poincar\'e Probab. Stat.}, 51(4):1465--1485,
  2015.

\bibitem{Fetter-Walecka12}
Alexander~L. Fetter and John~Dirk Walecka.
\newblock {\em Quantum theory of many-particle systems}.
\newblock Courier Corporation, 2012.

\bibitem{Imamura-Sasamoto15}
Takashi Imamura and Tomohiro Sasamoto.
\newblock Determinantal structures in the {O}'{C}onnell-{Y}or directed random
  polymer model.
\newblock {\em J. Stat. Phys.}, 163(4):675--713, 2016.

\bibitem{Imamura-Sasamoto17}
Takashi Imamura and Tomohiro Sasamoto.
\newblock Fluctuations for stationary $q$-{TASEP}, 2017.
\newblock arXiv:1701.05991.

\bibitem{Johansson07}
K.~Johansson.
\newblock From {G}umbel to {T}racy-{W}idom.
\newblock {\em Probab. Theory Related Fields}, 138(1-2):75--112, 2007.

\bibitem{Johansson-Lambert15}
Kurt Johansson and Gaultier Lambert.
\newblock Gaussian and non-{G}aussian fluctuations for mesoscopic linear
  statistics in determinantal processes.
\newblock {\em Ann. Probab.}, 46(3):1201--1278, 2018.

\bibitem{Korhonen-Lee14}
Marko Korhonen and Eunghyun Lee.
\newblock The transition probability and the probability for the left-most
  particle's position of the {$q$}-totally asymmetric zero range process.
\newblock {\em J. Math. Phys.}, 55(1):013301, 15, 2014.

\bibitem{Le_Doussal-Majumdar-Rosso-Schehr16}
Pierre Le~Doussal, Satya~N. Majumdar, Alberto Rosso, and Gr\'egory Schehr.
\newblock Exact short-time height distribution in the one-dimensional
  {K}ardar-{P}arisi-{Z}hang equation and edge fermions at high temperature.
\newblock {\em Phys. Rev. Lett.}, 117:070403, Aug 2016.

\bibitem{Lee-Wang17}
Eunghyun Lee and Dong Wang.
\newblock Distributions of a particle's position and their asymptotics in the $
  q $-deformed totally asymmetric zero range process with site dependent
  jumping rates, 2017.
\newblock arXiv:1703.08839.

\bibitem{Liechty-Wang14-2}
Karl Liechty and Dong Wang.
\newblock Nonintersecting {B}rownian motions on the unit circle.
\newblock {\em Ann. Probab.}, 44(2):1134--1211, 2016.

\bibitem{Moshe-Neuberger-Shapiro94}
Moshe Moshe, Herbert Neuberger, and Boris Shapiro.
\newblock Generalized ensemble of random matrices.
\newblock {\em Phys. Rev. Lett.}, 73(11):1497--1500, 1994.

\bibitem{Olver97}
Frank W.~J. Olver.
\newblock {\em Asymptotics and special functions}.
\newblock AKP Classics. A K Peters, Ltd., Wellesley, MA, 1997.
\newblock Reprint of the 1974 original [Academic Press, New York; MR0435697 (55
  \#8655)].

\bibitem{Boisvert-Clark-Lozier-Olver10}
Frank W.~J. Olver, Daniel~W. Lozier, Ronald~F. Boisvert, and Charles~W. Clark,
  editors.
\newblock {\em N{IST} handbook of mathematical functions}.
\newblock U.S. Department of Commerce, National Institute of Standards and
  Technology, Washington, DC; Cambridge University Press, Cambridge, 2010.
\newblock With 1 CD-ROM (Windows, Macintosh and UNIX).

\bibitem{Beale-Pathria11}
R.~K. Pathria and Paul~D. Beale.
\newblock {\em Statistical mechanics}.
\newblock Elsevier/Academic Press, Amsterdam, third edition, 2011.

\bibitem{Quastel12}
Jeremy Quastel.
\newblock Introduction to {KPZ}.
\newblock In {\em Current developments in mathematics, 2011}, pages 125--194.
  Int. Press, Somerville, MA, 2012.

\bibitem{Sasamoto-Wadati98}
Tomohiro Sasamoto and Miki Wadati.
\newblock Exact results for one-dimensional totally asymmetric diffusion
  models.
\newblock {\em J. Phys. A}, 31(28):6057--6071, 1998.

\bibitem{Simon05}
Barry Simon.
\newblock {\em Trace ideals and their applications}, volume 120 of {\em
  Mathematical Surveys and Monographs}.
\newblock American Mathematical Society, Providence, RI, second edition, 2005.

\bibitem{Szego75}
G{\'a}bor Szeg{\H{o}}.
\newblock {\em Orthogonal polynomials}.
\newblock American Mathematical Society, Providence, R.I., fourth edition,
  1975.
\newblock American Mathematical Society, Colloquium Publications, Vol. XXIII.

\bibitem{Tracy-Widom08a}
Craig~A. Tracy and Harold Widom.
\newblock A {F}redholm determinant representation in {ASEP}.
\newblock {\em J. Stat. Phys.}, 132(2):291--300, 2008.

\bibitem{Tracy-Widom09a}
Craig~A. Tracy and Harold Widom.
\newblock On {ASEP} with step {B}ernoulli initial condition.
\newblock {\em J. Stat. Phys.}, 137(5-6):825--838, 2009.

\bibitem{Wang-Waugh16}
Dong Wang and David Waugh.
\newblock The transition probability of the {$q$}-{TAZRP} ({$q$}-bosons) with
  inhomogeneous jump rates.
\newblock {\em SIGMA Symmetry Integrability Geom. Methods Appl.}, 12:Paper No.
  037, 16, 2016.

\bibitem{Wood92}
David Wood.
\newblock The computation of polylogarithms.
\newblock Technical Report 15-92*, University of Kent, Computing Laboratory,
  University of Kent, Canterbury, UK, June 1992.

\end{thebibliography}

\end{document}